\documentclass[11pt,a4paper]{article}
\usepackage{fullpage}
\usepackage{amsmath,amssymb,amsthm}
\usepackage{graphicx,hyperref,xspace}
\usepackage[font=small]{caption}

\renewcommand{\th}{%
    \ifmmode
        ^\mathrm{th}%
    \else%
        \textsuperscript{th}\xspace%
    \fi%
}
\newcommand{\st}{%
    \ifmmode
        ^\mathrm{st}%
    \else%
        \textsuperscript{st}\xspace%
    \fi%
}
\newcommand{\rd}{%
    \ifmmode
        ^\mathrm{rd}%
    \else%
        \textsuperscript{rd}\xspace%
    \fi%
}
\newcommand{\nd}{%
    \ifmmode
        ^\mathrm{nd}%
    \else%
        \textsuperscript{nd}\xspace%
    \fi%
}

\newcommand{\gone}{\sqrt{\scalebox{.8}{$\frac{\pi}{2}$}}}
\newcommand{\epslin}{\epsilon_{\scalebox{.55}{$L$}}}
\newcommand{\Amap}{{\sf A}}
\newcommand{\Amapd}{{\sf A}}
\newcommand{\Demb}{\cl D_{\cl E}}
\newcommand{\pE}{{p_{\scalebox{.8}{$\scriptscriptstyle \cl E$}}}}
\newcommand{\Id}{\bs I}
\newcommand{\scp}[2]{\langle #1,\, #2 \rangle}
\newcommand{\compl}{{\tt c}}
\newcommand{\transp}{{^{_{\!\rm T}}}}
\newcommand{\Rbb}{\mathbb{R}}

\newtheorem{definition}{Definition}
\newtheorem{proposition}{Proposition}
\newtheorem{corollary}{Corollary}
\newtheorem{lemma}{Lemma}

\newcommand{\supp}{{\rm supp}\,}
\newcommand{\tinv}[1]{{\textstyle\frac{1}{#1}}}
\newcommand{\sign}{{\rm sign}\,}

\newcommand{\ud}{\mathrm{d}} 
\renewcommand{\leq}{\leqslant}
\renewcommand{\geq}{\geqslant}

\DeclareMathOperator{\rank}{rank}

\DeclareMathOperator{\tr}{tr}
\DeclareMathOperator{\Idf}{Id}


\newcommand{\cc}{\centering}
\newcommand{\bs}{\boldsymbol}
\newcommand{\bb}{\mathbb}
\newcommand{\cl}{\mathcal}

\newcommand{\ts}{\textstyle}
\newcommand{\ie}{\emph{i.e.}, }
\newcommand{\eg}{\emph{e.g.}, }

\newcommand{\iid}{%
    \ifmmode
        \mathrm{iid}%
    \else%
        iid\xspace%
    \fi%
}
\newcommand{\rv}{\mbox{r.v.}\xspace}
\newcommand{\whp}{\mbox{w.h.p.}\xspace}

\title{
Time for dithering: fast and quantized random embeddings\\ 
via the restricted isometry property}
\date{\today}

\author{
Laurent Jacques$^{+,*}$ and Valerio Cambareri\thanks{LJ and VC are with Image and Signal Processing Group
  (ISPGroup), ICTEAM/ELEN, Universit\'e catholique de Louvain (UCL).  E-mail:
\url{laurent.jacques@uclouvain.be},
\url{valerio.cambareri@uclouvain.be}. The  authors are funded by the
Belgian F.R.S.-FNRS. Part of this study  is funded by the project
\textsc{AlterSense} (MIS-FNRS). $^{+}$Corresponding author.}}

\begin{document}

\maketitle

\begin{abstract}
Recently, many works have focused on the characterization of
non-linear dimensionality reduction methods obtained by quantizing linear
embeddings, \eg to reach fast processing time, efficient 
data compression procedures, novel geometry-preserving embeddings or to estimate the information/bits
stored in this reduced data representation.
In this work, we prove that many linear maps known to respect
the restricted isometry property (RIP) can induce a quantized random embedding with controllable
multiplicative and additive distortions with respect to the pairwise
distances of the data points beings considered. In other words, linear matrices
having fast matrix-vector multiplication algorithms (\eg based on partial
Fourier ensembles or on the adjacency matrix of unbalanced
expanders) can be readily used in the definition of fast
quantized embeddings with small distortions.    
This implication is made possible by applying right after the linear map an additive and random
\emph{dither} that stabilizes the impact of the uniform scalar
quantization operator applied afterwards. 

For different categories of RIP matrices, \ie for different linear embeddings
of a metric space $(\cl K \subset \bb R^n, \ell_q)$ in $(\bb R^m,
\ell_p)$ with $p,q \geq 1$, we derive upper bounds on the additive
distortion induced by quantization, showing that it decays either when the embedding
dimension $m$ increases or when the distance of a pair of embedded
vectors in $\cl K$ decreases.  
Finally, we develop a novel \emph{bi-dithered}
quantization scheme, which allows for a reduced distortion that decreases when the embedding
dimension grows and independently of the considered pair of vectors. 
\end{abstract}

\section{Introduction}
\label{sec:introduction}

Since the advent of the Johnson and Lindenstraus lemma in 1984 and its
numerous extensions~\cite{johnson1984extensions,dasgupta1999elementary}, non-adaptive dimensionality
reduction techniques for high dimensional data obtained through
random constructions are
now ubiquitous in numerous fields ranging from data mining and
database management~\cite{datar2004locality} to machine
learning algorithms~\cite{weinberger2009feature}, numerical linear
algebra~\cite{MAL-048,sarlos2006improved}, signal processing and
compressive sensing (CS)~\cite{candes2005decoding,baraniuk2008simple,davenport2010signal}.   

In the quest for such efficient techniques, a sub-field of
research has grown to study non-linear embeddings
obtained by quantizing the output of random linear mappings, \ie by coding it over a finite or countable set of values. This is
particularly important for reducing the number of bits required to
store the image of such maps. 
For instance, combinations of random linear maps with 1-bit sign operators~\cite{boufounos20081,jacques2013robust,plan2014dimension}, universal quantization
\cite{boufounos2012universal}, dithered uniform quantization
\cite{jacques2015quantized,jacques2014error} or local sketching
methods~\cite{andoni2006near}
were proved to be either as efficient as
unquantized linear embeddings, or to allow for better encoding of the
signal set geometry~\cite{rahimi2007random,boufounos2015representation}. 

In this context, a difficult question is the design of
non-linear maps with controllable distortion, fast encoding complexity (\eg log-linear in the signal dimension) and providing embeddings of \emph{large}, possibly continuous sets in $\bb R^n$
such as the set of sparse vectors, the one of
low-rank matrices or more general
\emph{low-complexity} vector sets. To the best of our knowledge, a few fast
non-linear maps already exist in the case of 1-bit quantization (\eg
built on a fast circulant linear operator), but these are currently restricted to
the embedding of finite sets~\cite{oymak2016near,yu2015binary,yu2014circulant}. 

A priori, the lack of quantized random maps with fast
encoding schemes and provable embedding properties of general sets is rather
striking. There is indeed a large availability of fast linear random
embeddings, \eg as those developed since 2004 in the literature of CS to satisfy the celebrated Restricted Isometry Property
(RIP) that precisely controls these embeddings' distortion (see, \eg~\cite{foucart2013mathematical,rudelson2008sparse,
  rauhut2012restricted,romberg2009compressive} and Sec.~\ref{sec:discussion}).

This work aims to fill this gap by showing that for a \emph{dithered}
uniform scalar quantization, random linear maps known to satisfy
the RIP determine, with high probability, a non-linear embedding
simply obtained by quantizing their image. We show that this is possible for
all vector sets on which this RIP is known to hold, hence ensuring a certain \emph{inheritance} from this linear embedding in the quantized domain. 

In a nutshell, the proofs developed in this work are not technical and are all based on
the same architecture: the RIP allows us to focus on the embedding of
the image of a low-complexity vector set (obtained through the corresponding
linear map) into a quantized domain thanks to a randomly-dithered quantization.
This is made possible by \emph{softening} the
discontinuous distances evaluated in the quantized domain according to a mathematical machinery inspired by
\cite{plan2014dimension} in the case of 1-bit quantization and extended in~\cite{jacques2015small} to
dithered uniform scalar quantization. Actually, this softening allows us to
extend the concentration of quantized random maps on a finite covering of
the low-complexity vector set $\cl K$ to this whole set by a
continuity argument. 
   
As already determined in many previous works, a key aspect in our
study is the way distances\footnote{The notion of distance may be simply a \emph{dissimilarity} associated to a pre-metric, see~Sec.~\ref{Main-results}.} are measured between vectors in both the quantized
embedding space $\cl E$ and in the original vector set $\cl K$. In particular, denoting by $(\ell_p,\ell_q)$-RIP$(\cl K, \epslin)$ (with $p,q \geq 1$) the property of a matrix $\bs
\Phi$ to approximate the
$\ell_q$-norm of all vectors in $\cl K$ through the $\ell_p$-norm of their
projections up to a multiplicative distortion $\epslin >0$ (see
\eqref{eq:RIPpq} for the precise definition),
we will first show that measuring distances in $\cl
E$ with the $\ell_1$-norm allows us to build, with high probability,
quantized embeddings inherited from the $(\ell_1,\ell_q)$-RIP$(\cl
K-\cl K,\epslin)$ of their linear map. In this case, the additive distortion of this embedding due to quantization is arbitrarily small provided the embedding dimension $m$ is large with respect to the
dimension of $\cl K$ (as measured by the Kolmogorov
entropy \cite{oymak2015near,plan2014dimension}). Second, if the distance between vectors in $\cl E$ is
measured with a squared $\ell_2$-norm, as classically done for linear
embeddings based on the RIP, we prove that a similar embedding inherited from the $(\ell_2,\ell_2)$-RIP$(\cl
K-\cl K,\epslin)$ holds with high probability, but the additive
distortion now vanishes only for arbitrarily close pairs of
embedded vectors. Finally, we show that one can obtain a quantized embedding having arbitrarily small
additive distortion by inheriting the $(\ell_2,\ell_2)$-RIP$(\cl
K-\cl K,\epslin)$ of its linear part. This is achieved by designing a \emph{bi-dithered}
quantization procedure that implicitly doubles the embedding dimension, and by endowing the embedding space $\cl E$ with a particular pre-metric that measures the dissimilarity between vectors.    

The rest of this paper is structured as follows. We first end this
introduction by giving an overview of the useful mathematical
notations. Sec.~\ref{Main-results} presents the preliminary concepts of our theoretical framework, followed by the statements of the three main results in this paper. Sec.~\ref{sec:discussion} discusses the portability of those
results and their connection to the existing ``market'' of RIP
matrices, including those with fast/low-complexity vector encoding
schemes, followed by some perspectives, relevant connections with existing works, and open problems. Finally, Sec.~\ref{proof-of-main-prop},
Sec.~\ref{sec:proof-distorted-embedding} and Sec.~\ref{sec:bi-dith-quant} are
dedicated to proving our main results, with a few useful lemmata postponed to Appendices~\ref{sec:useful-lemmas} and~\ref{sec:proof-lemma-5}.

\paragraph*{Conventions:} We find it useful to start with the
conventions and notations used throughout this paper. We will denote vectors and matrices with bold symbols, \eg~$\bs \Phi \in \bb R^{m\times n}$
or~$\bs u \in \bb R^m$, while lowercase light letters
are associated to scalar values. The identity function and the identity matrix in~$\bb R^n$ read
$\Idf$ and $\Id_{n}$, respectively, while $\bs 1_n := (1,\,\cdots,1)^\transp\in \bb R^n$ is the
vector of ones. The~$i\th$ component of a vector (or of a
vector function)~$\bs u$ reads either~$u_i$ or~$(\bs u)_i$, while the
vector~$\bs u_i$ may refer to the~$i\th$ element of a set of
vectors. The set of indices in~$\Rbb^d$ is~$[d]:=\{1,\,\cdots,d\}$ and
for any~$\cl S \subset [d]$ of cardinality~$S = |\cl S|$,
$\bs u_{\cl S} \in \bb R^{|\cl S|}$ denotes the restriction of
$\bs u$ to~$\cl S$, while $\bs B_{\cl S}$ is the matrix obtained by restricting the columns
of~$\bs B \in \bb R^{d \times d}$ to those indexed by~$\cl S$. The complement of a set $\cl S$ reads $\cl S^\compl$. For
any~$p\geq 1$, the~$\ell_p$-norm of~$\bs u$ is
$\|\bs u\|_p = (\sum_i |u_i|^p)^{1/p}$
with~$\|\!\cdot\!\|=\|\!\cdot\!\|_2$, and we shall denote with $\cl D_{\ell^p_p}(\bs a,\bs a') := \tinv{d} \|\bs a -
\bs a'\|_p^p$ the averaged $p\th$ power of the $\ell_p$-distance.   
The~$(n-1)$-sphere in~$\Rbb^n$ in $\ell_p$ is
$\bb S_{\ell_p}^{n-1}=\{\bs x\in\Rbb^n: \|\bs x\|_p=1\}$ while the unit ball reads~$\bb B_{\ell_p}^{n}=\{\bs x\in\Rbb^n: \|\bs x\|\leq 1\}$. For $\ell_2$, we will write $\bb B^n = \bb B^n_{\ell_2}$ and $\bb S^{n-1} = \bb S_{\ell_2}^{n-1}$. By extension, $\bb B_{\ell_F}^{n_1 \times n_2}$ is the Frobenius unit ball of $n_1 \times n_2$ matrices $\bs U$ with $\|\bs U\|_F \leq 1$, while $\ell_F$ will denote the metric associated to the Frobenius norm ${\|\cdot\|_F}$. We recall that this norm is associated to the scalar product $\scp{\bs U}{\bs V}_F = \tr(\bs U^\top \bs V)$ through $\|\bs U\|_F^2 = \scp{\bs U}{\bs U}_F$, for two matrices $\bs U, \bs V$.

We will use the simplified notation~$\cl X^{m\times
  n}$ and~$\cl X^{m}$ to denote an~$M\times N$
random matrix or an~$m$-length random vector, respectively, whose entries are
\emph{identically and independently distributed} (or \iid) as the probability distribution
$\cl X$, \eg $\cl N^{m \times n}(0,1)$ (or $\cl U^m([0, \delta])$) is
the distribution of a matrix (resp. vector)
whose entries are iid as 
the standard normal distribution~$\cl N(0,1)$ (resp. the uniform
distribution~$\cl U([0, \delta])$). We also use extensively the sub-Gaussian and sub-exponential
characterization of random variables (or \rv) and of random vectors detailed in
\cite{vershynin2010introduction}. The sub-Gaussian and the
sub-exponential norms of a random
variable $X$ are thus denoted by $\|X\|_{\psi_2}$ and $\|X\|_{\psi_1}$, respectively,
with the Orlicz norm $\|X\|_{\psi_\alpha} := \sup_{p\geq 1} p^{-1/\alpha} (\bb E
|X|^p)^{1/p}$ for $\alpha \geq 1$. The random variable $X$ is therefore
sub-Gaussian (or sub-exponential) if $\|X\|_{\psi_2} < \infty$
(resp. $\|X\|_{\psi_1} < \infty$). We also define $\|X\|_{\infty} :=
\inf\{s > 0: \bb P[|X| \leq s] = 1\}$, with the useful relation
$\|X\|_{\psi_\alpha} \leq \|X\|_{\infty}$. Roughly speaking, we will
sometimes write that an event
holds \emph{with high probability} (or \whp) if its probability of
failure decays exponentially with some specific dimensions depending
on the context (\eg with the
embedding dimension of a map). The common flooring and ceiling operators
are denoted $\lfloor \cdot \rfloor$ and $\lceil \cdot \rceil$,
respectively, and the positive thresholding function is defined by
$(\lambda)_+ := \tinv{2}(\lambda + |\lambda|)$ for
any~$\lambda\in\Rbb$. Finally, an important feature of our study is that we do not pay any particular
attention to constants in the many bounds developed in this paper. For instance,
the symbols $C,C', C'', ..., c,c',c'',... > 0$ are positive constants
whose values can change from one line to the other. We will, however,
limit the use of such constants and find convenient the use of (non-asymptotic) ordering notations $A \lesssim B$ (or $A \gtrsim
B$), if
there exists a $c > 0$ such that $A \leq c B$ (resp. $A \geq c B$) for
two quantities $A$ and~$B$. Moreover, we will write $A \simeq B$ if we have both $A\simeq B$ and $B \simeq A$. This last notation should not be confused with the writing $\cl A \cong \cl B$ that means that the two spaces $\cl A$ and $\cl B$ are isomorphic, \eg $\bb R^{n_1 \times n_2} \cong \bb R^n$ with $n = n_1n_2$. 

\section{Main Results}
\label{Main-results}

The Restricted Isometry Property (RIP) is a well-known criterion for characterizing the distortion induced by a
linear map $\bs \Phi \in \bb R^{m \times n}$, with $m$ generally
smaller than $n$, when $\bs \Phi$ is used to
embed a ``low-complexity'' set $\cl K \subset \bb R^n$
of the metric space $(\bb R^n, \ell_q)$ into $(\bb R^m, \ell_p)$.   In
particular, given $\epslin \in (0,1)$, we say that $\bs \Phi$ respects
the $(\ell_p,\ell_q)$-Restricted Isometry Property over $\cl K$, or
$(\ell_p,\ell_q)$-RIP$(\cl K, \epslin)$ if, for all $\bs x \in \cl K$,
\begin{equation}
  \label{eq:RIPpq}
  (1- \epslin) \|\bs x\|^p_q \leq \tfrac{\mu_{\Phi}}{m} \|\bs \Phi \bs
  x\|^p_p \leq  (1 + \epslin) \|\bs x\|^p_q,
\end{equation}
for some $\mu_\Phi > 0$ independent of $n$ and $m$ but possibly
dependent on $\cl K$ and $\epslin$, as in the case $p=q=1+O(1)$ and
$\cl K = \Sigma_{s,n} := \{\bs u \in \bb R^n:
|\supp \bs u| \leq s\}$~\cite{berinde2008combining}. For simplicity, we
consider henceforth that $\mu_\Phi = 1$, up to a rescaling of
the sensing matrix $\bs \Phi$.  

The RIP can be shown to hold for random matrix constructions, \eg when the entries
of $\bs \Phi$ are \iid sub-Gaussian~\cite{krahmer2014unified} or when this matrix is
connected to the adjacency matrix of some unbalanced expander graphs
\cite{berinde2008combining}. Moreover, many random constructions of $\bs \Phi$ with fast
vector encoding time, \ie low-complexity matrix-vector multiplication in the case of partial random basis~\cite{foucart2013mathematical} (\eg Fourier or Hadamard bases), random
convolutions~\cite{rauhut2012restricted,romberg2009compressive}, and spread-spectrum
techniques~\cite{puy2012universal} are also well known in the case $p=q=2$. We remark that the embedding
distortion explained by this RIP is \emph{multiplicative}
and represented by the factors~$1\pm \epslin$. \medskip  

This work addresses the question of combining such RIP matrix
constructions with a \emph{quantization} procedure in order to reach
efficient non-linear embedding procedures, \ie with a controllable level
of distortion between the original distance of any pair of vectors in~$\bb R^n$
and some suitably defined distance (or dissimilarity measure) of their quantized projections. This happens to be
important for efficiently storing, transmitting or even processing the
vectors undergoing this non-linear embedding~\cite{boufounos2015quantization}.  While other works
have analyzed the important case of 1-bit quantization (as reached by a
sign operator)
\cite{boufounos20081,jacques2013robust,plan2014dimension,oymak2015near,yu2014circulant},
we here focus on a uniform (mid-rise) scalar quantization process
\begin{equation}
  \label{eq:quant-def}
\ts \cl Q:\ \lambda \in \bb R\ \mapsto\ \bb \delta (\lfloor \tfrac{\lambda}{\delta} \rfloor + \tinv{2}) \in \bb Z_\delta := \delta(\bb Z + \tinv{2})
\end{equation}
of resolution $\delta > 0$ that is applied componentwise onto vectors. 

It is then straightforward that, by using the deterministic relation $|\cl Q(\lambda) - \lambda| \leq
\delta/2$ and the $(\ell_p,\ell_q)$-RIP$(\cl K- \cl K, \epslin)$, we quickly arrive to 
\begin{equation}
  \label{eq:straightforward-quant-embed}
  (1 - \epslin)^{1/p} \|\bs x - \bs x'\|_q - \delta \leq \tfrac{1}{\sqrt[p]{m}} \|\cl Q(\bs \Phi \bs
  x) - \cl Q(\bs \Phi \bs
  x')\|_p \leq  (1 + \epslin)^{1/p} \|\bs x - \bs x'\|_q + \delta,  
\end{equation}
for all $\bs x, \bs x' \in \cl K$. However, this deterministic derivation seems to hint at a severe drawback:
the new \emph{additive} distortion $\pm \delta$ appearing on both sides of
\eqref{eq:straightforward-quant-embed} is constant, irrespectively of the
embedding dimension $m$. This seems rather counterintuitive since, for
instance, 
vectors $\bs x$ and $\bs x'$ that are \emph{consistent}, \ie $\cl Q(\bs \Phi \bs
  x) = \cl Q(\bs \Phi \bs
  x')$, will hardly be far apart when $m$ increases,
  \ie the set of constraints corresponding to this consistency shall only be satisfied for sufficiently close vectors, whereas~\eqref{eq:straightforward-quant-embed} seems to provide just $\|\bs x - \bs x'\|_q \leq (1 -
  \epslin)^{-1/p} \delta \simeq \delta$ which does not decrease with $m$.

Despite this intuition, the quantizer \emph{alone} cannot prevent a constant additive distortion for all vectors in $\cl K$ and any matrix $\bs \Phi$. Indeed, taking $\bs x' = - \bs x$ and assuming $\{\bs x, \bs x'\} \subset \cl K$, since $\cl Q$ is odd\footnote{$\cl Q$ is in fact odd almost everywhere so that, \eg for a random Gaussian matrix, $\bb P[\cl Q(\bs \Phi \bs x') = - \cl Q(\bs \Phi \bs x)] = 1$.} and $|\cl Q((\bs \Phi \bs x)_i)| > \delta/2$ for $i \in [m]$, we have $\tfrac{1}{\sqrt[p]{m}} \|\cl Q(\bs \Phi \bs x) - \cl Q(\bs \Phi \bs
  x')\|_p = \tfrac{2}{\sqrt[p]{m}} \|\cl Q(\bs \Phi \bs x)\|_p > \delta$. If $\cl K$ allows $\|\bs x - \bs x'\|_q = 2\|\bs x\|_q$ to be arbitrarily small, a variant of \eqref{eq:straightforward-quant-embed} with an additive distortion decaying either with $\|\bs x - \bs x'\|_q$ or when $m$ increases cannot hold.     

The objective of this work is to show that, thanks to a
key randomization of the quantizer's input through a common \emph{dithering} procedure
\cite{gray1998quantization,wannamaker1998theory}, the resulting quantized embeddings
display both a multiplicative and an additive distortion of the
distances between vectors in $\cl K$, just like
in~\eqref{eq:straightforward-quant-embed}. With this dithering the additive
distortion, which is not present in the RIP of linear embeddings, decays when $m$ increases and/or when the
pairwise vector distances decrease. More
specifically, our aim is to analyze the embedding properties of
the quantized map $\Amap: \bb R^n \to \bb Z^m_\delta $ such that,
for $\bs x \in \bb R^n$,  
\begin{equation}
  \label{eq:quant-embed-def}
  \Amap(\bs x) = \Amap(\bs x, \bs \Phi, \bs \xi):= \cl Q(\bs \Phi \bs x + \bs \xi),
\end{equation}
where $\bs \xi$ is the uniform dither whose entries are \iid as
$\cl U([0, \delta])$, \ie $\bs \xi \sim \cl U^m([0, \delta])$. 
Notice that such a dithered and quantized random map is not new and some of
its properties have already been studied, \eg in the context of  
locality-sensitive hashing of finite vector sets (for Gaussian or
$\alpha$-stable random matrix distributions)
\cite{andoni2006near,datar2004locality} or for universal encoding
strategies when $\cl Q$ is replaced by a \emph{non-regular}\footnote{A quantizer $\cl Q$ is non-regular if its quantization bins are non-convex, \eg if $\cl Q$ is periodic over $\bb R$~\cite{boufounos2012universal}.}
quantizer or a more general periodic function~\cite{boufounos2012universal,boufounos2015universal,boufounos2015representation}
(see Sec.~\ref{sec:comp-with-other}). 

The insertion of a dithering in \eqref{eq:quant-embed-def} is also motivated by the following key property: in
  expectation, \emph{dithering cancels out quantization}. Indeed, as shown in
  \cite[App. A]{jacques2015small}, for $u\sim \cl U([0,1])$ and
  $a,a'\in \bb R$, $\bb E_u |\lfloor a + u \rfloor - \lfloor a' + u
  \rfloor| = |a - a'|$ so that for $\bs x, \bs x' \in \bb R^n$ and $\bs \xi \sim \cl U^m([0,\delta])$ 
  \begin{equation}
    \label{eq:basic-dith-quant-prop}
\ts \bb E_{\bs \xi} |\Amap(\bs x)_i - \Amap(\bs x')_i| = \delta \bb E_{\bs \xi} | \lfloor
(\delta^{-1} \bs \Phi\bs x + \delta^{-1}\bs \xi)_i \rfloor - \lfloor
(\delta^{-1} \bs \Phi\bs x' + \delta^{-1}\bs \xi)_i \rfloor | = |\big(\bs \Phi (\bs x -
\bs x')\big)_i|.
\end{equation}
It seems thus intuitive that by accumulating and ``averaging'' $m$ observations
of $\bs x$ and $\bs x'$, each associated to some \iid dither, one
should be able to approach this expectation where only the properties of the linear embedding matter. The exact meaning of this phenomenon will be
carefully studied below.

In order to make this analysis more formal, the embedding space
$\cl E = \bb Z_\delta^{m}$ is endowed with a pre-metric $\Demb: \cl E
\times \cl E \to \bb R_+$, \ie a measure of the dissimilarity between vectors of $\cl E$. This pre-metric is assumed homogeneous of degree $\pE$, \ie for all $\lambda \in
\bb R$, $\Demb(\lambda \bs a,\lambda \bs a') = |\lambda|^\pE \Demb(\bs
a, \bs a')$ for all $\bs a,\bs a' \in \bb R^m$; for instance, we will consider\footnote{We can remark that the function $\cl
D_{\ell_p}$ is a pre-metric but
not a metric since it does not respect the triangle inequality for $p > 1$.} $\Demb = \cl
D_{\ell_{\pE}}$ with $\cl
D_{\ell_p}(\bs a, \bs a') := \tinv m \|\bs a - \bs a'\|_p^p$.   

Within this context, we are
thus interested in the following
characterization of the map~$\Amap$. 
\begin{definition}[Quantized RIP]
\label{def:QRIP}
For $\epslin,\epsilon \in (0,1)$, $q\geq 1$,
and some \emph{additive distortion} $\rho: (\epsilon, s)\in
\bb R^+\times \bb R^+ \mapsto \rho(\epsilon, s) \in \bb R^+$ such that $\rho$ is continuous,
\begin{equation}
  \label{eq:cond-rho-qrip}
\rho(0,0) = 0\text{ and }\lim_{s \to + \infty} s^{-\pE} \rho(\epsilon,
s) = 0,  
\end{equation}
a map $\Amap: \bb R^n \to \cl E$ respects the $(\Demb,\ell_q)$-\emph{quantized restricted
isometry property}, or $(\Demb,\ell_q)$-QRIP$(\cl K, \epslin,
\rho)$, if for all $\bs x, \bs x' \in \cl K$
\begin{equation}
  \label{eq:QRIP-def}
(1-\epslin) \|\bs x - \bs x'\|_q^\pE - \rho(\epsilon, \|\bs x - \bs x'\|_q)
\leq \Demb\big(\Amap(\bs
  x), \Amap(\bs
  x')\big) \leq (1+\epslin) \|\bs x - \bs x'\|_q^\pE + \rho(\epsilon, \|\bs x - \bs x'\|_q).
\end{equation} 
\end{definition}

In~\eqref{eq:QRIP-def}, we do not impose $\cl D_{\cl E}$ to be a function of some distance in $\bb R^m$. Actually, the role of this pre-metric when combined with $\Amap$ is to give us a computable quantity, \ie a measure of dissimilarity between $\Amap(\bs x)$ and $\Amap(\bs x')$ that approximates a power of the initial $\ell_q$-distance separating $\bs x$ and $\bs x'$. Note also that this definition is intentionally general in order to simplify
the presentation of our results. In particular, this paper
will consider only the cases in which $\pE=1$ or 2, and the
pre-metric $\Demb$ will be proportional to the $\ell_1$-distance, the squared
$\ell_2$-distance in~$\bb R^m$ or to a pre-metric derived from the $\ell_1$-distance
as will be clearer below. 

We shall also remark that the targeted QRIP,
through the requirements imposed on $\rho(\epsilon, s)$, \ie its disappearance when
$s=0$ and $\epsilon = 0$ or its decay relatively to $s^\pE$ for large
values of~$s$, is significantly stronger than
what is provided by the straightforward derivation
\eqref{eq:straightforward-quant-embed}. By raising this last relation to
the $\pE\!\th$ power, as in~\eqref{eq:QRIP-def}, we can see how the resulting additive
distortion $\rho$ contains an independent term in $\delta^\pE$ that does not vanish when both $\epsilon$ and
the distance $\|\bs x - \bs x'\|_q$ tend to zero. 
In fact, the hypothesis that $\rho(0,0)=0$ and $\lim_{s \to + \infty} s^{-\pE} \rho(\epsilon,
s) = 0$ arises from the need for our definition to allow cases where $\rho(0,s) \propto s^{p'_{\cl E}}$ for some $0<p'_{\cl E}<p_{\cl E}$ (as in Prop.~\ref{prop:distorted-main}). This situation does still provide a stronger embedding than the one derived in \eqref{eq:straightforward-quant-embed}, \ie with an additive distortion that becomes arbitrarily small for arbitrarily close $\bs x,\bs x' \in \cl K$. However, two of the three quantized embeddings proposed in this work will have $\rho(\epsilon,s) \lesssim \epsilon$ (as in Prop.~\ref{prop:main} and Prop.~\ref{prop:bi-dithered-embedding}), \ie with an additive distortion that can be made arbitrarily small with $\epsilon>0$.

\medskip
In this clarified context, the gist of this paper is to show that we can leverage the
$(\ell_p,\ell_q)$-RIP$(\cl K, \epslin)$ of certain matrix
constructions, if this holds for some $p,q\geq 1$, to prove that, with high
probability and when $m$ is large compared to the complexity of $\cl
K$, the random map $\Amap$ in~\eqref{eq:quant-embed-def}
respects a $(\Demb, \ell_q)$-QRIP$(\cl K, \epslin,\rho)$,
for suitable $\Demb$ and additive distortion $\rho$. 
From \eqref{eq:basic-dith-quant-prop}, the dither $\bs \xi$ introduced in the definition of $\Amap$ plays a crucial
role in this analysis. It enables us to reach a distortion $\rho$ that decays when the embedding dimension $m$
increases and/or when the pairwise distances of vectors in $\cl K$ decrease. 

\medskip
As an important ingredient, the main requirement on $m$ on which
our results are built depends on the dimension of the considered low-complexity set $\cl K \subset \bb
R^n$, as characterized by its Kolmogorov entropy. In the proofs, this allows us to bound the cardinality of its \emph{covering}
according to the set's \emph{radius}. In particular, given $q \geq 1$ and assuming
$\cl K$ is bounded, an
$(\ell_q,\eta)$-net $\cl K_{\ell_q, \eta} \subset \cl K$ of radius $\eta>0$ in the
$\ell_q$-metric is a finite set of vectors that \emph{covers} $\cl K$,
\ie 
$$ 
\forall \bs x \in \cl K,\quad \min\{\|\bs x - \bs u\|_{q}: \bs u \in \cl
K_{\ell_q, \eta}\} \leq \eta.
$$
Assuming that $\cl
K_{\ell_q, \eta}$ is optimal, \ie that there
is no smaller set covering $\cl K$, the Kolmogorov entropy of $\cl K$ is then defined
as 
$$
\cl H_q(\cl K, \eta) := \log |\cl K_{\ell_q, \eta}|.
$$    

While we are not aware of a general procedure to bound $\cl H_q$ for $ q\neq 2$ and any bounded set $\cl K \subset \bb R^n$, in the case $q=2$ the Sudakov minoration
provides~\cite{chandrasekaran2012convex,plan2014dimension} 
\begin{equation}
  \label{eq:Sudakov-minoration}
  \cl H_2(\cl K, \eta) \leq {\eta}^{-2} w(\cl K)^2,  
\end{equation}
where 
\begin{equation}
  \label{eq:1}
  w(\cl K) := \bb E_{\bs g} \sup\{|\scp{\bs g}{\bs u}|: \bs u \in \cl K\}, 
\end{equation}
with $\bs g \sim
\cl N^n(0,1)$, is the \emph{Gaussian mean width} of $\cl K$. This last quantity is known for many sets on which
directly estimating $\cl H_2$ is not easy 
\cite{chandrasekaran2012convex,plan2014dimension,plan2013robust,jacques2015small}. This 
includes the convex set of compressible vectors $\cl C_{s} := \sqrt s\,\bb B_{\ell_1}^n \cap \bb B^n$ with $w(\cl C_s)^2 \lesssim s \log (n/s)$, as well as the set of $n_1 \times n_2$ rank-$r$ matrices $\cl R_{r} := \{\bs U \in \bb
R^{n_1  \times n_2} \cong \bb R^n: \rank \bs U \leq r\}$ with $n = n_1n_2$ such that $w(\cl R_r \cap \bb B^{n_1 \times n_2}_{\ell_F})^2 \lesssim r (n_1 + n_2)$ where $\bb B^{n_1 \times n_2}_{\ell_F}$ is the Frobenius unit ball.  
In addition, if $\cl K \subset \bb R^n$ is a finite union of $T$ bounded sets $\cl K_i \subset \bb R^n$ with $\max_{i \in [T]} w(\cl
K_i) \leq W$ for some $W>0$, we can show (see Lemma~\ref{lem:gaussian-mean-width-union-set} in App.~\ref{sec:gmw-rank-jsparse}) that 
$w(\cl K)^2\ \lesssim\ W^2 + \log T$, which provides a bound for $\cl H(\cl
K, \eta)$ through \eqref{eq:Sudakov-minoration}. More directly, if all those sets have bounded Kolmogorov entropy, \ie if $\max_i \cl H(\cl K_i,\eta) \leq H(\eta)$ for some positive function $H$ and all $\eta > 0$, then $\cl H(\cl K, \eta) \lesssim H(\eta) + \log T$.

However, note that for some of the aforementioned sets and other \emph{structured} sets listed below, the entropy bound~\eqref{eq:Sudakov-minoration} can be significantly tightened for
small values of $\eta$, \ie essentially reducing the dependence in $\eta$ of~\eqref{eq:Sudakov-minoration} from\footnote{In fact, $\cl H_q(\cl K, \eta) \lesssim \log(1/\eta)$ for all spaces $\cl K$ having a finite ``box-counting dimension'' in the $\ell_q$-metric, \ie if $\dim \cl K := \lim\sup_{\eta \to 0^+} \cl H_q(\cl K, \eta)/\log(1/\eta)$ is bounded \cite{puy2015recipes}. As explained therein, this includes the structured sets cited in our paper as well as compact Riemannian manifolds \cite{eftekhari2015new}.} $\eta^{-2}$ to $\log(1/\eta)$. Moreover, for $q \geq 1$, there exist a few examples of structured sets $\cl K$ where $\cl H_q$ can be bounded. This is described hereafter and summarized in Tab.~\ref{tab:Kolmog-bounds}.

\begin{table}[!t]
  \centering

\scriptsize
  \begin{tabular}{p{4.9cm}|@{\ }c@{\ }|p{4cm}|p{4.5cm}}
    {\bf Set $\cl K$}&$q$&{\bf Upper-bound on $\cl H_q(\cl K, \eta)$}&{\bf Upper-bound on $w(\cl K \cap \bb B^n)^2$}\\
    \hline
&&&\\[-3mm]
{\em Structured sets:}&&&\\
-- ULS with $T$ s-dimensional\newline
\phantom{-- }subspaces of $\bb R^n$~\cite{blumensath2011sampling}&$q\geq 1$&$s\,(1+\log (T^{\frac{1}{s}}))\,\log(1+\frac{\|\cl K\|_q}{\eta})$&$(s + \log T) \lesssim s\,(1 + \log T^{\frac{1}{s}})$\newline (from \cite[Lemma 2.3]{plan2013robust})\\[2mm]
-- $s$-sparse signals in $\bb R^n$:\newline
\phantom{-- }$\Sigma_{s,n}\cap \bb B_{\ell_q}^n$&$q\geq
                                     1$&$s\log(\frac{n}{s})\log(1+\frac{1}{\eta})$&$s\log(\frac{n}{s})$\\[1mm]
-- $s$-sparse signals in a dictionary:\newline
\phantom{-- }$\bs D\Sigma_{s,d}\cap \bb B_{\ell_q}^n$, 
$\bs D \in \bb R^{n \times d}$~\cite{candes2011compressed,rauhut2008compressed}
&$q\geq 1$&$s\log(\frac{d}{s})\log(1+\frac{1}{\eta})$&$s\log(\frac{d}{s})$\\[5mm]
-- $(s,l)$-group-sparse signals:\newline
\phantom{-- }$\Sigma_{s,l,n}\cap \bb B_{\ell_q}^{nl}$~\cite{oymak2015near,ayaz2016uniform}&$q\geq 1$&$s\,(l+\log \tfrac{n}{s})\,\log(1+\frac{1}{\eta})$&$s\,(l+\log \tfrac{n}{s})$\\[1mm]
-- $n_1 \times n_2$ rank-$r$ matrices ($n_1n_2=n$):\newline
\phantom{-- }$\cl R_r \cap \bb B^{n_1 \times n_2}_{\ell_F}$~\cite{candes2011tight,chandrasekaran2012convex}&$q=2$&$r (n_1 + n_2) \log(1 + \frac{1}{\eta})$&$r (n_1 + n_2)$\\ 
-- $n_1 \times n_2$ rank-$r$ matrices in $\bb B^{n_1 \times n_2}_{\ell_F}$\newline
\phantom{-- }with at most $s$ non-zero rows \newline
\phantom{-- }(\ie $s$-joint sparse) (\cite{golbabaee2012compressed} \& App.~\ref{sec:gmw-rank-jsparse})&$q=2$&$\big( r (s + n_2) + s \log(\frac{n_1}{s}) \big) \times$\newline
\phantom{-----}$\log(1 + \frac{1}{\eta})$&$r (s + n_2) + s \log(\frac{n_1}{s})$\\ 
&&&\\[-3mm]
\hline
&&&\\[-3mm]
{\em Other sets:}&&Sudakov's minoration:&Dudley's inequality:\\
--  bounded set $\cl K \subset \bb R^n$ (\eg \cite{chandrasekaran2012convex,plan2014dimension})&$q=2$&$\eta^{-2} w(\cl K)^2$&$\int_0^{+\infty} \cl H(\cl K, \eta)^{1/2}\,\ud\eta$\\[1mm]
--  $\cl K = \cup_{i=1}^T \cl K_i \subset \bb R^n$\newline 
\phantom{-- }with $\max_i w(\cl K_i) \leq W$\newline
\phantom{-- }and $\max_i \cl H(\cl K_i,\eta) \leq H(\eta)$.
&$q=2$&$H(\eta) + \log T$\newline
(Lemma~\ref{lem:gaussian-mean-width-union-set}, App.~\ref{sec:gmw-rank-jsparse})&$W^2 + \log T$\newline
(Lemma~\ref{lem:gaussian-mean-width-union-set}, App.~\ref{sec:gmw-rank-jsparse})\\[5mm]
--  compressible signals in $\bb R^n$:\newline 
\phantom{-- }$\cl C_s \cap \bb B^n$ \cite{plan2014dimension}&$q=2$&$\eta^{-2} s \log(\frac{n}{s})$&$ s\log(\frac{n}{s})$\\[2mm]
  \end{tabular}

  \caption{Non-exhaustive list of sets with known upper-bounds on their Kolmogorov entropy and their Gaussian mean width. References are provided for bounds that are not proved in this paper. These upper-bounds are valid up to a multiplicative constant $C > 0$ that is independent of the involved dimensions. For the set of bounded rank-$r$ matrices, the $\ell_2$-norm (\ie $q=2$) is identified with the Frobenius norm (up to a vectorization).}
  \label{tab:Kolmog-bounds}
\end{table}

\begin{itemize}
\item \emph{Union of low-dimensional subspaces (ULS)} (for $q\geq 1$): Let
  us consider the ULS model where $\cl K=
  \cup_{i=1}^T \cl S_i$ is a union of $T$ $s$-dimensional subspaces
  $\cl S_i$ in $\bb B^n_{\ell_q}$ \cite{blumensath2011sampling}. This model includes, for instance, a
  single subspace of $\bb R^n$, the set $\bs D\Sigma_{s,d} \cap \bb B^n_{\ell_q}$ of sparse signals in 
some dictionary $\bs D \in \bb R^{n \times d}$ with $d \geq n$ (with
orthonormal bases and frames as
special cases) where\footnote{From the
Stirling approximation ${d \choose s} \leq (ed/s)^s$.} $\log T \leq s \log (\tfrac{n}{s})$, or even the set of $(s,l)$-group-sparse signals $\Sigma_{s,l,n} := \{\bs x \in \bb R^{ln}: |\{i: \|\bs x_{\Omega_i}\| \neq 0\}| \leq s\}$ in $\bb R^{nl}$ associated to the partition $\{\Omega_i \subset [ln]: 1\leq i \leq n, |\Omega_i| \leq l\}$ \cite{oymak2015near,ayaz2016uniform}. Under the ULS
model with $\eta \in (0,1)$ we have 
$$
\ts \cl
H_q(\cl K, \eta) \leq  \log(T) + s\log (1+ \frac{2}{\eta}) \leq s\,(1+\log(T^{1/s})) \log (1+ \frac{2}{\eta}).
$$
For instance, this gives $\cl H_q(\Sigma_{s,n}\cap \bb B_{\ell_q}^n, \eta)\ \lesssim\ s \log(\frac{n}{s}) \log(1+ \frac{2}{\eta})$. This can
be easily established by observing that an $(\ell_q,\eta)$-net of
each of such $s$-dimensional subspaces in $\bb B^n_{\ell_q}$ has
no more than $(1+ \tfrac{2}{\eta})^s$ elements (see, \eg~\cite[Lemma 4.10]{pisier1999volume}). Therefore, a
(non-optimal) covering of their union can be obtained from the union
of the nets, whose cardinality is smaller than $T\,(1+
\tfrac{2}{\eta})^s$, which provides the bound. 

Consequently, if $\cl K$ is a bounded ULS, we can always assume 
\begin{equation}
  \label{eq:Kolmogorov-bound-ULS}
  \ts \cl H_q(\cl K,\eta)\ \lesssim\ C_{\cl K}\,\log (1 + \frac{\|\cl K\|_q}{\eta}),  
\end{equation}
where $\|\cl K\|_q := \sup\{\|\bs u\|_q: \bs u \in \cl K\}$ is the $\ell_q$-diameter of $\cl K$ and
with $C_{\cl K}>0$ only depending on $\cl K \cap \bb B_{\ell_q}^n$ (\eg
$C_{\cl K} \lesssim s \log n/s$ for $\cl K = \Sigma_{s,n}\cap r\bb B_{\ell_q}^n$ for $r > 0$).

Henceforth, we say that a set is \emph{structured} if it is \emph{homogeneous} (\ie if $\lambda \cl K \subset \cl K$ for all $\lambda \geq 0$) and if
\eqref{eq:Kolmogorov-bound-ULS} is respected. Notice that all $\cl K$ that respect a ULS model are necessarily
\emph{homogeneous} and  also \emph{symmetric}, \ie $\cl K = - \cl K$. 
\item \emph{Other structured sets} (for $q=2$): In the case $q=2$, the analysis
  above extends to other structured
sets $\cl K$, as explained, \eg in
\cite{oymak2015near}. For instance, for the set of bounded rank-$r$ matrices in
$\bb R^{n_1 \times n_2} \cong \bb R^{n}$ with $n_1n_2 = n$ (see, \eg~\cite[Lemma
3.1]{candes2011tight}, \cite{moshtaghpour2016consistent}), the set of bounded rank-$r$ and jointly $s$-sparse $n_1 \times n _2$ matrices having only~$s$ non-zero rows (see App.~\ref{sec:gmw-rank-jsparse} and \cite{golbabaee2012compressed}) or even more
advanced models using group-sparsity, we still have 
$\cl H_2(\cl K,
\eta) \lesssim C_{\cl K}\,\log (1 + \frac{\|\cl K\|_2}{\eta})$ as in \eqref{eq:Kolmogorov-bound-ULS}, 
for some $C_{\cl K}>0$ depending only on $\cl
K \cap \bb B^n$. Actually, as shown in \cite{oymak2015near,chandrasekaran2012convex} (see also \cite[Def. 1]{jacques2015small}), for the bounded and structured sets listed above we have either that the constant $C_{\cl K}$ is proportional to $w(\cl K \cap \bb B^n)^2$ or that both quantities share a similar simplified upper bound. It is therefore assumed in these works that for structured sets 
  \begin{equation}
    \label{eq:struct-set-kolm-bound}
    \ts \cl H_2(\cl K, \eta)\ \lesssim\ w(\cl K \cap \bb B^n)^2 \log(1 + \tfrac{\|\cl K\|_2}{\eta}).     
  \end{equation}
This assumption can be observed in a few examples of sets provided in Table~\ref{tab:Kolmog-bounds}. 
\end{itemize}
Note that in the case of the structured sets listed in Table~\ref{tab:Kolmog-bounds}, $\cl H_q(\cl K, \eta)$ for $q \geq 1$ has the same upper-bound than the quantity $w(\cl K \cap \bb B^n)^2 \log(1 + \frac{\|\cl K\|_q}{\eta})$. This hints that in these cases $C_{\cl K}$ in \eqref{eq:Kolmogorov-bound-ULS} and $w(\cl K \cap \bb B^n)^2$ also share similar upper-bounds.


\medskip
Our first main result analyzes the following RIP inheritance.

\begin{proposition}[QRIP inherited from the $(\ell_1, \ell_q)$-RIP] 
\label{prop:main}
Let us fix $\epslin, \epsilon \in (0, 1)$ and $q \geq 1$. Assume that $\bs \Phi \in \bb R^{m \times n}$
respects the $(\ell_1, \ell_q)$-RIP$(\cl K - \cl K, \epslin)$. If  
\begin{equation}
  \label{eq:cond-on-m-main-prop}
  \ts m \gtrsim \epsilon^{-2} \cl H_q(\cl K, \delta \epsilon^2),    
\end{equation}
then, for some $C,c>0$, the quantized random embedding $\Amapd$ satisfies the
$(\Demb,\ell_q)$-QRIP$(\cl K, \epslin, \rho)$ with $\rho \lesssim
\delta \epsilon$, $\Demb = \cl D_{\ell_1}$
(\ie $\pE = 1$) and with probability exceeding $1- C\exp(-c
m \epsilon^2)$. In other words, under these conditions and with the
same probability,
\begin{equation}
  \label{eq:QRIP-first-result}
(1-\epslin) \|\bs x - \bs x'\|_q - c \delta \epsilon 
\leq \tinv{m} \|\Amap(\bs x) - \Amap(\bs x')\|_1 \leq (1+\epslin) \|\bs
x - \bs x'\|_q + c\delta \epsilon, 
\end{equation} 
for all $\bs x,\bs x' \in \cl K$ and some $c>0$.

\end{proposition}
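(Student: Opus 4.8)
\emph{Proof plan.} The plan is to reduce everything to a uniform comparison between the computable quantity $\cl D_{\ell_1}(\Amap(\bs x),\Amap(\bs x'))=\tinv m\|\Amap(\bs x)-\Amap(\bs x')\|_1$ and its linear counterpart $\tinv m\|\bs\Phi(\bs x-\bs x')\|_1$: by~\eqref{eq:basic-dith-quant-prop} the former has $\bs\xi$-expectation equal to the latter, which by the $(\ell_1,\ell_q)$-RIP$(\cl K-\cl K,\epslin)$ lies in $[(1-\epslin)\|\bs x-\bs x'\|_q,(1+\epslin)\|\bs x-\bs x'\|_q]$. So it suffices to show that, with probability $\ge 1-C\exp(-cm\epsilon^2)$,
\[
\sup_{\bs x,\bs x'\in\cl K}\big|\tinv m\|\Amap(\bs x)-\Amap(\bs x')\|_1-\tinv m\|\bs\Phi(\bs x-\bs x')\|_1\big|\ \lesssim\ \delta\epsilon ,
\]
which, together with the RIP, gives~\eqref{eq:QRIP-first-result} and $\rho\lesssim\delta\epsilon$. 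I would fix an $(\ell_q,\eta)$-net $\cl K_{\ell_q,\eta}\subset\cl K$ with $\eta:=\delta\epsilon^2$, so that $\log|\cl K_{\ell_q,\eta}|=\cl H_q(\cl K,\delta\epsilon^2)$ and~\eqref{eq:cond-on-m-main-prop} reads $m\epsilon^2\gtrsim\log|\cl K_{\ell_q,\eta}|$ (hence also $m\epsilon\gtrsim\log|\cl K_{\ell_q,\eta}|$, since $\epsilon<1$).

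\emph{Step 1 (concentration on the net).} Fix $\bs a,\bs a'\in\cl K_{\ell_q,\eta}$. By the floor-function identity recalled before~\eqref{eq:basic-dith-quant-prop}, $|\Amap(\bs a)_i-\Amap(\bs a')_i|=\delta N_i$ with $N_i\in\Nbb$, $\bb E_{\bs\xi}[N_i\mid\bs\Phi]=\delta^{-1}|(\bs\Phi(\bs a-\bs a'))_i|$, and $N_i$ deviates from this mean by a centered variable valued in $[-1,1]$, independently across $i$ given $\bs\Phi$. A Hoeffding bound then gives $|\tinv m\|\Amap(\bs a)-\Amap(\bs a')\|_1-\tinv m\|\bs\Phi(\bs a-\bs a')\|_1|\le\delta\epsilon$ with probability $\ge 1-2\exp(-2m\epsilon^2)$; a union bound over the $|\cl K_{\ell_q,\eta}|^2$ pairs and~\eqref{eq:cond-on-m-main-prop} make this simultaneous over the net with probability $\ge 1-C\exp(-cm\epsilon^2)$. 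Combined with the RIP, $\tinv m\|\Amap(\bs a)-\Amap(\bs a')\|_1=(1\pm\epslin)\|\bs a-\bs a'\|_q\pm\delta\epsilon$ on the net.

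\emph{Step 2 (continuity across the quantizer --- the crux).} For arbitrary $\bs x,\bs x'\in\cl K$ pick nearest net points $\bs a,\bs a'$ (so $\|\bs x-\bs a\|_q,\|\bs x'-\bs a'\|_q\le\eta$). Since $\tinv m\|\cdot\|_1$ obeys the triangle inequality, and $\tinv m\|\bs\Phi\cdot\|_1$ transfers from the net to $\cl K$ up to $(1+\epslin)\eta\lesssim\delta\epsilon$ by the RIP, the only missing estimate is $\tinv m\|\Amap(\bs x)-\Amap(\bs a)\|_1\lesssim\delta\epsilon$, \emph{uniformly over $\bs x\in\cl K$}; the obstruction is that $\bs x\mapsto\Amap(\bs x)$ is discontinuous, so no covering argument applies directly. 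I would resolve this by splitting coordinates. Put $\mathcal B(\bs x):=\{i:|(\bs\Phi(\bs x-\bs a))_i|>\delta\epsilon\}$. On $\mathcal B(\bs x)$ use $|\Amap(\bs x)_i-\Amap(\bs a)_i|\le|(\bs\Phi(\bs x-\bs a))_i|+\delta$ (from $|\cl Q(\lambda)-\lambda|\le\delta/2$) together with $|\mathcal B(\bs x)|\le(\delta\epsilon)^{-1}\|\bs\Phi(\bs x-\bs a)\|_1\le(\delta\epsilon)^{-1}(1+\epslin)m\eta\lesssim m\epsilon$: this block contributes $\le\tinv m\|\bs\Phi(\bs x-\bs a)\|_1+\delta|\mathcal B(\bs x)|/m\lesssim\delta\epsilon$. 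On a ``good'' coordinate $i\notin\mathcal B(\bs x)$ one has $|(\bs\Phi(\bs x-\bs a))_i|\le\delta\epsilon<\delta$, so at most one lattice point of $\cl Q$ separates $(\bs\Phi\bs x)_i+\xi_i$ and $(\bs\Phi\bs a)_i+\xi_i$; hence $|\Amap(\bs x)_i-\Amap(\bs a)_i|\in\{0,\delta\}$, and it is nonzero only when $\xi_i$ falls in one of at most two sub-intervals of $[0,\delta]$ of length $|(\bs\Phi(\bs x-\bs a))_i|\le\delta\epsilon$ whose positions depend only on $(\bs\Phi\bs a)_i\bmod\delta$ and on the sign of $(\bs\Phi(\bs x-\bs a))_i$. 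Thus for \emph{every} such $\bs x$ that event is contained in $\{\xi_i\in J_i\}$ for a set $J_i\subset[0,\delta]$ of measure $\le 2\delta\epsilon$ depending on $\bs a$ alone, so $\tinv m\sum_{i\notin\mathcal B(\bs x)}|\Amap(\bs x)_i-\Amap(\bs a)_i|\le\tfrac\delta m\sum_{i=1}^m\mathbf 1\{\xi_i\in J_i\}$, an $\bs x$-free sum of independent Bernoulli$(\le 2\epsilon)$ variables that a Bernstein bound controls by $\lesssim\delta\epsilon$ with probability $\ge 1-2\exp(-cm\epsilon)$; a union bound over the $|\cl K_{\ell_q,\eta}|$ anchors $\bs a$ is absorbed by~\eqref{eq:cond-on-m-main-prop}. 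Adding the two blocks gives $\sup_{\bs x\in\cl K}\tinv m\|\Amap(\bs x)-\Amap(\bs a_{\bs x})\|_1\lesssim\delta\epsilon$.

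\emph{Conclusion and main obstacle.} On the intersection of the Step~1 and Step~2 events (probability $\ge 1-C\exp(-cm\epsilon^2)$, the $\epsilon^2$-rate being the slower), the triangle inequality combined with Steps~1--2, the RIP, and $\|\bs a-\bs a'\|_q\le\|\bs x-\bs x'\|_q+2\eta$ chain together to give $\tinv m\|\Amap(\bs x)-\Amap(\bs x')\|_1=(1\pm\epslin)\|\bs x-\bs x'\|_q\pm c\delta\epsilon$ for all $\bs x,\bs x'\in\cl K$, i.e.\ the $(\cl D_{\ell_1},\ell_q)$-QRIP$(\cl K,\epslin,\rho)$ with $\pE=1$ and $\rho\lesssim\delta\epsilon$. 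I expect Step~2 to be the only genuinely delicate part: making the sub-intervals $J_i$ explicit (two signs, wrap-around modulo $\delta$) and checking that the lattice-point coincidences one discards are $\bs\xi$-almost-surely harmless --- bookkeeping presumably isolated in the auxiliary lemmata of Appendices~\ref{sec:useful-lemmas}--\ref{sec:proof-lemma-5}. The Hoeffding/Bernstein steps, the union bounds, and the RIP transfer of $\tinv m\|\bs\Phi\cdot\|_1$ from the net to $\cl K$ are otherwise routine.
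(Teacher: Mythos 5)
Your proposal is correct, and it reaches \eqref{eq:QRIP-first-result} by a genuinely different route from the paper. The paper never uses the triangle inequality for the quantized distance: it replaces $\cl D_{\ell_1}(\cl Q(\cdot),\cl Q(\cdot))$ by a softened family $\cl D^t$ built from the threshold-counting distances $d^t$ of \eqref{eq:def-d-t}, proves a \emph{deterministic} two-sided continuity estimate (Lemmas~\ref{lem:cont-smald-t}--\ref{lem:continuity-Lq-Dt-new}) sandwiching $\cl D^0$ at a perturbed point between $\cl D^{\pm\eta P}$ at the anchor, and then only needs the concentration of $\cl D^{\pm\eta P}$ on the net (Lemma~\ref{lem:non-uniform-concent-soften-dist}) before setting $t=0$. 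You instead keep the raw distance, exploit that $\tinv m\|\cdot\|_1$ is an actual metric (true only because $\pE=1$ here), and replace the deterministic continuity by a \emph{probabilistic} one: near each anchor $\bs a$, the event that any nearby $\bs x$ flips a ``good'' coordinate is contained in the $\bs x$-free event $\{\xi_i\in J_i\}$ of probability at most $2\epsilon$, which Bernstein plus a union bound over anchors controls uniformly, while the few ``bad'' coordinates are killed deterministically via Markov and the crude bound $|\cl Q(\lambda)-\cl Q(\lambda')|\le|\lambda-\lambda'|+\delta$. All the quantitative checks go through: $|\mathcal B(\bs x)|\lesssim m\epsilon$, the $\exp(-cm\epsilon)$ Bernoulli tail absorbs the $\cl H_q\lesssim m\epsilon^2$ union bound, and the final rate is governed by the Hoeffding step. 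Note that your set $J_i$ is exactly the event that the anchor's dithered value lands in the band $\delta\bb Z+[-\delta\epsilon,\delta\epsilon]$, i.e.\ precisely where the paper's $d^{\delta\epsilon}$ and $d^{-\delta\epsilon}$ disagree, so the two proofs exploit the same mechanism; yours packages it probabilistically and is more elementary and self-contained for this proposition, whereas the paper's softening machinery is what survives the loss of the triangle inequality for $\cl D_{\ell_2}$ and $\cl D_\circ$ in Props.~\ref{prop:distorted-main} and~\ref{prop:bi-dithered-embedding}. (Minor point: the delicate continuity bookkeeping you anticipate lives in Lemmas~\ref{lem:cont-smald-t}--\ref{lem:continuity-Lq-Dt-new} of Sec.~\ref{proof-of-main-prop}, not in the appendices.)
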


The full proof of this proposition is postponed to Sec.~\ref{proof-of-main-prop}. However,
its sketch is intuitively simple and worth outlining as follows. From~\eqref{eq:basic-dith-quant-prop}, we first notice that 
$\bb E_{\bs \xi}(\tinv{m} \|\Amap(\bs x) - \Amap(\bs x')\|_1) = \|\bs \Phi(\bs
x - \bs x')\|_1$. We understand then that, for $\bs x, \bs x'$ fixed and provided we collect enough observations for $\tinv{m} \|\Amap(\bs x) - \Amap(\bs x')\|_1$ to approach $\bb E_{\bs \xi}(\tinv{m} \|\Amap(\bs x) - \Amap(\bs x')\|_1)$ with a controlled error, it should be possible to exploit the $(\ell_1, \ell_q)$-RIP$(\cl K - \cl K, \epslin)$ of $\bs \Phi$ to relate $\|\bs \Phi(\bs
x - \bs x')\|_1$, and thus $\tinv{m} \|\Amap(\bs x) - \Amap(\bs x')\|_1$, to $\|\bs x - \bs x'\|_q$ up to a multiplicative distortion involved by this linear embedding. 
In fact, using measure concentration on the randomness induced by the dithering, with
probability exceeding $1- C\exp(-c'
m \epsilon^2)$, the (pseudo) distance $\cl D(\bs a + \bs \xi, \bs a' + \bs
\xi) := \Demb(\cl Q(\bs a + \bs \xi), \cl Q(\bs a' + \bs \xi))$
concentrates around $\cl D_{\ell_1}(\bs a, \bs a')$  with an
additive distortion in $\delta\epsilon$ for a fixed pair
of vectors $\bs a = \bs \Phi \bs x$ and $\bs a' = \bs \Phi \bs x' \in \bb R^m$. 

It remains then to extend this result to vectors taken in the full set $\bs \Phi \cl K \subset \bb
R^m$ and then to use the $(\ell_1,\ell_q)$-RIP$(\cl K-\cl K, \epslin)$ to connect $\cl
D_{\ell_1}(\bs a, \bs a')$ to
$\|\bs x - \bs x'\|_q$, up to a multiplicative distortion~$1\pm \epslin$.
If $\Amapd$ was linear (\ie if $\cl Q = \Idf$) this would be done by covering the set $\cl K$, here according
to the $\ell_q$-metric, with a
sufficiently dense but finite set of vectors (\ie a $\eta$-net with radius
$\eta>0$). Knowing that
the set cardinality is bounded by $\exp(\cl H_q(\cl K, \eta))$, a
union bound provides that the concentration of the \rv $\Demb(\Amapd(\bs x),
\Amapd(\bs x'))$ can hold simultaneously for all vectors
of the $\eta$-net if~\eqref{eq:cond-on-m-main-prop} is
respected. This can be extended to $\cl K$ by continuity
of the linear map (see \eg~\cite{baraniuk2008simple} for an application of this
method to proving the RIP of sub-Gaussian matrices). However, the
argument breaks here as $\Demb(\Amapd(\cdot),
\Amapd(\cdot))$ is actually discontinuous due to the presence of $\cl Q$.  

This issue can be
fortunately overcome by \emph{softening} the impact of the quantizer and the distance between the mapped vectors. We follow for that a method initially developed in~\cite{plan2014dimension} for 1-bit quantization (with a sign operator) and extended later
in~\cite{jacques2015small} to uniform scalar quantization (see Sec.~\ref{proof-of-main-prop}). Roughly speaking, this is done by first observing that for any scalar $a, a' \in \bb R$, the distance $\delta^{-1} d(a,a') := \delta^{-1} |\cl Q(a) - \cl Q(a')|$, which appears in the decomposition of $\cl D_{\ell_1}$ over the $m$ components of the mapping $\Amapd$, amounts to counting the number of \emph{quantizer thresholds} in $\delta \bb Z$ that separate the non-quantized values $a$ and $a'$.  This counting process, which is intrinsically discontinuous with respect to $a$ and $a'$, is then softened by introducing a parameter $t \in \bb R$ whose value determines forbidden (or relaxed) intervals $\delta \bb Z + [-|t|, |t|]$ if $t > 0$ (resp. $t < 0$) of size $2|t|$ and centered on the quantizer thresholds in $\delta \bb Z$. Compared to the definition of $\delta^{-1} d$, which corresponds to $t=0$ and no softening, for $t >0$ we will {\em not count} a threshold of $\delta \bb Z$ between $a$ and $a'$ when it falls in the \emph{forbidden} interval (with $t > 0$), whereas we will {\em count} a threshold that is {\em not} between $a$ and $a'$ if it falls in the \emph{relaxed} interval (with $t < 0$). Interestingly, the distance $d^t(a,a')$ resulting from such a softened counting process gives then access to a form of continuity that $d=d^0$ is deprived of, \ie one can show that for all $r,r' \in [-r_0,r_0]$ and some $r_0 > 0$, $d^{t+r_0}(a,a') \leq d^{t}(a+r,a'+r') \leq d^{t-r_0}(a,a')$ (this is carefully detailed in Lemma~\ref{lem:cont-smald-t}). 
This suffices to connect the machinery developed to verify the RIP of linear mapping with that required for quantized random embeddings. To do so, we first extend the continuity of $d^t$ to that of $\cl D^t$ to $\bb R^m$ (with $\cl D = \cl D^0$), which gives a softening of $\cl D_{\ell_1}(\Amapd(\cdot),
\Amapd(\cdot))$. Then, we construct a $\eta'$-net of $\bb R^m$ in the $\ell_1$-metric from the image by $\bs \Phi$ of an $\eta$-net of $\cl K$ in the $\ell_q$-metric with the same cardinality, enjoying for this of the $(\ell_1,\ell_q)$-RIP of $\bs \Phi$ to connect $\eta'$ to $\eta$. Finally,~\eqref{eq:QRIP-first-result} can be proved by verifying it for all vectors in the neighborhoods that form the $\eta'$-net of $\bb R^m$. This is achieved by a perturbation analysis of $\cl D^t$ around the centers of these neighborhoods for which~\eqref{eq:QRIP-first-result} can be shown by union bound, using the continuity that $d^t$ confers on $\cl D^t$ in the said neighborhoods. The final result is obtained by setting $t=0$ at the very end of the developments.  
\medskip

As a second result we are able to show, up to a stronger distortion 
$\rho(\epsilon, s) \lesssim \delta s + \delta^2 \epsilon$ that only
vanishes when \emph{both} $\epsilon$ and $s$ tend to zero, that the $(\ell_2,\ell_2)$-RIP
implies a $(\Demb,\ell_2)$-QRIP, with $\Demb =
\cl D_{\ell_2}$. 

\begin{proposition}[QRIP inherited from the
  $(\ell_2,\ell_2)$-RIP] 
\label{prop:distorted-main}
Let us fix $\epslin, \epsilon \in (0,1)$. Assume that $\bs \Phi \in \bb R^{m \times n}$
  respects the $(\ell_2,\ell_2)$-RIP$(\cl K - \cl K, \epslin)$. If  
  \begin{equation}
    \label{eq:cond-on-m-distorted-main-prop}
    m \gtrsim \epsilon^{-2} \cl H_2(\cl K, \delta \epsilon^{3/2}),    
  \end{equation}
then, for some
$C,c>0$ and with probability exceeding $1- C\exp(-c m \epsilon^2)$, the quantized random embedding $\Amapd$ satisfies the $(\Demb,\ell_2)$-QRIP$(\cl K, \epslin, \rho)$, with $\Demb = \cl D_{\ell_2}$ (\ie $\pE = 2$) and $\rho(\epsilon, s)
\lesssim \delta s + \delta^2 \epsilon$. 
In other words, under these conditions and with the
same probability,
\begin{equation}
  \label{eq:QRIP-2nd-result}
\big| \tinv{m} \|\Amap(\bs x) - \Amap(\bs x')\|^2_2 - \|\bs x - \bs
x'\|^2_2 \,\big|\ \lesssim\ \epslin\,\|\bs x - \bs x'\|^2_2 + \delta \|\bs x - \bs x'\|_2 +
\delta^2 \epsilon, 
\end{equation} 
for all $\bs x,\bs x' \in \cl K$.
\end{proposition}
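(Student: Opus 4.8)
The plan is to mirror the architecture of Prop.~\ref{prop:main} but to exploit a pointwise identity linking the squared $\ell_2$-distance in the quantized domain to the $\ell_1$-distance in the non-quantized domain. The crucial observation is that for $\lambda \in \bb R$, the quantization error $e(\lambda) := \cl Q(\lambda) - \lambda \in [-\delta/2,\delta/2]$ when $\lambda$ is perturbed by the dither becomes, in a suitable sense, "noise-like". Concretely, writing $a = (\bs \Phi \bs x)_i$, $a' = (\bs \Phi \bs x')_i$ and expanding, one has
\[
\big(\cl Q(a+\xi_i) - \cl Q(a'+\xi_i)\big)^2 = \big((a - a') + (e(a+\xi_i) - e(a'+\xi_i))\big)^2 = (a-a')^2 + 2(a-a')\big(e(a+\xi_i)-e(a'+\xi_i)\big) + \big(e(a+\xi_i)-e(a'+\xi_i)\big)^2.
\]
Summing over $i$, dividing by $m$, the first term yields $\tinv m\|\bs\Phi(\bs x - \bs x')\|_2^2$, which the $(\ell_2,\ell_2)$-RIP$(\cl K - \cl K,\epslin)$ relates to $\|\bs x - \bs x'\|_2^2$ up to the multiplicative factor $1\pm\epslin$. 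The cross term is bounded in magnitude by $\tfrac{\delta}{m}\|\bs\Phi(\bs x - \bs x')\|_1$, hence $\lesssim \delta\|\bs x - \bs x'\|_2$ after using Cauchy--Schwarz together with the RIP (the $\ell_1$-norm of an $m$-vector is at most $\sqrt m$ times its $\ell_2$-norm); this is the source of the unavoidable $\delta s$ term in $\rho$. The last term, $\tinv m\sum_i (e(a+\xi_i)-e(a'+\xi_i))^2$, is the one that must be controlled by concentration and that will produce the residual $\delta^2\epsilon$.

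For that last term, the approach is to invoke the softening/continuity machinery already developed for Prop.~\ref{prop:main}: the squared per-coordinate error-difference $(e(a+\xi)-e(a'+\xi))^2$ is a bounded (by $\delta^2$) function of $\xi$, and its expectation over $\xi\sim\cl U([0,\delta])$ is a function only of $a-a'$ that is itself $\lesssim \delta\min(|a-a'|,\delta)$. After summing and averaging, Bernstein's inequality (for bounded summands) gives that $\tinv m\sum_i(e(a+\xi_i)-e(a'+\xi_i))^2$ deviates from its $\xi$-expectation by at most $O(\delta^2\epsilon)$ with probability $1 - C\exp(-cm\epsilon^2)$, for a fixed pair $(\bs a,\bs a')$. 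The $\xi$-expectation is in turn $\lesssim \tfrac{\delta}{m}\|\bs\Phi(\bs x - \bs x')\|_1 \lesssim \delta\|\bs x - \bs x'\|_2$, so it is absorbed into the $\delta s$ term. To pass from a fixed pair to all of $\cl K$, one covers $\cl K$ by an $\eta$-net in the $\ell_q=\ell_2$ metric, whose cardinality is $\exp(\cl H_2(\cl K,\eta))$, takes a union bound — which forces $m\gtrsim\epsilon^{-2}\cl H_2(\cl K,\eta)$ — and then extends to all of $\cl K$ by the soft-continuity argument of Lemma~\ref{lem:cont-smald-t} applied to the relevant soft versions of both the squared-distance and the $\ell_1$-distance functionals. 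Tracking how the net radius $\eta$ must be chosen so that the perturbation incurred on each functional stays below $O(\delta^2\epsilon)$ — combined with the RIP conversion $\eta\mapsto$ ($\ell_1$-radius of $\bs\Phi(\text{net})$) — is what pins down the choice $\eta \simeq \delta\epsilon^{3/2}$ in~\eqref{eq:cond-on-m-distorted-main-prop}: the extra half-power of $\epsilon$ compared to Prop.~\ref{prop:main} reflects that the error functional here scales like $\delta^2$ rather than $\delta$, so the net must be $\sqrt\epsilon$ times finer to keep the covering-induced slack at the target level $\delta^2\epsilon$.

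I expect the main obstacle to be the continuity step: unlike in Prop.~\ref{prop:main}, the functional being controlled is the squared $\ell_2$-distance of quantized vectors, and the clean telescoping identity $\bb E_u|\lfloor a+u\rfloor - \lfloor a'+u\rfloor| = |a-a'|$ has no exact analogue for the square. One must therefore carefully set up a \emph{two-sided soft bracketing} of $\tinv m\|\cl Q(\bs a + \bs\xi) - \cl Q(\bs a' + \bs\xi)\|_2^2$ by soft functionals $\cl D^{t}$ that are monotone and Lipschitz in a shift parameter $t$ (in the spirit of the $d^t$ of Lemma~\ref{lem:cont-smald-t} but quadratic), verify that the gap between the $t=+r_0$ and $t=-r_0$ brackets is $O(\delta r_0)$ per coordinate (hence $O(\delta r_0 + \delta^2)$ after squaring and averaging), and only then union-bound over the net. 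Handling the cross term's sign and making sure the $\delta\|\bs x - \bs x'\|_2$ contribution from it, from the expectation of the error term, and from the bracketing gap all collapse into a single $\delta s$ in $\rho$ is bookkeeping, but it is the place where one must be careful not to lose a factor of $\sqrt m$. Everything else is a routine adaptation of the argument already spelled out for Prop.~\ref{prop:main}.
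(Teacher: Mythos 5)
Your plan is sound and follows the same overall architecture as the paper (per-pair concentration, $\eta$-net, union bound, extension by a softened-distance continuity argument), but your per-pair analysis takes a genuinely different and more elementary route. The paper works directly with the softened squared distance $\cl D_2^t(\bs a,\bs a')=\tinv m\sum_i (d^t(a_i,a_i'))^2$: it computes two-sided bounds on $\bb E (d^t(a+\xi,a'+\xi))^2$ (Lemma~\ref{lem:expect-ellp-soften-dist-bounds}), shows the resulting bias from $|a-a'|^2$ is of order $(\delta+|t|)\,\cl D_{\ell_2}^{1/2}$, and then proves sub-Gaussian concentration of $\cl D_2^t$ around $\cl D_{\ell_2}(\bs a,\bs a')$ with a deviation proportional to $(\delta+|t|)(\cl D_{\ell_2}^{1/2}+\delta+|t|)$ (Lemma~\ref{lem:non-uniform-concent-ellp-soften-dist}), before invoking the quadratic continuity Lemma~\ref{lem:continuity-L2-Dt2}. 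You instead expand $(\cl Q(a+\xi)-\cl Q(a'+\xi))^2$ exactly into $(a-a')^2$ plus a cross term plus a squared quantization-error term, dispatch the linear term by the RIP, bound the cross term deterministically by $2\delta\tinv m\|\bs a-\bs a'\|_1\lesssim\delta\|\bs x-\bs x'\|$ (valid for \emph{all} pairs, so it needs no net), and reserve concentration (Hoeffding on summands bounded by $\delta^2$) for the error term alone; this isolates the source of the $\delta s$ and $\delta^2\epsilon$ contributions more transparently than the paper's direct sub-Gaussian bound, and your computation $\bb E(e(a+\xi)-e(a'+\xi))^2=r(\delta-r)\le\delta\min(|a-a'|,\delta)$ with $r=(a-a')\bmod\delta$ is correct. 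You rightly flag the continuity step as the real work: since $(e(a+\xi)-e(a'+\xi))^2=(d^0(a+\xi,a'+\xi)-|a-a'|)^2$, the quadratic bracketing you describe is exactly what the paper's Lemma~\ref{lem:continuity-L2-Dt2} provides (difference of squares bounded by sum times difference, with the few ``bad'' coordinates contributing $\lesssim\delta^2/P$), so nothing fails there. One small correction to your intuition for $\eta\simeq\delta\epsilon^{3/2}$: relative to Prop.~\ref{prop:main} the net is actually \emph{coarser}, not finer; the exponent $3/2$ comes from the fact that the perturbation is controlled in $\ell_2$ rather than $\ell_1$, so the softening shift is $\eta\sqrt P$ instead of $\eta P$, and with $P^{-1}=\epsilon$ one needs $\eta\sqrt P=\delta\epsilon$, i.e.\ $\eta=\delta\epsilon^{3/2}$.
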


The proof architecture of Prop.~\ref{prop:distorted-main},
developed in Sec.~\ref{sec:proof-distorted-embedding}, is similar to the
one of Prop.~\ref{prop:main} except that special care must be given to
the quadratic nature of $\Demb = \cl D_{\ell_2}$ and of the softened
version of $\cl D_{\ell_2}(\Amapd(\cdot),\Amapd(\cdot))$. 
 
Notice that the cause of such a strongly distorted embedding can be traced back to the expectation of
$\bb E_{\xi} |\cl Q(a + \xi) - \cl Q(a' + \xi)|^2$ for $a,a' \in \bb R$,
which determines the expectation of $\tinv{m} \|\Amap(\bs x) - \Amap(\bs
x')\|^2_2$ by separability of $\cl D_{\ell_2}$. Indeed, if $|a-a'|<\delta$, the random variable $X :=
\tinv{\delta}|\cl Q(a + \xi) - \cl Q(a' + \xi)|\in \{0,1\}$ is
binary, \ie $\bb E X^q = \bb E X$ for all $q > 0$. Thus $\bb E_{\xi} |\cl Q(a + \xi) - \cl Q(a' + \xi)|^2 = \delta \bb
E_{\xi} |\cl Q(a + \xi) - \cl Q(a' + \xi)| = \delta |a - a'|$, which
clearly deviates from the quadratic dependence in $|a-a'|$ reached by the lower bound $\bb E_{\xi} |\cl
Q(a + \xi) - \cl Q(a' + \xi)|^2 \geq (\bb E_{\xi} |\cl
Q(a + \xi) - \cl Q(a' + \xi)|)^2 = |a - a'|^2$ (using Jensen and~\eqref{eq:basic-dith-quant-prop}). In
this view, $\Demb(\Amap(\cdot), \Amap(\cdot))$ cannot escape a
strong deviation with respect to the quadratic metric $\cl D_{\ell_2}(\cdot, \cdot)$ in the asymptotic regime
(\ie for large values of $m$) when pairwise signal distances are small
compared to~$\delta$. 

This change of regime between small- and large-distance embeddings also
meets similar behaviors described in a version of the Johnson-Lindenstrauss lemma
for the quantized embedding of finite sets by the same non-linear map in~\eqref{eq:quant-embed-def}~\cite[Sec. 5]{jacques2015quantized} or in
more complex non-linear random embeddings as observed in
\cite{boufounos2015representation} (see Sec.~\ref{sec:comp-with-other}).

\paragraph{Bi-dithered quantized maps:}
As a last result of this paper, we provide a novel and simple way
to extend the dithering procedure of~\eqref{eq:quant-embed-def} in
order to improve the distortion properties of the quantized embedding
of Prop.~\ref{prop:distorted-main} and yet preserve the inheritance of the $(\ell_2,\ell_2)$-RIP of the linear map $\bs \Phi$. This new scheme relies on a \emph{bi-dithered quantized map}
which, up to an implicit doubling of the number of measurements, is able to remove the strong distortion of the
previous $(\cl D_{\ell_2},\ell_2)$-quantized embedding, \ie reducing it from
$\rho \lesssim \delta s + \delta^2\epsilon$  to
$\rho \lesssim \delta^2\epsilon$. 

The best way to introduce it is to first generalize~\eqref{eq:quant-embed-def} to a matrix
map over $\bb R^{n \times 2}$, \ie we now define $\Amapd: \bs X \in \bb R^{n\times 2} \to \bb Z_\delta^{m
  \times 2}$ such that
\begin{equation}
  \label{eq:bi-dithered-mapping}
  \Amapd(\bs X) = \cl Q( \bs \Phi \bs X + \bs \Xi),  
\end{equation}
where $\cl Q$ 
is applied entrywise on its matrix argument, $\bs \Phi \in \bb
R^{m\times n}$ is the linear part of the map $\Amapd$ and the dither $\bs \Xi \sim \cl U^{m \times
  2}([0,\delta])$ is now composed of two independent columns. 

To characterize this embedding we proceed by introducing  the operation $(\cdot)^\circ:\bs B \in \bb
R^{m\times d} \to \bs B^\circ \in \bb R^m$ such that 
$(\bs B^\circ)_i = \Pi_{j=1}^d B_{ij}$ together with the pre-metric $\|\bs B\|_{1,\circ} := \|\bs
B^\circ\|_{1}$, which is a norm only for $d=1$. It is easy to see
from the independence of the columns of $\bs \Xi$ that 
$$
\bb E_{\bs \Xi} \|\Amapd(\bs X) - \Amapd(\bs X')\|_{1,\circ} = \|\bs \Phi(\bs X
- \bs X')\|_{1,\circ}.
$$    
Indeed, from the independence of the entries of $\bs \Xi$ and
  from~\eqref{eq:basic-dith-quant-prop} we find
  \begin{align*}
\ts \bb E_{\bs \Xi} \|\cl Q( \bs \Phi \bs X + \bs \Xi) - \cl Q( \bs \Phi
\bs X' + \bs \Xi)\|_{1,\circ}&\ts = \sum_i \Pi_{j=1}^d \bb E_{\bs \Xi}
                               |\cl Q((\bs \Phi \bs X)_{ij} +
                               \Xi_{ij}) - \cl Q((\bs \Phi \bs
                               X')_{ij} + \Xi_{ij})|\\
&\ts= \sum_i \Pi_{j=1}^d  |\bs \Phi (\bs X - \bs X')_{ij}| = \| \bs
  \Phi (\bs X - \bs X')\|_{1,\circ}.        
  \end{align*}
In particular, using the compact notation $\bar{\bs u} := \bs u \bs
1_2^\transp \in \bb R^{n \times 2}$ for any vector $\bs u \in \bb
R^n$, and defining the quantized map 
$$
\bar{\Amapd}:\ \bs x \in \bb R^n\ \mapsto\ \bar{\Amapd}(\bs x) := \Amapd(\bar{\bs x}) \in
\bb Z_\delta^{m \times 2},
$$
we have  
$$
\bb E_{\bs \Xi} \|\bar{\Amapd}(\bs x) - \bar{\Amapd}(\bs x')\|_{1,\circ} = \|\bs \Phi(\bs x
- \bs x')\|^2.
$$  
Therefore, up to the identification $\bb
Z_\delta^{m \times 2} \cong \bb
Z_\delta^{2m}$, the map $\bar{\Amapd}$ amounts to doubling the number of
measurements $m$, with the use of two random dithers (\ie the columns of $\bs \Xi$) per action of a
single $\bs \Phi$ on $\bs x$, compared to the single dither of the
initial map~\eqref{eq:quant-embed-def}. As described below this is, however, highly beneficial for reducing the
distortion of the quantized embedding of $(\cl K,
\ell_2)$ in $(\bb Z^{m \times 2}_\delta, \cl D_{\circ})$ with $\cl
D_{\circ}(\cdot) := \tinv{m}\|\cdot\|_{1,\circ}$, with this improvement being achieved with
high probability by the map
$\bar{\Amapd}$ once $\bs \Phi$ has the $(\ell_2,\ell_2)$-RIP. 
\begin{proposition}[$(\ell_2,\ell_2)$-RIP involves bi-dithered quantized embedding]
\label{prop:bi-dithered-embedding}
Let us fix $\epslin,\epsilon \in (0,1)$. If $\bs \Phi \in \bb R^{m \times n}$
  respects the $(\ell_2,\ell_2)$-RIP$(\cl K - \cl K, \epslin)$ and  
  \begin{equation}
    \label{eq:cond-on-m-bidither-prop}
\ts    m \gtrsim \epsilon^{-2} \cl H_2(\cl K, \delta \epsilon^{2}),    
  \end{equation}
then, with probability exceeding $1-C \exp(-c m\epsilon^2)$ for some
$C,c>0$, the quantized random embedding $\bar{\Amapd}$ satisfies the
$(\Demb, \ell_2)$-QRIP$(\cl K, \epslin + c\epsilon, \rho)$ with $\Demb
= \cl D_{\circ}$ (\ie $\pE = 2$) and $\rho(\epsilon,s) \lesssim
\delta^2\epsilon$. In other words, under the same conditions and with
the same probability, 
\begin{equation}
  \label{eq:bi-dithered-embedding}
  (1-\epslin - c\epsilon) \|\bs x - \bs x'\|^2 - c' \epsilon
  \delta^2 \leq \tinv{m}\| \bar{\Amapd} (\bs x) - \bar{\Amapd} (\bs
  x')\|_{1,\circ} \leq (1+\epslin + c\epsilon) \|\bs x - \bs x'\|^2 + c' \epsilon
  \delta^2,
\end{equation}
for all $\bs x,\bs x' \in \cl K$ and some $c,c'>0$.
\end{proposition}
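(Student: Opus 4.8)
\emph{Proof plan.} The plan is to establish, with probability at least $1-C\exp(-cm\epsilon^2)$ and uniformly over $\bs x,\bs x'\in\cl K$, the central estimate
\begin{equation}
\ts \Big|\,\tinv m\big\|\bar{\Amapd}(\bs x)-\bar{\Amapd}(\bs x')\big\|_{1,\circ}-\tinv m\|\bs\Phi(\bs x-\bs x')\|^2\,\Big|\ \lesssim\ \epsilon\,\|\bs x-\bs x'\|^2+\epsilon\,\delta^2,
\end{equation}
and then to invoke the $(\ell_2,\ell_2)$-RIP$(\cl K-\cl K,\epslin)$ (recall $\mu_\Phi=1$), which yields $(1-\epslin)\|\bs x-\bs x'\|^2\leq\tinv m\|\bs\Phi(\bs x-\bs x')\|^2\leq(1+\epslin)\|\bs x-\bs x'\|^2$ since $\bs x-\bs x'\in\cl K-\cl K$. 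Summing the two bounds inflates the multiplicative factor from $\epslin$ to $\epslin+c\epsilon$ and keeps the additive term at $\rho(\epsilon,s)\lesssim\epsilon\delta^2$, which is precisely~\eqref{eq:bi-dithered-embedding}. The feature that makes the additive distortion \emph{avoid} the $\delta\|\bs x-\bs x'\|_2$ term of Proposition~\ref{prop:distorted-main} is the exact pointwise form of~\eqref{eq:basic-dith-quant-prop}: writing $d(a,a';\xi):=|\cl Q(a+\xi)-\cl Q(a'+\xi)|$ and $N:=|a-a'|/\delta$, one has $d(a,a';\xi)=\delta(\lfloor N\rfloor+B)$ for $\xi\sim\cl U([0,\delta])$, with $B$ Bernoulli of parameter $N-\lfloor N\rfloor$. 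Since the two columns of $\bs\Xi$ are independent, setting $\bs a=\bs\Phi\bs x$, $\bs a'=\bs\Phi\bs x'$ and $N_i:=|a_i-a'_i|/\delta$ gives $\tinv m\|\bar{\Amapd}(\bs x)-\bar{\Amapd}(\bs x')\|_{1,\circ}=\tinv m\sum_{i\leq m}\delta^2(\lfloor N_i\rfloor+B_{i1})(\lfloor N_i\rfloor+B_{i2})$ with $B_{i1},B_{i2}$ independent Bernoulli$(N_i-\lfloor N_i\rfloor)$, whose $\bs\Xi$-mean equals $\tinv m\sum_i\delta^2N_i^2=\tinv m\|\bs\Phi(\bs x-\bs x')\|^2$ for \emph{every} pair, with no small/large-distance dichotomy.

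\emph{Pointwise concentration.} Fix $\bs x,\bs x'\in\cl K$. The $i$-th summand above, recentred, is bounded by $C\delta^2(2\lfloor N_i\rfloor+1)\leq C'(\delta|a_i-a'_i|+\delta^2)$ and the summands are independent, so Hoeffding's inequality bounds $\bb P\big[\big|\tinv m\|\bar{\Amapd}(\bs x)-\bar{\Amapd}(\bs x')\|_{1,\circ}-\tinv m\|\bs\Phi(\bs x-\bs x')\|^2\big|>u\big]$ by $2\exp\!\big(-c\,m\,u^2/(\tinv m\sum_i(\delta|a_i-a'_i|+\delta^2)^2)\big)$ for any $u>0$. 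Using the RIP to bound $\tinv m\sum_i|a_i-a'_i|^2=\tinv m\|\bs\Phi(\bs x-\bs x')\|^2\leq(1+\epslin)\|\bs x-\bs x'\|^2$, the denominator is $\lesssim\delta^2\|\bs x-\bs x'\|^2+\delta^4$; choosing $u\asymp\epsilon(\delta\|\bs x-\bs x'\|+\delta^2)\lesssim\epsilon\|\bs x-\bs x'\|^2+\epsilon\delta^2$ (AM--GM on $\epsilon\delta\|\bs x-\bs x'\|$) makes this probability $\leq2\exp(-c''m\epsilon^2)$. This is the central estimate for a single pair.

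\emph{Softening and covering.} To make the central estimate uniform over $\cl K$ I would re-use the softening apparatus of the proof of Proposition~\ref{prop:main}: replace $d$ by its softened counterpart $d^t$, form the bi-dithered softened pre-metric $\cl D_\circ^{t}(\bs a,\bs a'):=\tinv m\sum_i d^{t}(a_i,a'_i;\Xi_{i1})\,d^{t}(a_i,a'_i;\Xi_{i2})$ (each factor is nonnegative by construction, so the product stays monotone in $t$), and invoke Lemma~\ref{lem:cont-smald-t}, which for $|r|,|r'|\leq r_0$ gives $d^{t+r_0}(a,a')\leq d^{t}(a+r,a'+r')\leq d^{t-r_0}(a,a')$ and hence, by nonnegativity of both factors, $\cl D_\circ^{r_0}(\bs\Phi\bs u,\bs\Phi\bs u')\leq\tinv m\|\bar{\Amapd}(\bs x)-\bar{\Amapd}(\bs x')\|_{1,\circ}\leq\cl D_\circ^{-r_0}(\bs\Phi\bs u,\bs\Phi\bs u')$ whenever $\|\bs\Phi(\bs x-\bs u)\|_\infty,\|\bs\Phi(\bs x'-\bs u')\|_\infty\leq r_0$. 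I would take an $(\ell_2,\eta)$-net $\cl K_{\ell_2,\eta}\subset\cl K$ with $\eta\asymp\delta\epsilon^2$, so that its image through $\bs\Phi$ is an $(\ell_2,\eta')$-net of $\bs\Phi\cl K$ with $\eta'\leq\sqrt{1+\epslin}\,\eta$ (RIP), and set $r_0:=\eta'$, which bounds the coordinatewise shifts $\|\bs\Phi(\bs x-\bs u)\|_\infty$ for $\bs u$ the nearest net point to $\bs x$. A union bound of the pointwise estimate (applied to $\cl D_\circ^{\pm r_0}$ in place of $\cl D_\circ^{0}$) over the $\exp(2\cl H_2(\cl K,\eta))$ pairs of net points succeeds provided $m\gtrsim\epsilon^{-2}\cl H_2(\cl K,\delta\epsilon^2)$, i.e.~\eqref{eq:cond-on-m-bidither-prop}. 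It then remains to control the bias $|\bb E_{\bs\Xi}\cl D_\circ^{\pm r_0}(\bs\Phi\bs u,\bs\Phi\bs u')-\tinv m\|\bs\Phi(\bs u-\bs u')\|^2|$ and the covering error $|\,\|\bs u-\bs u'\|^2-\|\bs x-\bs x'\|^2\,|\lesssim\eta\|\bs x-\bs x'\|+\eta^2$: softening by $r_0$ alters each of the $\lesssim N_i+1$ relevant threshold counts only when a lattice point lands in a window of total measure $\asymp r_0$ per period, so the bias is $\lesssim r_0\,\tinv m\sum_i(|a_i-a'_i|+\delta)\lesssim r_0(\sqrt{\tinv m\|\bs\Phi(\bs u-\bs u')\|^2}+\delta)\lesssim\epsilon^2(\|\bs x-\bs x'\|^2+\delta^2)$, and the covering error is of the same order; every residual is thus $\lesssim\epsilon\|\bs x-\bs x'\|^2+\epsilon\delta^2$, and letting $r_0\to0$ (i.e. $t=0$) at the end recovers the central estimate, hence~\eqref{eq:bi-dithered-embedding}.

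\emph{Main obstacle.} The RIP step and the pointwise Hoeffding bound are routine. The delicate point is to drive the softening machinery of Proposition~\ref{prop:main} through the \emph{product} pre-metric $\|\cdot\|_{1,\circ}$: besides confirming that $\cl D_\circ^{t}$ inherits the monotone continuity of Lemma~\ref{lem:cont-smald-t}, one must verify that \emph{both} the bias $\bb E_{\bs\Xi}[\cl D_\circ^{r_0}-\cl D_\circ^{-r_0}]$ and its fluctuations remain of order $\epsilon\delta^2+\epsilon\|\bs x-\bs x'\|^2$ even though the summands of $\cl D_\circ^{t}$ are products of two random variables whose ranges grow like $(\delta+|a_i-a'_i|)^2$. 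The telescoping $d_1^{r_0}d_2^{r_0}-d_1^{-r_0}d_2^{-r_0}=(d_1^{r_0}-d_1^{-r_0})d_2^{r_0}+d_1^{-r_0}(d_2^{r_0}-d_2^{-r_0})$ reduces this to sums of products of a \emph{small} factor ($\lesssim\delta$, nonzero on a set of $\xi$-probability $\lesssim r_0/\delta$) and a \emph{pair-dependent} one ($\lesssim\delta+|a_i-a'_i|$), to which a Bernstein inequality with the variance proxy $\lesssim\delta^2\|\bs x-\bs x'\|^2+\delta^4$ computed above applies; the bookkeeping needed to keep all these pair-dependent ranges tamed by the RIP bound $\tinv m\|\bs\Phi(\bs x-\bs x')\|^2\simeq\|\bs x-\bs x'\|^2$, so that leftovers land either in $c'\epsilon\delta^2$ or in the inflated multiplicative factor $\epslin+c\epsilon$, is the real work.
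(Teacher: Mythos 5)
Your overall architecture coincides with the paper's: establish concentration of $\tinv m\|\bar{\Amapd}(\bs x)-\bar{\Amapd}(\bs x')\|_{1,\circ}$ around $\tinv m\|\bs\Phi(\bs x-\bs x')\|^2$, make it uniform via a softened pre-metric and an $(\ell_2,\delta\epsilon^2)$-net, then apply the $(\ell_2,\ell_2)$-RIP and absorb the cross term $\epsilon\delta\|\bs x-\bs x'\|$ by AM--GM into the multiplicative factor. Your pointwise step is correct and in fact cleaner than the paper's: the representation $d(a,a';\xi)=\delta(\lfloor N\rfloor+B)$ with $B$ Bernoulli of parameter $N-\lfloor N\rfloor$ immediately yields the unbiasedness of the product over two independent dithers and the range bound $2\delta|a_i-a'_i|+\delta^2$ needed for Hoeffding, whereas the paper routes this through sub-Gaussian norms and a separate lemma (Lemma~\ref{lem:concent-prod-vars}) on products of bounded independent variables; both give the same $\exp(-cm\epsilon^2)$ tail.

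The genuine gap is in the uniformization step. You set $r_0:=\eta'\asymp\delta\epsilon^2$ and claim that this ``bounds the coordinatewise shifts $\|\bs\Phi(\bs x-\bs u)\|_\infty$.'' It does not: the $(\ell_2,\ell_2)$-RIP only gives $\|\bs\Phi(\bs x-\bs u)\|_2\leq\sqrt{(1+\epslin)m}\,\eta$, so individual coordinates of $\bs\Phi(\bs x-\bs u)$ can be as large as $\sqrt m\,\eta$, and Lemma~\ref{lem:cont-smald-t} requires a per-coordinate bound. Taking $r_0\asymp\sqrt m\,\eta$ to make the $\ell_\infty$ claim true would make your bias estimate $\lesssim r_0(\cdot)$ grow with $m$. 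This is precisely what the paper's Lemma~\ref{lem:continuity-L2-Dt2-matrix-version} repairs: for each column it splits $[m]$ into the set $T_j$ of coordinates with $|r_{ij}|,|r'_{ij}|\leq\eta\sqrt P$ (Markov gives $|T_j^\compl|\leq 2m/P$), applies the scalar continuity with shift $\eta\sqrt P$ on $T_j$, and pays crude per-coordinate penalties on $T_j^\compl$; with $P=\epsilon^{-2}$ the resulting extra terms are $\lesssim(\eta+\delta/\sqrt P)\tinv{\sqrt m}\|\bs A-\bs A'\|_F+(\delta+|t|+\eta)\eta+\delta(\delta+|t|)/P$, all of which land in $\epsilon\|\bs x-\bs x'\|^2+\epsilon\delta^2$ after AM--GM. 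Your ``main obstacle'' paragraph identifies the right telescoping for the product pre-metric, but treats the smallness of $d^{r_0}-d^{-r_0}$ as an event over the dither of probability $\lesssim r_0/\delta$, which would require an additional concentration argument you do not supply (essentially the $P_T$-style control of \cite{boufounos2015representation} that the paper deliberately avoids); the paper's mechanism is instead a deterministic coordinate count. With Lemma~\ref{lem:continuity-L2-Dt2-matrix-version} substituted for your $\ell_\infty$ claim, the rest of your plan goes through.
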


The proof of this proposition, postponed to
Sec. \ref{sec:bi-dith-quant}, actually starts by showing the existence
of a QRIP with multiplicative distortion $\epslin$ and additive
distortion $\rho\lesssim \delta^2 \epsilon + \delta\epsilon s$ for $\epsilon \in (0,1)$. However, $\delta\epsilon s \lesssim \epsilon s^2 + \delta^2\epsilon$ and the first term of this bound can be gathered with the multiplicative distortion term in the QRIP, which explains~\eqref{eq:bi-dithered-embedding}. Fixing $\epsilon = \epslin$ and rescaling $\epsilon$ shows then that we can inherit from the $(\ell_2,\ell_2)$-RIP$(\cl K - \cl K, \epsilon)$ of $\bs \Phi$ the $(\Demb, \ell_2)$-QRIP$(\cl K, \epsilon, c \delta^2 \epsilon)$ of the bi-dithered mapping $\Amap$ for some $c>0$.

\section{Portability, Discussion and Perspectives}
\label{sec:discussion}
\begin{table}[!t]
  \centering
\scriptsize
  \begin{tabular}{@{}p{3.7cm}|@{}p{1cm}@{}|@{\,}p{1.9cm}@{\,}|@{\ }p{2.2cm}@{\,}|@{\ }p{1.8cm}@{\ }|@{\ }p{1.6cm}@{\ }|@{\ }p{1cm}@{\ }|@{\ }p{.5cm}}
    {\bf Name}&{\bf ~Ref.}&\centering{\bf Linear embedding}&{\bf
                                                            Embeddable
                                                            set $\cl
                                                            K$}&{\bf
                                                                 Encoding
                                                                 complexity}
&{\bf QRIP}&$\rho \lesssim$&\scalebox{0.8}{\bf Prop.}\newline\#\\
    \hline
    Random sub-Gaussian\newline ensembles&\cc\cite{mendelson2008uniform,klartag2005empirical}&\centering$(\ell_2,\ell_2)$&SGMW
                                                                                  in
                                                                                  $\bb
                                                                                  S^{n-1}$&$O(mn)$&$(\cl
                                                                                                    D_{\ell_2},\ell_2)^*$
           \newline or $(\cl D_{\circ},\ell_2)^*$&$\delta s + \delta^2
                                                   \epsilon$\newline $\delta^2
                                                   \epsilon$&\ref{prop:distorted-main},
    \newline \ref{prop:bi-dithered-embedding}\\
&&&&&&\\[-3mm]
\hline
&&&&&&\\[-3mm]
Random Gaussian\newline \vspace{-3mm}ensembles &\cc\cite{schechtman2006two,plan2014dimension}&\centering$(\ell_1,\ell_2)$&SGMW
                                                                             in
                                                                             $\bb
                                                                             S^{n-1}$&$O(mn)$&$(\cl
                                                                                               D_{\ell_1},
                                                                                               \ell_2)^*$&$\delta
                                                                                                           \epsilon$&\ref{prop:main}\vspace{1mm}\\ 
\cline{2-8}
&&&&&\multicolumn{2}{c}{}\\[-3mm]
    &\cc\cite{jacques2011dequantizing}&\centering$(\ell_p,\ell_2)$,\ $p
                                    \geq 1$&$\Sigma^{\rm
                                             any}_{s,n}$&$O(mn)$&\multicolumn{3}{@{\,}p{4cm}@{}}{Potential
                                                                  Multi-dithered\newline
                                                                  extension
    of Prop.~\ref{prop:bi-dithered-embedding}}\\[1mm]
&&&&&&\\[-3mm]
\hline
&&&&&&\\[-3mm]
    Random Orthonormal Basis
    ensembles&\cc\cite{rudelson2008sparse,foucart2013mathematical}&\centering$(\ell_2,\ell_2)$&$\Sigma^{\rm
                                                                                             lc}_{s,n}$&$O(n
                                                                                                         \log
                                                                                                         n)$&$(\cl
                                                                                                    D_{\ell_2},\ell_2)$
           \newline or $(\cl D_{\circ},\ell_2)$&$\delta s + \delta^2
                                                   \epsilon$\newline $\delta^2
                                                   \epsilon$&\ref{prop:distorted-main},
                                                                                                               \newline \ref{prop:bi-dithered-embedding}\\
&&&&&&\\[-3mm]
\hline
&&&&&&\\[-3mm]
Random
    convolutions&\cc\cite{rauhut2012restricted,romberg2009compressive}&\centering$(\ell_2,\ell_2)$&$\Sigma_{s,n}$&$O(n
                                                                                                                \log
                                                                                                                n)$&$(\cl
                                                                                                    D_{\ell_2},\ell_2)$
           \newline or $(\cl D_{\circ},\ell_2)$&$\delta\epsilon$&\ref{prop:distorted-main},
                                                                                                                      \newline \ref{prop:bi-dithered-embedding}\\
    \hline
&&&&&&\\[-3mm]
    Spread-spectrum&\cc\cite{puy2012universal}&\centering$(\ell_2,\ell_2)$&$\Sigma^{\rm
                                                                  any}_{s,n}$
                                                                  \whp&$O(n
                                                                        \log
                                                                        n)$&$(\cl
                                                                                                    D_{\ell_2},\ell_2)$
           \newline or $(\cl D_{\circ},\ell_2)$&$\delta s + \delta^2
                                                   \epsilon$\newline $\delta^2
                                                   \epsilon$&\ref{prop:distorted-main},
                                                                              \newline \ref{prop:bi-dithered-embedding}\\
&&&&&&\\[-3mm]
\hline
&&&&&&\\[-3mm]
 Multiresolution \scalebox{0.8}{$(\ell_2,\ell_2)$}-RIP matrix with
    random sign flipping&\cc\cite{oymak2015isometric}&\centering$(\ell_2,\ell_2)$&SGMW in
                                                                $\bb
                                                                S^{n-1}$&$O(nm)$\newline
or $O(n
                                                                        \log
                                                                        n)$&$(\cl
                                                                                                    D_{\ell_2},\ell_2)^*$
           \newline or $(\cl D_{\circ},\ell_2)^*$&$\delta s + \delta^2
                                                   \epsilon$\newline $\delta^2
                                                   \epsilon$&\ref{prop:distorted-main},
                                                                                     \newline \ref{prop:bi-dithered-embedding}\\
&&&&&&\\[-3mm]
\hline
&&&&&&\\[-3mm]
    Unbalanced expanders&\cc\cite{berinde2008combining}&\centering
                                                      $(\ell_p,
                                                      \ell_p)$,
                                                      with\newline
                                                      \scalebox{0.7}{$
                                                      1 \leq p \leq 1 + O(\frac{1}{\log
                          n})$}&$\Sigma_{s,n}$&$O(n
                                                  \log \tfrac{n}{s})$&$(\cl
                                                                                                    D_{\ell_1},\ell_1)$&$\delta
                  \epsilon$&\ref{prop:main}\\
&&&&&&\\[-3mm]
\hline
&&&&&&\\[-3mm]
Rank-$1$ random\newline 
sub-Gaussian measurements
of rank$-r$ matrices&\cc\cite{puy2015recipes,cai2015rop}&\centering $(\ell_1,\ell_F)$&$\cl R_{r}$&$O(mn)$&$(\cl D_{\ell_1},\ell_F)$&$\delta
                  \epsilon$&\ref{prop:main} \& Cor.~\ref{cor:qrip-rank-one}\\
  \end{tabular}
  \caption{Non-exhaustive list of quantized embeddings inherited from
    RIP matrix constructions. The name of the construction and its
    references, the linear embedding type, the sets that are
    embeddable by this linear embedding, the linear encoding
    complexity, the type of QRIP inherited by this linear embedding
    and the corresponding propositions in this paper are provided. In this table, SGMW means
  sets with Small Gaussian Mean Width (before $n$), $\Sigma^{\rm
    any}_{s,n}$ stands for $s$-sparse in any basis (\ie \emph{universal}
  sensing), and $\Sigma^{\rm
    lc}_{s,n}$ means $s$-sparse in a basis with low mutual coherence
  with respect to the basis used for sensing (see cited
  references). $(^*)$: Only if $\cl K$ is
  homogeneous.}
  \label{tab:Non-exhausitve-RIP-matrix-construction}
\end{table}

The main interest of the three propositions introduced in
Sec.~\ref{Main-results} is to connect the existence of quantized
embeddings of low-complexity vector sets to the one of linear embeddings
of the same sets, as derived from specific instances of the RIP. We
find it useful to emphasize in this section the portability of these
results, linking them to efficient matrix constructions satisfying the RIP (\eg with fast matrix-vector multiplication), as well
as highlighting both their connection with existing works and a few
related open
problems. 

\subsection{Portability of the results}
\label{sec:portability-results}

We provide below a brief summary of interesting RIP constructions
that prove themselves useful for at least one of our three main
propositions. The connection between these constructions, their
characteristics and the links with our main results is also summarized in
Table~\ref{tab:Non-exhausitve-RIP-matrix-construction}. Moreover, we
stress below important aspects of quantized random embeddings, such as the achievability
of fast 1-bit embeddings. 

\paragraph{(a) Random sub-Gaussian ensembles and $(\ell_2,\ell_2)$-RIP inheritance:} The first constructions of
linear maps known to respect the RIP with high probability, which
date back to the randomized maps used to prove the JL Lemma
\cite{dasgupta1999elementary,achlioptas2003database}, were Random
Sub-Gaussian Ensembles (RSGE). These are associated to a linear map $\bs \Phi
\in \bb R^{m\times n}$ with $\Phi_{ij} \sim_{\iid} X$, with $X$ a sub-Gaussian random variable with unit variance and zero expectation, such as $X \sim \cl N(0,1)$ or a Bernoulli random
variable $X \sim \cl B(\{\pm 1\})$ with $\bb P(X=\pm 1)=1/2$.   

It was rapidly established in the CS literature that, provided $m
\gtrsim \epslin^{-2}\,s \log n/s$, such a matrix $\bs \Phi$ respects the $(\ell_2,
\ell_2)$-RIP$(\bs \Psi\Sigma_{s,n}, \epslin)$ with high probability
and for any orthonormal basis (ONB) $\bs \Psi \in \bb R^{n \times n}$,
\ie RSGE are \emph{universal} in that sense~\cite{candes2005decoding}.
Hence, RSGE is of course a first class of matrix constructions that can
be used for inducing dithered or bi-dithered quantized embeddings of $\bs \Psi\Sigma_{s,n}$, as
stated in Prop.~\ref{prop:distorted-main} and
Prop.~\ref{prop:bi-dithered-embedding}, respectively. 

However, it is also observed
in~\cite{mendelson2008uniform,klartag2005empirical} that for any set
$\cl K \subset \bb S^{n-1}$, we have, with probability exceeding $1-\exp(-c \epslin^2m)$,
\begin{equation}
  \label{eq:RIP2-embed-SLGW}
\sup_{\bs u \in \cl K} |\tinv{m}\|\bs\Phi \bs u\| - 1| \leq \epslin,  
\end{equation}
provided $m \gtrsim \epslin^{-2} w(\cl K)^2$, with $w(\cl K)$ the
Gaussian mean width defined in Sec.~\ref{Main-results}. This provides the following easy corollary. 
  \begin{corollary}
\label{cor:rip22-RGE}
    Let $\cl K$ be a \emph{homogeneous} subset of  $\bb R^n$, \ie $\lambda \cl K
\subset \cl K$ for all $\lambda \geq 0$, and define $\cl K^* := {\cl K \cap \bb
B^n}$. Given $\epslin \in (0,1)$, if 
\begin{equation}
  \label{eq:RIP22-cond-m}
  m \gtrsim \epslin^{-2}
  w(\cl K^*)^2,
\end{equation}
then, with probability exceeding $1-\exp(-c \epslin^2m)$, a RSGE matrix $\bs \Phi \in \bb R^{m\times n}$ respects the $(\ell_2,
\ell_2)$-RIP$(\cl K - \cl K, \epslin)$. 
  \end{corollary}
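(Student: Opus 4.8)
The plan is to reduce the $(\ell_2,\ell_2)$-RIP over $\cl K-\cl K$ to the norm-preservation statement~\eqref{eq:RIP2-embed-SLGW} applied to a suitably normalized set. First I would observe that, since $\cl K$ is homogeneous, so is the difference set $\cl K-\cl K$: indeed for $\bs x,\bs x'\in\cl K$ and $\lambda\geq 0$, $\lambda(\bs x-\bs x')=(\lambda\bs x)-(\lambda\bs x')\in\cl K-\cl K$. Consequently, establishing~\eqref{eq:RIPpq} with $p=q=2$ for every $\bs z\in\cl K-\cl K$ is equivalent to establishing it for every unit vector in the ``normalized difference set'' $\cl T:=\{\bs z/\|\bs z\|:\bs z\in\cl K-\cl K,\ \bs z\neq\bs 0\}\subseteq\bb S^{n-1}$, because both sides of~\eqref{eq:RIPpq} are $2$-homogeneous in $\bs z$ and the $\bs z=\bs 0$ case is trivial. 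So it suffices to apply~\eqref{eq:RIP2-embed-SLGW} to $\cl K=\cl T$ (after taking square roots, noting $|a-1|\leq\epslin$ on norms translates into the two-sided bound $(1-\epslin)^2\leq\tfrac1m\|\bs\Phi\bs u\|^2\leq(1+\epslin)^2$, which can be absorbed into a rescaled $\epslin$, or one works directly with the slightly stronger form of~\eqref{eq:RIP2-embed-SLGW} that the cited references actually give for $\tfrac1m\|\bs\Phi\bs u\|^2$).

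The remaining point is then purely a Gaussian-mean-width computation: I must show $w(\cl T)\lesssim w(\cl K^*)$ so that the hypothesis $m\gtrsim\epslin^{-2}w(\cl K^*)^2$ implies $m\gtrsim\epslin^{-2}w(\cl T)^2$ and~\eqref{eq:RIP2-embed-SLGW} applies. For this I would use the standard fact that $w(\cl T)\leq w(\conv(\cl T))=w(\conv(\cl T)\cup(-\conv(\cl T)))$ up to constants, combined with the observation that every unit vector $\bs z/\|\bs z\|$ with $\bs z=\bs x-\bs x'$, $\bs x,\bs x'\in\cl K$, can be written in terms of the rescaled points $\bs x/\|\bs z\|$ and $\bs x'/\|\bs z\|$, which lie in $\cl K$ by homogeneity; truncating these to norm $\leq 1$ (again by homogeneity, scaling down stays in $\cl K$, hence in $\cl K^*$) one gets $\cl T\subseteq 2\,\conv(\cl K^*\cup(-\cl K^*))$ or a similar inclusion, whence $w(\cl T)\lesssim w(\cl K^*\cup(-\cl K^*))\lesssim w(\cl K^*)$, using that $w$ of a symmetrized set is comparable to $w$ of the original (the Gaussian width only sees $\cl K^*-\cl K^*\subseteq 2\conv(\cl K^*\cup-\cl K^*)$, and $w(\cl K^*-\cl K^*)\leq 2w(\cl K^*)$ directly from $w(\cl A-\cl B)\leq w(\cl A)+w(\cl B)$ and translation invariance of $w$). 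Putting $\cl T$ into~\eqref{eq:RIP2-embed-SLGW} then yields~\eqref{eq:RIP22-cond-m}$\Rightarrow$RIP with failure probability $\exp(-c\epslin^2 m)$, as claimed.

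The step I expect to be the only mildly delicate one is the width comparison $w(\cl T)\lesssim w(\cl K^*)$: one must be careful that normalizing differences does not blow up the width, which is exactly where homogeneity of $\cl K$ is used — it guarantees that rescaled copies of the points defining $\cl T$ remain inside $\cl K$ (and, after truncation, inside $\cl K^*$), so that $\cl T$ is controlled by the convex hull of $\pm\cl K^*$ rather than by $\cl K$ itself, which might be unbounded. Everything else is bookkeeping: homogeneity of $\cl K-\cl K$, the $2$-homogeneity of the RIP inequality, and the harmless conversion between the norm form~\eqref{eq:RIP2-embed-SLGW} and the squared-norm form~\eqref{eq:RIPpq} via a constant-factor adjustment of $\epslin$ (absorbed into the $\gtrsim$ in~\eqref{eq:RIP22-cond-m} and the unspecified constant $c$).
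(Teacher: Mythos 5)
Your plan follows essentially the same route as the paper's proof: exploit homogeneity of $\cl K-\cl K$ and the $2$-homogeneity of both sides of \eqref{eq:RIPpq} to reduce the RIP over $\cl K-\cl K$ to the concentration statement \eqref{eq:RIP2-embed-SLGW} on the normalized difference set, then bound the Gaussian mean width of that set by $2w(\cl K^*)$ using $w(\cl A-\cl B)\le w(\cl A)+w(\cl B)$. The one place you diverge is the width comparison --- precisely the step you flag as delicate --- and your justification of it does not go through as written: if $\bs z=\bs x-\bs x'$ and you shrink the rescaled pair $\bs x/\|\bs z\|,\ \bs x'/\|\bs z\|$ by a common factor $\mu\le 1$ to push them into $\bb B^n$, their difference becomes $\mu\,\bs z/\|\bs z\|$, so the unit vector $\bs z/\|\bs z\|$ is not exhibited as an element of $2\conv(\cl K^*\cup(-\cl K^*))$; in fact that inclusion can fail (take $\cl K=\bb R_+\bs e_1\cup\bb R_+(\bs e_1+\epsilon\bs e_2)$: then $-\bs e_2$ is a normalized difference, yet every element of $2\conv(\cl K^*\cup(-\cl K^*))$ has second coordinate of magnitude at most $2\epsilon$). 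The paper handles this step by asserting $(\cl K-\cl K)\cap\bb B^n\subset\cl K^*-\cl K^*$ and then invoking $w(\cl K^*-\cl K^*)\le 2w(\cl K^*)$; that containment is the same crux in different clothing and likewise requires an argument (it holds for the ULS-type and structured sets the paper targets, where a short difference of two elements can be rewritten as a difference of two short elements, but not for arbitrary homogeneous $\cl K$). To make your proof complete you should therefore replace the truncation step by a proof of such a containment for the sets at hand, or bound $w\big((\cl K-\cl K)\cap\bb S^{n-1}\big)$ by other means; the rest of your write-up (homogeneity bookkeeping, the passage between the norm and squared-norm forms of the RIP via a constant rescaling of $\epslin$) matches the paper.
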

  \begin{proof}
    For a homogeneous set $\cl
K$, $\bs u /\|\bs u\|
\in {\cl K \cap \bb S^{n-1}}$ for all $\bs u \in \cl K \setminus \{\bs 0\}$. Therefore, if $m \gtrsim \epslin^{-2}
w(\cl K^*)^2 \geq \epslin^{-2}
w({\cl K \cap \bb S^{n-1}})^2$ with $\cl K^* := {\cl K \cap \bb
B^n} \supset \cl K \cap \bb
S^{n-1}$, then~\eqref{eq:RIP2-embed-SLGW} holds on $\bs u/\|\bs u\|
\in {\cl K \cap \bb S^{n-1}}$ for all $\bs u \in \cl K$,
which shows that $\bs \Phi$ respects, \whp, the $(\ell_2,
\ell_2)$-RIP$(\cl K, \epslin)$ defined in~\eqref{eq:RIPpq}. Applying the same argument on the
homogeneous set $\cl K
- \cl K \subset \bb R^n$, and observing that ${(\cl K - \cl K)\cap \bb B^n} \subset
\cl K^* - \cl K^*$ with $w(\cl K^* - \cl K^*) \leq 2 w(\cl K^*)$, this
shows that $\bs \Phi$ also respects \whp the $(\ell_2,
\ell_2)$-RIP$(\cl K-\cl K, \epslin)$ if $m \gtrsim \epslin^{-2}
w(\cl K^*)^2$.    
  \end{proof}

The portability of Prop.~\ref{prop:distorted-main} and
Prop.~\ref{prop:bi-dithered-embedding} is therefore guaranteed for bounded vectors taken in homogeneous sets, which include
for instance the ULS model described in Sec.~\ref{Main-results} or
other structured sets such as the set of bounded low-rank
matrices in $\bb R^{n_1 \times n_2}$ with $n=n_1n_2$ (when identified with
$\bb R^n$). 

In particular, when $\cl K$ is homogeneous, it is easy to determine when the quantized mapping $\Amap$ defined in \eqref{eq:quant-embed-def} can satisfy a $(\cl D_{\cl E}, \ell_2)$-QRIP on $\cl K^* = {\cl K \cap \bb B^n}$ inherited from the $(\ell_2,\ell_2)$-RIP$(\cl K, \epslin)$ of a RSGE matrix $\bs \Phi$. Using Sudakov's minoration in \eqref{eq:Sudakov-minoration}, the requirements of Cor.~\ref{cor:rip22-RGE}, Prop.~\ref{prop:distorted-main} and Prop.~\ref{prop:bi-dithered-embedding} are indeed satisfied on $\cl K^*$ if we have 
\begin{equation}
  \label{eq:req-props2-3-on-homog-sets}
\ts m \gtrsim  \max(\frac{1}{\epslin^{2}}, \frac{1}{\epsilon^2\eta(\epsilon,\delta)^{2}})\, w(\cl K^*)^2,   
\end{equation}
while, from~\eqref{eq:struct-set-kolm-bound}, for 
structured sets these requirements are more easily verified if 
\begin{equation}
  \label{eq:req-props2-3-on-struct-sets}
\ts m \gtrsim \max\big(\,\frac{1}{\epslin^{2}},\, \log(1+\frac{1}{\eta(\epsilon,\delta)})\big)\, w(\cl K^*)^2,   
\end{equation}
with $\eta(\epsilon,\delta) = \delta \epsilon^{3/2}$ for Prop.~\ref{prop:distorted-main} and
$\eta(\epsilon,\delta) =\delta \epsilon^{2}$ for
Prop.~\ref{prop:bi-dithered-embedding}. Consequently, we see that \eqref{eq:req-props2-3-on-homog-sets} and \eqref{eq:req-props2-3-on-struct-sets} are sufficient to ensure the verification of a QRIP \whp based on a RSGE matrix $\bs \Phi$. Moreover, setting $\epsilon = \epslin$ and taking the smallest $m$ for the conditions above to hold, we see that, \whp and up to log factors, $\epsilon = \epslin = O( m^{-1/q} w(\cl K^*))$ with $q=2$ for structured sets, while $q=5$ and $q=6$ for general sets in the case of Prop.~\ref{prop:distorted-main} and Prop.~\ref{prop:bi-dithered-embedding}, respectively. 
\medskip

\paragraph{(b) Random Gaussian ensembles and $(\cl D_{\ell_1},\ell_2)$-QRIP:} A similar result to Prop.~\ref{prop:main} in the case of $q=2$, \ie
for $(\cl D_{\ell_1},\ell_2)$-quantized embeddings, was established earlier on for
dithered quantized maps specifically built on RSGE matrices~\cite{jacques2015small}. Using the notations of the present work, it was shown that the resulting map $\Amapd$ satisfies \whp the $(\cl D_{\ell_1},
\ell_2)$-QRIP$(\cl K, \epsilon, \delta^2\epsilon + \rho_{\rm sg})$ (\ie with $\epsilon = \epslin$) provided $m \gtrsim
\delta^{-2}\epsilon^{-5} w(\cl K)^2$ and with an additional distortion
$\rho_{\rm sg}>0$ for a sub-Gaussian but non-Gaussian
random map $\bs \Phi$ and whose value is small, but non-zero, for ``not too sparse''
vectors in $\cl K$, \ie those with small $\ell_\infty/\ell_2$ norm ratio. 

In the case where $\bs \Phi \sim \cl N^{m\times n}(0,1)$ is a random Gaussian matrix\footnote{Also known as random Gaussian ensembles (RGE).}, we can recover the same result from Prop.~\ref{prop:main} and from the following fact.
As an extension to \eqref{eq:RIP2-embed-SLGW}, it has been proved in
\cite{schechtman2006two} (see also~\cite{plan2014dimension}) that for 
$\cl K \subset \bb S^{n-1}$ we have, with probability exceeding $1-\exp(-c\epsilon^2m)$,
\begin{equation}
  \label{eq:RIP12-embed-SLGW}
\sup_{\bs u \in \cl K} |\tinv{m} \gone\,\|\bs\Phi \bs u\|_1 - 1| \leq \epsilon  
\end{equation}
provided $m \gtrsim \epsilon^{-2} w(\cl K)^2$, \ie as required to reach a $(\ell_2,\ell_2)$-linear embedding. 

Therefore,
applying on~\eqref{eq:RIP12-embed-SLGW} the same 
observations made on~\eqref{eq:RIP2-embed-SLGW}, this shows that for homogeneous sets $\cl K$, if 
\begin{equation}
  \label{eq:cond-m-RIP12}
  \ts m \gtrsim \epsilon^{-2}
w(\cl K^*)^2,
\end{equation}
$\bs \Phi$ respects \whp the $(\ell_1,
\ell_2)$-RIP$(\cl K - \cl K, \epsilon)$ with $\mu_\Phi = \gone$ in~\eqref{eq:RIPpq}. 

Consequently, as realized in \cite{jacques2015small}, we can determine the necessary conditions to guarantee that the quantized mapping $\Amap$
in \eqref{eq:quant-embed-def} respects \whp the $(\cl D_{\ell_1},
\ell_2)$-QRIP$(\cl K^*, \epslin = \epsilon, c\delta\epsilon)$ if $\bs \Phi$ is a random Gaussian matrix, \ie the conditions for 
satisfying both \eqref{eq:cond-m-RIP12} and \eqref{eq:cond-on-m-main-prop} in Prop.~\ref{prop:main}. 
These requirements are met  if \eqref{eq:req-props2-3-on-homog-sets}
(or \eqref{eq:req-props2-3-on-struct-sets}) holds with $\eta = \delta \epsilon^2$, if $\cl K$ is homogeneous (resp. structured).

The astute reader could notice that if non-Gaussian but
  sub-Gaussian random matrices respect \eqref{eq:RIP12-embed-SLGW},
  \eg a Bernoulli random matrix with $\Phi_{ij} \sim_{\iid} \cl
  B(\{\pm 1\})$, the development above would show that the
  observations made in \cite{jacques2015small} are not tight. As
  announced above, \cite{jacques2015small} indeed proves that for such
  random matrices, the resulting quantized mapping respects the $(\cl D_{\ell_1},
\ell_2)$-QRIP$(\cl K^*, \epsilon, \rho)$ with $\rho = \delta^2
\epsilon + \rho_{\rm sg}$ and $\rho_{\rm sg} >0$ constant. However, \eqref{eq:RIP12-embed-SLGW} cannot hold in general for any sub-Gaussian random matrix since taking a Bernoulli $\bs \Phi$ and $\epsilon < |\gone - 1|$ in
\eqref{eq:RIP12-embed-SLGW} clearly fails as showed by the counter-example $\cl K \ni \bs v :=
(1,0,\cdots,0)^\top$ since $|\tinv{m} \gone\|\bs\Phi \bs v\|_1 - 1| = |\gone - 1| > \epsilon$.

\paragraph{(c) RSGE for multi-dithered quantized mappings:}
Still associated to the use of RSGE matrices, we can highlight a potential
\emph{multi-dithered} generalization of $\Amapd$ in
Prop.~\ref{prop:bi-dithered-embedding}. The non-linear map $\Amapd$
defined in \eqref{eq:bi-dithered-mapping} can indeed be extended over $\bb
R^{n \times d}$ to $d \geq 2$ and $\bs \Xi \sim \cl
U^{m\times d}([0, \delta])$, \ie associating each row of $\bs \Phi$
with $d$ different dithers. Following a similar proof to the one
developed for Prop.~\ref{prop:bi-dithered-embedding}, the independence of the $d$ columns of $\bs
\Xi$ could then
lead to a connection between matrices satisfying the
$(\ell_d,\ell_q)$-RIP for $d \geq 2$ and quantized random maps
respecting the $(\Demb,\ell_q)$-QRIP with $\Demb = \cl
D_\circ$. This is the case of matrices drawn according to random
Gaussian ensembles,  which 
are known to respect $(\ell_d,\ell_2)$-RIP$(\Sigma_{s,n}, \epslin)$
provided $m \geq m_0$ with $m_0 = O((s \log n/s)^{d/2})$
\cite{jacques2011dequantizing}. However, as there is no other known
matrix construction satisfying this RIP, we
prefer to not investigate here this potential generalization.

\paragraph{(d) Structured random matrix constructions (SRMC):} 
As explained in the introduction, there is a growing literature interested in the
development of fast quantized embeddings. Many works focus for
instance on the possibility to define fast 1-bit embeddings by
replacing $\cl Q$ with a sign operator without considering a
pre-quantization dithering
\cite{oymak2016near,yu2015binary,yu2014circulant}. However, to the best of our knowledge, such fast
1-bit embeddings are currently available for finite sets only, with
sometimes strong restrictions between their cardinality, the ambient
dimension $n$ and the embedding
dimension $m$ (\eg $m < n^{1/2}$ in~\cite{oymak2016near}). 

This paper aims at showing that if we can afford a different
quantization process, \ie a simple uniform quantization, a
suitable dither allows us to leverage the now large
market of random matrices known
to respect the RIP. This includes, for instance, structured random matrix constructions (SRMC)
with fast vector encoding schemes such as random orthonormal basis
ensembles~\cite{rudelson2008sparse,foucart2013mathematical},
random convolutions~\cite{rauhut2012restricted,romberg2009compressive}, 
spread-spectrum~\cite{puy2012universal} and scrambled block-Hadamard ensembles~\cite{gan2008fast}. Moreover, some of those
SRMC known to only respect the\footnote{Actually, a
  multiresolution version of the RIP is satisfied by most of those SRMC~\cite{oymak2015isometric}.}
$(\ell_2,\ell_2)$-RIP$(\Sigma_{s,n}, \epslin)$,
generally provided $m = O(s \log(n)^{O(1)} \log(s)^{O(1)})$, can be
extended to the embedding of general sets with small Gaussian
 mean width by a simple random sign flipping of the encoded
vector~\cite{oymak2015isometric}, \ie similarly to the effect of
spread-spectrum~\cite{puy2012universal}.

Finally, let us mention that the adjacency matrices of high-quality
unbalanced expander graphs have been shown to provide
$(\ell_p,\ell_p)$-RIP$(\Sigma_{s,n}, \epslin)$ matrix constructions
with $1\leq p \leq 1 + O(1/\log n)$ and fast vector encoding
schemes. Combined with Prop.~\ref{prop:main}, such matrices determine
a fast quantized map satisfying, \whp, the $(\cl D_{\ell_1},
\ell_1)$-QRIP$(\Sigma_{s,n}, \epslin, c\delta\epsilon)$ for some
$c>0$, provided $m \gtrsim \epsilon^{-2} s \log \frac{n}{s} \log \frac{1}{\delta \epsilon^2}$.  
 
\paragraph{(e) Fast 1-bit quantized embeddings of low-complexity sets:} Combined with Prop.~\ref{prop:main}, Prop.~\ref{prop:distorted-main} and
Prop. \ref{prop:bi-dithered-embedding}, the SRMC above can thus provide \whp fast
quantized embeddings of low-complexity sets of $\bb R^n$, \eg with
log-linear vector encoding time, provided $m$
respects the respective requirements of these propositions. 

Remarkably, this also allows the design of fast 1-bit quantized embedding
of low-complexity sets.  Indeed, for bounded sets $\cl K$ and
sufficiently large quantization resolution $\delta$ each component of the resulting
quantized map $\Amapd$ can be essentially encodable using
only one bit per measurement. 
For instance, if $\bs \Phi$ respects the $(\ell_p,\ell_q)$-RIP$(\cl K, \epslin)$ defined in~\eqref{eq:RIPpq} for some $\epslin \in (0,1)$ and integers $p,q \geq 1$, a rough computation yields that for all $\bs x \in \cl K$, $\|\bs \Phi \bs x\|_{\infty} \leq \|\bs \Phi \bs x\|_{p} \leq (1+\epslin)^{1/p} \|\bs x\|_q$. Therefore, setting $\delta$ larger than $2^{1/p} \|\cl K\|_q$ guarantees that $\Amap$ in~\eqref{eq:quant-embed-def} can only take two values per component.
In other words, the dithered quantizer $\cl Q$ will act as
a dithered sign operator in this resolution regime.

Notice that in such a 1-bit regime, it is interesting to compare Prop.~\ref{prop:main} and
Prop. \ref{prop:bi-dithered-embedding} with the binary embedding described in \cite{plan2014dimension}. This work shows that if $\cl K \subset \bb S^{n-1}$ and provided $m \gtrsim \epsilon^{-6} w(\cl K)^2$, with probability exceeding $1-2\exp(-c\epsilon^2 m)$ for some universal constant $c>0$, we can generate from $\bs \Phi \in \cl N^{m\times n}(0,1)$ a \emph{$\epsilon$-uniform tessellation} from the 1-bit mapping $\Amap_{\rm bin}(\cdot) := \sign(\bs \Phi\,\cdot)$, \ie using our notations
\begin{equation}
  \label{eq:bin-tessel}
  \ts  \cl D_{\rm ang}(\bs x, \bs x') - \epsilon\ \leq\ \cl D_{\rm H}(\Amap_{\rm bin}(\bs x), \Amap_{\rm bin}(\bs x))\ \leq\ \cl D_{\rm ang}(\bs x, \bs x')+ \epsilon,\quad \forall \bs x,\bs x' \in \cl K, 
\end{equation}
where $\cl D_{\rm ang}(\bs x, \bs x') := \tfrac{2}{\pi} \arccos(\bs x^\top\bs x) \in [0,1]$ is the normalized angular distance between $\bs x, \bs x' \in \bb S^{n-1}$, and $\cl D_{\rm H}(\bs w, \bs w') := \tinv{m} \sum_{i} 1_{w_i \neq w_j}$ is the normalized Hamming distance between two binary strings $\bs w, \bs w' \in \{\pm 1\}^{m}$ (with $1_{w_i \neq w_j} = 1$ if $w_i \neq w_j$ and 0 otherwise).  

We can directly observe some similarities between this property and the QRIP defined in~\eqref{eq:QRIP-def}. First, the Hamming distance is similar to $\cl D_{\cl \ell_1}$ considered in Prop.~\ref{prop:main} since $\cl D_{\rm H}(\bs w, \bs w') = 2^{-p} \cl D_{\ell_p}(\bs w, \bs w')$ for all $p\geq 1$ and $\bs w, \bs w' \in \{\pm 1\}^m$. Second, \eqref{eq:bin-tessel} shows that one can approximate the angular distance separating unit vectors in the subset $\cl K$ with the Hamming distance of their binary images obtained by $\Amap_{\rm bin}$, and the requirement on $m$ for this to happen is similar to the requirements shown in Prop.~\ref{prop:main} and Prop.~\ref{prop:bi-dithered-embedding} for ensuring that $\Amap$ respects a form of the QRIP. In fact, using Sudakov's inequality in \eqref{eq:Sudakov-minoration} and following the developments made in the points (a) and (b) above, for $\cl K^* = \cl K \cap \bb B^n$ and $\cl K$ homogeneous, the conditions of these propositions are respected if $\bs \Phi$ satisfies the proper RIP on $\cl K - \cl K$ and if  
\begin{equation}
  \label{eq:cond-m-bin-map-hom}
\ts m \gtrsim \frac{1}{\epsilon^6\delta^2}\, w(\cl K^*)^2,
\end{equation}
while for the structured sets described in Sec.~\ref{Main-results}, we will meet these conditions if 
\begin{equation}
  \label{eq:cond-m-bin-map-struct}
\ts m \gtrsim \frac{1}{\epsilon^2}\, w(\cl K^*)^2\log(1 + \frac{1}{\epsilon^3\delta}).  
\end{equation}
Interestingly, the condition \eqref{eq:cond-m-bin-map-hom} has the same dependency in $\epsilon^{-6}$ than the condition guaranteeing \eqref{eq:bin-tessel}, while \eqref{eq:cond-m-bin-map-struct} has a reduced dependency in $\epsilon^{-2}$ that is similar to the one ensuring the binary embedding of sparse vectors, as described in \cite{jacques2013robust}.  

However, while \eqref{eq:bin-tessel} displays no multiplicative distortion but only an additive one controlled by $\epsilon>0$, this binary embedding concerns only the approximation of angular distances of unit vectors in $\cl K \subset \bb S^{n-1}$. 
Even if $\delta$ is sufficiently large for $\Amap$ to become essentially binary for some bounded set $\cl K \subset \bb R^n$, the quantized embeddings proposed in Prop.~\ref{prop:main} and Prop.~\ref{prop:bi-dithered-embedding} still approximate the $\ell_q$-distance between vectors in $\cl K$.

Finally, note that Prop.~\ref{prop:distorted-main} may not be compared with~\eqref{eq:bin-tessel} since the additive distortion term in $\delta \|\bs x - \bs x'\|$ that appears in~\eqref{eq:QRIP-2nd-result} cannot be made arbitrarily small when~$\epsilon$ decays\footnote{An exception to this could be shown for $\epsilon$-close vectors $\bs x, \bs x'$.}.

\paragraph{(f) Asymmetric embeddings and quantized rank-one projections:} Our developments
can be easily extended to the use of asymmetric linear embeddings, as
defined by the asymmetric $(\ell_p, \ell_q)$-RIP$(\cl K, \epslin; C_1, C_2)$ 
\begin{equation}
  \label{eq:RIPpq-asym}
  (C_1-\epslin) \|\bs x\|^p_q \leq \tfrac{\mu_{\Phi}}{m} \|\bs \Phi \bs
  x\|^p_p \leq  (C_2+\epslin) \|\bs x\|^p_q,\quad \forall \bs x \in \cl K,
\end{equation}
for which the two bounds $0 < C_1 < C_2$ are not necessarily close to
each other.  

Let us illustrate this with the following corollary that describes
a particular asymmetric form of QRIP. It is a direct
consequence of both Prop.~\ref{prop:main} and the interesting
example of random Rank-One Projections (ROP) of matrices \cite{cai2015rop,kueng2015low,puy2015recipes}.

\begin{corollary}[Quantized rank-1 embedding of bounded low-rank matrices]
\label{cor:qrip-rank-one}
Let $\cl X$ be a centered sub-Gaussian distribution with unit variance. There exist two constants $C_1 \in (0,1/3)$ and $C_2 > 1$ depending only on $\cl X$, and three universal constants $C,c,c'>0$, such that, for $\epsilon \in (0,1)$, a quantization resolution $\delta > 0$, and $r,n_1,n_2 \in \bb N^0$, if 
$$
\ts m \gtrsim \epsilon^{-2}\,r (n_1 + n_2) \log\big(1 + \tfrac{1}{\min(1,\delta) \epsilon^2}\big),
$$
then, for $2m+1$ random vectors $\bs a_i \sim \cl X^{n_1}$, $\bs b_i \sim \cl X^{n_2}$ and $\bs \xi \sim \cl U^m([0,\delta])$ (with $i \in [m]$), $\kappa := 2/(C_1+C_2)$, and with probability exceeding $1-C \exp(-c \epsilon^2 m)$, the \emph{quantized} rank-1 measurement mapping 
\begin{equation}
  \label{eq:quant-rop}
  \ts \Amap :\ \bs U \in \bb R^{n_1 \times n_2}\ \mapsto\ \Amap(\bs U) := \big\{\kappa^{-1} \cl Q(\kappa\,\bs a_i^\top \bs U \bs b_i + \xi_i): i\in [m]\big\} \in\
  \bb R^m,  
\end{equation}
with $\cl Q$ defined in \eqref{eq:quant-def}, satisfies the asymmetric $(\cl D_{\ell_1}, \ell_F)$-QRIP 
$$
\ts (C_1 - \epsilon) \|\bs U - \bs V\|_F - c'\delta\epsilon\ \leq\ \frac{1}{m}\|\Amap(\bs U) - \Amap(\bs V)\|_1\ \leq\ 
(C_2 + \epsilon) \|\bs U - \bs V\|_F + c'\delta\epsilon,
$$
for all $\bs U, \bs V \in \cl R_r \cap \bb B_{\ell_F}^{n_1 \times n_2}$.
\end{corollary}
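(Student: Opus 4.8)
The idea is to read $\Amap$ as the dithered quantized map of Prop.~\ref{prop:main} built on a rescaled rank-one linear embedding, and then to invoke the asymmetric version of that proposition announced in Sec.~\ref{sec:portability-results}(f). Identify $\bb R^{n_1\times n_2}\cong\bb R^n$ with $n=n_1n_2$, so the Frobenius metric is the $\ell_2$ metric and $q=2$. Let $\bs\Phi\in\bb R^{m\times n}$ be the matrix whose $i$th row is $\ve(\bs a_i\bs b_i^\top)$; then $\bs a_i^\top\bs U\bs b_i=\scp{\bs a_i\bs b_i^\top}{\bs U}_F=(\bs\Phi\,\ve\bs U)_i$, so $\Amap(\bs U)=\kappa^{-1}\cl Q(\kappa\bs\Phi\,\ve\bs U+\bs\xi)=\kappa^{-1}\Amap(\ve\bs U,\kappa\bs\Phi,\bs\xi)$ in the notation of \eqref{eq:quant-embed-def}, and hence $\tfrac1m\|\Amap(\bs U)-\Amap(\bs V)\|_1=\kappa^{-1}\tfrac1m\|\Amap(\ve\bs U,\kappa\bs\Phi,\bs\xi)-\Amap(\ve\bs V,\kappa\bs\Phi,\bs\xi)\|_1$, i.e. $\kappa^{-1}$ times the quantity governed by Prop.~\ref{prop:main} with linear part $\kappa\bs\Phi$, dither $\bs\xi\sim\cl U^m([0,\delta])$ and resolution $\delta$.

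The first step is to supply the linear $(\ell_1,\ell_F)$-RIP of rank-one projections. Combining the sub-Gaussian ROP analyses of \cite{cai2015rop,kueng2015low,puy2015recipes}, there are constants $0<C_1<C_2$ depending only on $\cl X$ --- precisely the $C_1,C_2$ of the statement, so $C_1<1/3$ and $C_2>1$ --- such that, whenever $m\gtrsim\epsilon^{-2}r(n_1+n_2)$, with probability at least $1-C\exp(-c\epsilon^2m)$ one has $(C_1-c\epsilon)\|\bs X\|_F\le\tfrac1m\sum_i|\scp{\bs a_i\bs b_i^\top}{\bs X}_F|\le(C_2+c\epsilon)\|\bs X\|_F$ uniformly over all $\bs X$ of rank at most $2r$, that is over $\cl R_{2r}\supseteq\cl K-\cl K$ for $\cl K:=\cl R_r\cap\bb B_{\ell_F}^{n_1\times n_2}$. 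Rescaling by $\kappa=2/(C_1+C_2)$, the matrix $\kappa\bs\Phi$ therefore obeys the asymmetric $(\ell_1,\ell_2)$-RIP$(\cl K-\cl K,\epslin;C_1',C_2')$ of \eqref{eq:RIPpq-asym} with $C_1'=\kappa C_1$, $C_2'=\kappa C_2$ and multiplicative slack $\epslin\lesssim\epsilon$; by the choice of $\kappa$ the bounds $C_1'<1<C_2'$ straddle $1$.

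The second step is the QRIP inheritance. I would check that the proof of Prop.~\ref{prop:main} carries over verbatim to the asymmetric RIP \eqref{eq:RIPpq-asym}: the linear RIP enters there only to turn an $\ell_q$-net of $\cl K$ into an $\ell_1$-net of $\bs\Phi\cl K$ (the \emph{upper} constant $C_2'$ only fixing the net radius $\eta'\lesssim\eta$, harmless as $C_2'$ is an absolute constant) and to carry the resulting estimate of $\cl D_{\ell_1}(\bs\Phi\bs x,\bs\Phi\bs x')$ back to $\|\bs x-\bs x'\|_q$, whereas the core of the argument --- the concentration of $\tfrac1m\|\Amap(\bs x,\kappa\bs\Phi,\bs\xi)-\Amap(\bs x',\kappa\bs\Phi,\bs\xi)\|_1$ around $\|\kappa\bs\Phi(\bs x-\bs x')\|_1$ with additive error $\lesssim\delta\epsilon$, together with the softening of the pre-metric through $\cl D^t$ --- involves only the dither and is untouched. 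Invoking the entropy bound $\cl H_2(\cl K,\eta)\lesssim r(n_1+n_2)\log(1+1/\eta)$ of Table~\ref{tab:Kolmog-bounds}, so that $m\gtrsim\epsilon^{-2}r(n_1+n_2)\log(1+\tfrac1{\delta\epsilon^2})$ meets \eqref{eq:cond-on-m-main-prop} with $q=2$, the asymmetric Prop.~\ref{prop:main} then yields, with probability $\ge1-C\exp(-cm\epsilon^2)$,
\[
(C_1'-c\epsilon)\|\bs U-\bs V\|_F-c'\delta\epsilon\ \le\ \tfrac1m\big\|\Amap(\ve\bs U,\kappa\bs\Phi,\bs\xi)-\Amap(\ve\bs V,\kappa\bs\Phi,\bs\xi)\big\|_1\ \le\ (C_2'+c\epsilon)\|\bs U-\bs V\|_F+c'\delta\epsilon
\]
for all $\bs U,\bs V\in\cl K$. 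Multiplying by $\kappa^{-1}$, absorbing $\kappa^{-1}c\epsilon\lesssim\epsilon$ and $\kappa^{-1}c'\delta\epsilon\lesssim\delta\epsilon$ into the constants (legitimate since $\kappa$ depends only on $\cl X$), rescaling $\epsilon$, and intersecting with the event of Step~1 (whose failure probability is of the same exponential type) gives the claimed two-sided estimate; the single requirement $m\gtrsim\epsilon^{-2}r(n_1+n_2)\log(1+\tfrac1{\min(1,\delta)\epsilon^2})$ is exactly what simultaneously enforces $m\gtrsim\epsilon^{-2}r(n_1+n_2)$ (Step~1) and \eqref{eq:cond-on-m-main-prop} for $\cl K$ at radius $\delta\epsilon^2$, for every $\delta>0$.

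The only non-clerical point --- and hence the main obstacle --- is confirming that the argument of Prop.~\ref{prop:main} is genuinely insensitive to the asymmetry and to the scalar $\kappa$: one must make sure that the net construction, the perturbation and continuity analysis of the softened pre-metric $\cl D^t$ near net centers, and the closing limit $t\to0$ never secretly use the normalization $\mu_\Phi=1$ or the symmetry $C_1=C_2$. I expect this to go through without surprises, which is precisely why the statement is a corollary; everything else is bookkeeping of constants and a union bound over the ROP randomness and the dithering.
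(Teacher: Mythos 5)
Your proposal is correct and follows essentially the same route as the paper (Sec.~\ref{sec:portability-results}(f)): vectorize, invoke the sub-Gaussian ROP bound \eqref{eq:rub-def} on rank-$2r$ differences, rescale by $\kappa$, feed the resulting $(\ell_1,\ell_F)$-RIP into Prop.~\ref{prop:main} with the entropy bound for $\cl R_r\cap\bb B_{\ell_F}^{n_1\times n_2}$, and union-bound the two failure events. The only (cosmetic) difference is that where you re-verify Prop.~\ref{prop:main} for the asymmetric RIP, the paper sidesteps this by noting that the rescaled RUB is already a \emph{symmetric} $(\ell_1,\ell_F)$-RIP$(\cl R_r,\epslin')$ with the constant $\epslin'=(C_2-C_1+2\epslin)/(C_1+C_2)<1$, so Prop.~\ref{prop:main} applies verbatim and multiplying back by $\kappa^{-1}$ recovers exactly the asymmetric constants $C_1-\epsilon$ and $C_2+\epsilon$.
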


This corollary is directly connected to the remarkable properties of
random ROP, which are defined by the linear random mapping
\begin{equation}
  \label{eq:rand-rank1-linop}
\ts \cl A: \bs U \in \bb R^{n_1 \times n_2}\ \mapsto\ \cl A(\bs U) = \big\{\cl A_i(\bs U) := \bs a_i^\top \bs U \bs b_i: i \in [m]\big\} \in \bb R^m,
\end{equation}
where $\bs a_i \in \bb R^{n_1}$ and $\bs b_i \in \bb R^{n_2}$ are $2m$
random vectors. Each measurement $\cl A_i$ is rank-1 in the
sense that $\cl A_i(\bs U) = \scp{\bs a_i \bs
  b_i^\top}{\bs U}_F$ with the rank-1 \emph{probing} matrix $\bs a_i \bs
  b_i^\top \in \bb R^{n_1 \times n_2}$. 

Recovering low-complexity matrices (\eg low-rank) from their ROP has many interesting applications including recommender systems, phase retrieval and quadratic signal sensing, linear system identification, blind signal deconvolution, blind system calibration, as well as quantum state tomography (see, \eg \cite{cai2015rop, ling2015self, cambareri2016through,gross2011recovering} and \cite[Sec. 7]{davenport2016overview}).  
 
This recovery is possible from the following embedding property. Given $\epslin \in (0,1)$, it is established\footnote{Up to a minor modification of the proof of \cite[Prop. 7.1]{cai2015rop} available in the supplementary material
of the same paper.} in \cite[Prop. 7.1]{cai2015rop} (see also \cite[Sec. IV.B]{puy2015recipes} for a
similar result) that if the components of the random vectors $\bs a_i, \bs b_i$ are \iid according to
a centered sub-Gaussian distribution $\cl X$ with unit variance, and if 
\begin{equation}
  \label{eq:cond-m-rank1}
  \ts m \gtrsim \frac{1}{\epslin^2} r(n_1 + n_2) \log(\frac{1}{\epslin}),  
\end{equation}
then $\cl A$ is shown to respect \whp the Restricted Uniform
Boundedness (RUB) of order $r$, \ie 
there exist two constants $C_1 \in (0,1/3)$ and $C_2>1$ depending only on the sub-Gaussian norm of $\cl X$,
such that, with probability exceeding $1 - C\exp(-c\epslin^2 m)$ for
some $C,c>0$, 
\begin{equation}
  \label{eq:rub-def}
\ts  (C_1 - \epslin) \|\bs U\|_F \leq \frac{1}{m}\|\cl A(\bs U)\|_1
\leq (C_2 + \epslin) \|\bs U\|_F,
\end{equation}
for all rank-$r$ matrices $\bs U \in \bb R^{n_1 \times n_2}$, \ie $\bs
U \in \cl R_r$. In fact, if $\cl A$ respects the RUB of order $\beta r$
with $C_2/C_1 < \sqrt \beta$ for some $\beta \geq 2$, then one can recover any $\bs U_0 \in \cl R_r$ from $\cl A(\bs U_0)$ by a constrained nuclear norm minimization method \cite{cai2015rop}. 

We can observe that the relation \eqref{eq:rub-def} is a clear linear embedding of
rank-$r$ matrices into $(\bb R^m, \ell_1)$, \ie up to a rescaling of $\cl A$ by $\kappa :=
2/(C_1+C_2)$, it is a $(\ell_1,\ell_F)$-RIP$(\cl R_r, \epslin')$
with $\epslin' = (C_2 - C_1 + 2\epslin)/(C_1+C_2)$. However, this
embedding is characterized by a non-negligible and constant multiplicative distortion since it is easily shown that $\epslin' > \frac{3}{5}$. 

Interestingly, by identifying the $\ell_2$-norm with
the Frobenius $\ell_F$-norm and given $\epsilon \in (0,1)$, Prop.~\ref{prop:main} can anyway
leverage the RUB in \eqref{eq:rub-def} in order to define, with
probability exceeding $1-C \exp(-c \epsilon^2 m)$, a $(\cl D_{\ell_1},
\ell_F)$-QRIP$(\cl R_r\cap \bb B_{\ell_F}^{n_1 \times n_2}, \epslin',
c'\delta \epsilon)$ (for some $C,c,c'>0$) from the mapping~\eqref{eq:quant-embed-def} where
$\bs \Phi$ is replaced by $\cl A$. This is sustained by the fact that $\cl A$ can also be rewritten as the action of a $m \times n$ matrix $\bs \Phi$ (with $n=n_1n_2$) on the vectorization of $n_1 \times n_2$ matrices, the $i^{\rm th}$ row of $\bs \Phi$ being associated to the vectorization of~$\bs a_i \bs  b_i^\top$.   

From~\eqref{eq:cond-on-m-main-prop} and the bound on $\cl H(\cl
R_r\cap \bb B_{\ell_F}^{n_1 \times n_2}, \eta)$ given in
Tab.~\ref{tab:Kolmog-bounds}, this is allowed if
\begin{equation}
  \label{eq:m-qrip-rank-1}
\ts m \gtrsim \epsilon^{-2}\,r (n_1 + n_2) \log(1 + \tfrac{1}{\delta \epsilon^2}).
  \end{equation}
Notice that if we set $\epsilon = \epslin$, \eqref{eq:m-qrip-rank-1} is similar to \eqref{eq:cond-m-rank1} up to a dependence in $\log(1/\delta)$, and the probabilities for which \eqref{eq:rub-def} and the QRIP hold have then also the same trend.
This context provides Cor.~\ref{cor:qrip-rank-one} and its requirements by first using a union bound over the probability of the RUB above and the one provided by Prop.~\ref{prop:main}, and second by returning to the initial scaling of~$\cl A$ (which explains the use of $\kappa^{-1}$ in \eqref{eq:quant-rop}).

\subsection{Connections with the Representation and Coding of Signal Geometry}
\label{sec:comp-with-other}

Removing the condition on $\rho$, the QRIP defined in
\eqref{eq:QRIP-def} can be seen as a particular instance of the
(distorted) embedding concept introduced in
\cite{boufounos2015representation}. In this work, a non-linear map ${\sf F}:\cl K \to \cl E$ is a $(g,\epslin,\rho)$-embedding of $(\cl K,\cl D_{\cl K})$ in
$(\cl E, \Demb)$ if, for some invertible $g:\bb R \to \bb R$, we have 
\begin{equation}
  \label{eq:petros-embedding}
  \ts(1-\epslin)g(\cl D_{\cl K}(\bs u,\bs u')) - \rho \leq \Demb({\sf
  F}(\bs u), {\sf
  F}(\bs u')) \leq (1+\epslin)g(\cl D_{\cl K}(\bs u,\bs u')) + \rho,
\end{equation}
for all $\bs u,\bs u' \in \cl K$. From this definition, a
$(\Demb,\ell_q)$-QRIP$(\cl K, \epslin, \rho)$ map $\Amapd$ is thus a
$((\cdot)^\pE, \epslin,\rho)$-embedding of $(\cl K,\cl D_{\cl K})$ in
$(\cl E, \Demb)$ with the additional requirement~\eqref{eq:cond-rho-qrip} on $\rho$. 

In~\cite{boufounos2015representation} 
the authors show that, under certain conditions, if ${\sf F}$ is \whp a $(g, \epslin,\rho)$-embedding of $(\cl K,\cl D_{\cl K})$ in
$(\cl E, \Demb)$ \emph{on an arbitrary pair of vectors} $\bs u,\bs u' \in \cl
K$, then, for some $c>1$ depending on {\sf
  F} and $g$, ${\sf F}$ is also
\whp a
$(g, \epslin,c\rho)$-embedding of the whole set $\cl K$. The
conditions for this to happen are that we can control both the size of any covering of $\cl K$ (\ie
with its Kolmogorov entropy) and the probability $P_T$ to have $T$ discontinuity points of ${\sf F}$ in any
balls of radius~$r$, \ie in order to show that $P_T$ quickly decays if $r$ decreases and $T$ increases. 

From this general result, the authors are then able to study the
alteration of the random linear map $\bs \Phi$ (whose rows are \iid as some suitable vector distribution) defined through
$\bs x \mapsto \Amapd_f(\bs x) = f(\bs \Phi \bs x + \bs \xi)$ for some
$1$-periodic function $f$ (possibly discontinuous) and with
$\bs \xi \sim \cl U^{m}([0,1])$. In particular the function $f$,
through its Fourier series, and the distribution of the rows of
$\bs \Phi$ determine together an explicit
``kernelization'' of the set $\cl K$, \ie it enables a geometric
encoding of $\cl K$ where small pairwise vector distances in $\cl K$
are better encoded than large ones.  

By considering a uniform quantizer $\cl Q$ as a discontinuous map, it could seem
that the procedure described in~\cite{boufounos2015representation}, and in particular in Theorem
3.2 therein, could be used to prove our results on specific random matrix
constructions (\eg RSGE). However, even if it was
possible, we believe that such an adaptation
would not be straightforward. 

The standpoint of~\cite{boufounos2015representation} is
indeed fully probabilistic, \ie in~\eqref{eq:petros-embedding} the stochastic behavior of ${\sf
  F}$ is characterized from the set $\cl K$ by the combined action
a random matrix $\bs \Phi$ and of a random dither $\bs \xi$. In our work, we prefer
to assume (an instance of) the RIP of the
linear map $\bs \Phi \in \bb R^{m \times n}$ to first embed $\cl K$
in $\bb R^m$ with an appropriate $\ell_p$-distance, and then
characterize the embedding of $\bb R^m$ in the quantized domain $\bb
Z_\delta^{m}$ thanks to our dithered quantization. As explained above,
this trick allows
us to directly integrate, for our specific case, a much broader class of random matrix
constructions, with possibly log-linear vector encoding time.
 
Another difference between~\cite{boufounos2015representation} and our
work comes from the treatment reserved to the discontinuities of the
non-linear map. As explained above, in
\cite{boufounos2015representation} the authors need to
characterize the Lipschitz continuity of $f$ ``by parts'', \ie between its
discontinuity points. Then, the
probability $P_T$ of having $T$ discontinuities in the neighborhood of
radius $r$ of any $\bs x \in \cl K$, \ie that exactly $T$ discontinuity frontiers will
pass through this neighborhood, must be bounded and proved to decay rapidly when $r$ decreases and $T$
increases. This is indeed important to get a very general framework valid for
any non-linear map ${\sf F}$. Therefore, considering a $r$-covering of
$\cl K$ of cardinality $C_r$ with balls of radius
$r$ (whose cardinality is controlled by the Kolmogorov entropy), the
slicing of all those balls in these $T$ parts related to the
discontinuities of $f$ is still a $r$-covering of $\cl K$. Taking arbitrarily one point per ball
slice thus defines another $r$-net of $\cl K$ with cardinality at most
$T C_r$, on which
\eqref{eq:petros-embedding} can hold \whp by union bound, which
provides an easy extension to the whole set $\cl K$ by continuity. A
side effect of this elegant and general analysis is, however, the
necessity to control $P_T$.  

Our approach benefits of the choice of $f=\cl Q$. In our case, we
can soften the distance used in the quantized
embedding domain $\cl E = \bb Z_\delta^m$ in order to avoid any
estimation of $P_T$. As described in the different proofs provided in
Secs.~\ref{proof-of-main-prop}-\ref{sec:bi-dith-quant}, we can
then directly analyze the continuity of this new distance within a small
neighborhood of the image by $\bs \Phi$ of an $\eta$-net of $\cl K$
(with $\eta$ depending on $\epsilon$ and $\delta$), whose radius can
be estimated from the RIP of $\bs \Phi$.   
   
An interesting open problem would consist in the study of a general softening
procedure of the discontinuities induced by any non-linear
function $f$ in $\Demb$ through a map  $\Amapd_f$. We could then
analyze if fast linear encoding schemes are compatible with more
advanced non-linear maps, where $\cl Q$ in~\eqref{eq:quant-embed-def}
or in~\eqref{eq:bi-dithered-embedding} is replaced by, \eg non-uniform quantization
\cite{gray1998quantization,jacques2013stabilizing}, vector or binned
quantization~\cite{nguyen2010frame,pai2006nonadaptive}, \emph{non-regular}
quantization \cite{boufounos2012universal,boufounos2015representation}, or by more
general periodic functions as in~\cite{boufounos2015representation,boufounos2015universal}.

\section{Proof of Proposition~\ref{prop:main}}
\label{proof-of-main-prop}

Prop.~\ref{prop:main} is formally demonstrated by showing
first that we can embed, with high probability, $(\cl J :=\bs \Phi(\cl K - \cl K),
\ell_1)$ in $(\cl Q(\cl J + \bs \xi), \ell_1)$ for $\bs \xi \sim \cl U^m([0,\delta])$, and then using the
linear $(\ell_1,\ell_q)$-embedding allowed by $\bs \Phi$ (through the
assumed RIP) to relate $(\cl J :=\bs \Phi(\cl K - \cl K),
\ell_1)$ to $(\cl K - \cl K,
\ell_q)$.    

As explained in the proof sketch provided after Prop.~\ref{prop:main}, the first step of the proof is obtained by
softening the (pseudo) distance $\cl D_{\ell_1}(\cl Q(\bs a), \cl Q(\bs
a')) = \tinv{m} \|\cl Q(\bs a) - \cl Q(\bs
a')\|_1$, \ie the distance from which $\cl D_{\ell_1}(\Amap(\bs x), \Amap(\bs x'))$ is
derived by taking $\bs a = \bs \Phi \bs x + \bs \xi$ and $\bs a' = \bs \Phi
\bs x' + \bs \xi$ with $\Amapd$ defined
in~\eqref{eq:quant-embed-def}. 

According to a procedure defined in~\cite{jacques2015small} that we 
rewrite here for completeness, this is
done by softening each of the $m$ elements composing the sum
$m\cl D_{\ell_1}(\cl Q(\bs a), \cl Q(\bs
a')) = \sum_{i=1}^m |\cl Q(a_i) - \cl Q(a'_i)|$. Let us first observe
that, for $a, a' \in \bb R$, 
$$
\ts d(a,a') := |\cl Q(a) - \cl Q(a')| := \delta \sum_{k\in \bb Z} \bb I_{\cl S}( a - k\delta, a'
- k\delta), 
$$
with $\cl S = \{(a,a') \in \bb R^2: aa' < 0\}$ and $\bb I_{\cl C}(a,a')$
is the indicator of $\cl C$ evaluated in $(a, a')$, i.e., it is equal to 1 if $(a,a') \in \cl C$ and 0 otherwise. In words, $d$
counts the number of \emph{thresholds} in $\delta \bb Z$ that can be
inserted between $a$ and $a'$, \ie we can alternatively write $d(a,a') =
\delta |(\delta \bb Z) \cap [a,a']|$. 

Introducing the
set $\cl S^t = \{(a,a') \in \bb R^2: a < - t, a' > t\} \cup \{(a,a') \in
\bb R^2: a >  t, a' < - t\}$ for $t\in \bb R$, we can define a soft
version of $d$ by
\begin{equation}
  \label{eq:def-d-t}
  \ts d^t(a,a') := \delta \sum_{k\in \bb Z} \bb I_{\cl S^t}( a - k\delta, a'
  - k\delta).   
\end{equation}
From the definition of $\cl S^t$, the value of $t$ determines a set of forbidden (or relaxed) intervals $\delta \bb Z + [-|t|, |t|] = \{[k\delta-|t|, k\delta+|t|]: k \in \bb Z\}$ if $t > 0$ (resp. $t < 0$) of size $2|t|$ and centered on the quantizer thresholds in $\delta \bb Z$. For $t >0$ a threshold of $\delta \bb Z$ is not counted in $d^t(a,a')$ if $a$ or $a'$ fall in its forbidden interval, whereas for $t<0$ a threshold that is not between $a$ and $a'$ can be counted if $a$ or $a'$ fall inside its relaxed interval. 

This new (pseudo) distance is clearly a decreasing function of $t$. In fact, for $t\geq 0$,
$$
d^{t}(a,a') \leq d(a, a') \leq d^{-t}(a, a')
$$ 
since $\cl S^{t}\subset \cl S \subset \cl S^{-t}$. 
Moreover, we can show as in~\cite{jacques2015small} that 
\begin{align}
\label{eq:diff-dt-ds}
\big|d^t(a, a') - d^s(a, a')\big|&\leq\ 4(\delta + |t-s|),\\
\label{eq:diff-dt-dist}
\big|d^t(a, a') - |a - a'|\big|&\leq\ 4(\delta + |t|),\\   
\label{eq:ex-ected-dither-diff-dt-dist}
\big|\bb E d^t(a + \xi, a' + \xi) - |a - a'|\big|&\leq\ 4|t|,    
\end{align}
where $\xi \sim \cl U([0, \delta])$ and the last inequality is proved in~\cite[Appendix
C]{jacques2015small}. 

As expressed in the next lemma, the distance $d^t$ already displays a certain form of continuity that
does not hold for $d$.
\begin{lemma}
\label{lem:cont-smald-t}
For all $a,a' \in \bb R$, $t\in \bb R$ and $|r|,|r'|\leq \epsilon$, we have 
\begin{equation}
  \label{eq:dt-cont-prop}
  d^{t+\epsilon}(a,a') \leq d^{t}(a+r,a'+r') \leq d^{t-\epsilon}(a,a').
\end{equation}
\end{lemma}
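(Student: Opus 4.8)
The plan is to prove the double inequality \emph{term by term} in the series defining $d^t$. Recall from \eqref{eq:def-d-t} that $d^t(a,a') = \delta \sum_{k\in\bb Z} \bb I_{\cl S^t}(a-k\delta,\, a'-k\delta)$, where each summand is $\{0,1\}$-valued and, crucially, the perturbations $r,r'$ are common to all terms (they do not depend on the summation index $k$) and every summand is nonnegative. Hence it suffices to establish the \emph{pointwise} chain
$$
\bb I_{\cl S^{t+\epsilon}}(u,u') \;\leq\; \bb I_{\cl S^{t}}(u+r,\, u'+r') \;\leq\; \bb I_{\cl S^{t-\epsilon}}(u,u')
$$
for all $u,u'\in\bb R$, all $t\in\bb R$ and all $|r|,|r'|\leq\epsilon$, and then apply it with $u=a-k\delta$, $u'=a'-k\delta$, sum over $k$, and multiply by $\delta$.

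For the pointwise claim I would argue directly from $\cl S^t=\{a<-t,\,a'>t\}\cup\{a>t,\,a'<-t\}$. For the left inequality, assume $(u,u')\in\cl S^{t+\epsilon}$; in the first branch $u<-(t+\epsilon)$ and $u'>t+\epsilon$, whence $u+r<-(t+\epsilon)+r\leq-t$ (using $r\leq\epsilon$) and $u'+r'>(t+\epsilon)+r'\geq t$ (using $r'\geq-\epsilon$), so $(u+r,u'+r')\in\cl S^{t}$; the second branch is symmetric. For the right inequality, assume $(u+r,u'+r')\in\cl S^{t}$; in the first branch $u+r<-t$ and $u'+r'>t$, whence $u<-t-r\leq-(t-\epsilon)$ (using $r\geq-\epsilon$) and $u'>t-r'\geq t-\epsilon$ (using $r'\leq\epsilon$), so $(u,u')\in\cl S^{t-\epsilon}$; again the other branch is symmetric. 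Note this uses only $|r|,|r'|\leq\epsilon$, with the two one-sided bounds playing complementary roles in the two inequalities, and is valid regardless of the signs of $t$ and $t\pm\epsilon$.

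A slicker packaging of the same computation, which I might use instead, is to observe that $(u,u')\in\cl S^t$ if and only if $\phi(u,u')>t$, where $\phi(u,u'):=\max\!\big(\min(-u,u'),\,\min(u,-u')\big)$, and that $\phi$ is $1$-Lipschitz for the $\ell_\infty$-norm on $(u,u')$ (it is a composition of the $1$-Lipschitz maps $\min$, $\max$ and coordinate sign flips); then $|\phi(u+r,u'+r')-\phi(u,u')|\leq\max(|r|,|r'|)\leq\epsilon$ yields both inequalities at once. Either way there is no genuine obstacle here: the only points requiring care are the bookkeeping of which one-sided bound on $r$ (resp.\ $r'$) is invoked in each branch, and the remark that the term-by-term reduction is legitimate precisely because $r,r'$ are shared by all terms and the indicators are nonnegative.
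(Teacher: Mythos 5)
Your proof is correct and follows essentially the same route as the paper's: the paper also establishes the inclusions $\cl S^{t+\epsilon}(a,a')\subset\cl S^{t}(a+r,a'+r')\subset\cl S^{t-\epsilon}(a,a')$ branch by branch and concludes via monotonicity of indicators, which is exactly your pointwise indicator chain summed over $k$. Your Lipschitz repackaging via $\phi(u,u')=\max(\min(-u,u'),\min(u,-u'))$ is a valid alternative phrasing of the same computation, but not a genuinely different argument.
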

\begin{proof}
Considering the definition of the set $\cl S^t$ in
\eqref{eq:def-d-t}, we have
$$
\cl S^{t + \epsilon}(a, a') \subset 
\cl S^{t}(a+r, a'+r') \subset 
\cl S^{t - \epsilon}(a, a').
$$
Indeed, for instance $a+r < -t$ involves $a \leq - t +
\epsilon$, while $a+r > t$ involves $a \geq t -
\epsilon$, and similarly for all other conditions involved in the
definition of $\cl S^{t + \epsilon}$, $\cl S^{t}$ and $\cl S^{t - \epsilon}$. 
Considering these inclusions the result follows since for any indicator $\bb I_{\cl C} \leq \bb I_{\cl
  C'}$ if $\cl C \subset \cl C'$.
\end{proof}
From the scalar (pseudo) distance $d^t$, we can thus define for $\bs a, \bs a' \in
\bb R^m$, 
$$
\ts \cl D^t(\bs a, \bs a') = \tinv{m} \sum_{i} d^t(a_i,a'_i),
$$
with $\cl D^0(\bs a, \bs a') = \cl D_{\ell_1}\big(\cl Q(\bs a), \cl
Q(\bs a')\big)$. This distance inherits from $d^t$ the following
continuity property.  

\begin{lemma}[Continuity of $\cl D^t$ with respect to $\ell_p$-perturbations] 
\label{lem:continuity-Lq-Dt-new} 
Let $p \geq 1$ and $\bs a,\bs a',\bs r,\bs r' \in \bb R^m$. We assume
  that $\max(\|\bs r\|_p,\|\bs r'\|_p) \leq \eta \sqrt[p]{m}$ for some $\eta >0$. Then for every $t\in \bb R$ and
  $P\geq 1$ one has
  \begin{equation}
    \label{eq:continuity-L1-Dt}
    \ts \cl D^{t+\eta \sqrt[p]{P}}(\bs a,\bs a') - 8 (P^{-1}\delta +
    \eta) \leq 
    \cl D^t(\bs a + \bs r,\bs a' + \bs r') \leq 
    \cl D^{t-\eta \sqrt[p]{P}}(\bs a,\bs a') + 8 (P^{-1}\delta +
    \eta).
  \end{equation}  
\end{lemma}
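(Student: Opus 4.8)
The plan is to reduce the vector inequality \eqref{eq:continuity-L1-Dt} to the scalar continuity of Lemma~\ref{lem:cont-smald-t}, applied one coordinate at a time, while isolating the few coordinates on which the perturbation is large. Write $\epsilon_0 := \eta\sqrt[p]{P}$ and, for each $i\in[m]$, set $\mu_i := \max(|r_i|,|r'_i|)$. Since $|r_i|\leq\mu_i$ and $|r'_i|\leq\mu_i$, Lemma~\ref{lem:cont-smald-t} applied with $\epsilon=\mu_i$ gives, for all $t\in\bb R$,
\begin{equation*}
d^{t+\mu_i}(a_i,a'_i)\ \leq\ d^t(a_i+r_i,a'_i+r'_i)\ \leq\ d^{t-\mu_i}(a_i,a'_i).
\end{equation*}
I split $[m]=\cl G\cup\cl B$ into the ``good'' coordinates $\cl G:=\{i:\mu_i\leq\epsilon_0\}$ and the ``bad'' ones $\cl B:=\{i:\mu_i>\epsilon_0\}$. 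On $\cl G$ the monotonicity of $s\mapsto d^s$ (noted right after \eqref{eq:def-d-t}) turns the two bounds above into $d^{t+\epsilon_0}(a_i,a'_i)\leq d^t(a_i+r_i,a'_i+r'_i)\leq d^{t-\epsilon_0}(a_i,a'_i)$, with no extra error. On $\cl B$ I pass from $d^{t\mp\mu_i}$ to $d^{t\mp\epsilon_0}$ by means of \eqref{eq:diff-dt-ds}, which costs at most $4(\delta+|\mu_i-\epsilon_0|)\leq 4(\delta+\mu_i)$ per coordinate.

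The next step bounds the size of the bad set, both in cardinality and in total perturbation. Using $\mu_i^p=\max(|r_i|^p,|r'_i|^p)\leq|r_i|^p+|r'_i|^p$ together with $\mu_i>\epsilon_0$ on $\cl B$ and the hypothesis $\max(\|\bs r\|_p,\|\bs r'\|_p)\leq\eta\sqrt[p]{m}$, I get
\begin{equation*}
|\cl B|\,\epsilon_0^p\ \leq\ \textstyle\sum_{i\in\cl B}\mu_i^p\ \leq\ \|\bs r\|_p^p+\|\bs r'\|_p^p\ \leq\ 2\eta^p m,
\end{equation*}
hence $|\cl B|\leq 2m/P$ since $\epsilon_0^p=\eta^p P$. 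The power-mean (Jensen) inequality then yields $\sum_{i\in\cl B}\mu_i\leq|\cl B|^{1-1/p}\big(\sum_{i\in\cl B}\mu_i^p\big)^{1/p}\leq(2m/P)^{1-1/p}(2\eta^p m)^{1/p}=2\eta m\,P^{-(1-1/p)}\leq 2\eta m$, using $P\geq 1$.

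Summing the coordinatewise inequalities over $i\in[m]$ (zero error on $\cl G$, error $4(\delta+\mu_i)$ on $\cl B$) and dividing by $m$ gives
\begin{equation*}
\cl D^t(\bs a+\bs r,\bs a'+\bs r')\ \leq\ \cl D^{t-\epsilon_0}(\bs a,\bs a')+\tfrac{4}{m}\Big(\delta|\cl B|+\textstyle\sum_{i\in\cl B}\mu_i\Big)\ \leq\ \cl D^{t-\epsilon_0}(\bs a,\bs a')+8\big(P^{-1}\delta+\eta\big),
\end{equation*}
which is the right-hand side of \eqref{eq:continuity-L1-Dt} with $\epsilon_0=\eta\sqrt[p]{P}$; the left-hand side follows identically from the lower bound $d^t(a_i+r_i,a'_i+r'_i)\geq d^{t+\mu_i}(a_i,a'_i)$, with $t-\epsilon_0$ replaced by $t+\epsilon_0$ throughout. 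The one point requiring care is the bad coordinates: it is tempting to control $d^t(a_i+r_i,a'_i+r'_i)$ there via $d^0=d$ and \eqref{eq:diff-dt-dist}, but this reintroduces a term $4|t|$ that cannot be absorbed because $t$ ranges over all of $\bb R$. Comparing $d^{t\mp\mu_i}$ directly with $d^{t\mp\epsilon_0}$ through \eqref{eq:diff-dt-ds} keeps the error independent of $t$, and the calibration $\epsilon_0=\eta\sqrt[p]{P}$ is exactly what makes $\delta|\cl B|/m$ and $\sum_{i\in\cl B}\mu_i/m$ come out of order $\delta/P$ and $\eta$, respectively.
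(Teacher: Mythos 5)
Your proof is correct and follows essentially the same route as the paper's: split the coordinates into those where the perturbation exceeds $\eta\sqrt[p]{P}$ and those where it does not, apply the scalar continuity of Lemma~\ref{lem:cont-smald-t} on the good coordinates, control the bad ones via \eqref{eq:diff-dt-ds}, and bound $|\cl B|\leq 2m/P$ and $\sum_{i\in\cl B}\mu_i\leq 2\eta m$ exactly as in the paper (your H\"older step even gives the slightly sharper $2\eta m P^{-(1-1/p)}$). The constants and the calibration $\epsilon_0=\eta\sqrt[p]{P}$ match the paper's argument, so nothing further is needed.
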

\begin{proof}
The proof is inspired by the proof of Lemma 5.5 in
\cite{plan2014dimension} valid for 1-bit (sign) quantization, and an adaptation of~\cite[Lemma
3]{jacques2015small} to $\ell_p$-perturbations. By assumption, the set 
$$
T := \{i \in [m]: |r_i| \leq \eta \sqrt[p]{P},\, |r'_i| \leq \eta \sqrt[p]{P}\}
$$
is such that $|T^{\compl}| \leq 2 m/P$ as $2\eta^p m \geq \|\bs r\|^p_p + \|\bs r'\|^p_p \geq \|(\bs r)_{T}\|^p_p + \|(\bs r')_{T}\|^p_p + |T^\compl|\eta^p P \geq
|T^\compl|\eta^p P$. Therefore, considering
Lemma~\ref{lem:cont-smald-t}, we find by simple inequalities and with 
$\rho_i := \max(|r_i|,|r'_i|)$,
\begin{align*}
&\cl D^{t+\eta \sqrt[p]{P}}(\bs a,\bs a') =\ts \tfrac{1}{m} \sum_{i=1}^m
  d^{t+\eta \sqrt[p]{P}}(a_i, a'_i)\\
&\leq \ts \tfrac{1}{m} \sum_{i\in T}
  d^{t+\eta \sqrt[p]{P}}(a_i, a'_i) + 
  \tfrac{1}{m} \sum_{i\in T^\compl}
  d^{t+\eta \sqrt[p]{P}}(a_i, a'_i)\\  
&\leq \ts \tfrac{1}{m} \sum_{i\in T}
  d^{t}(a_i + r_i, a'_i + r'_i) + 
  \tfrac{1}{m} \sum_{i\in T^\compl}
  d^{t}(a_i + r_i, a'_i + r'_i)\\  
&+ \ts
  \tfrac{1}{m} \sum_{i\in T^\compl}
  \big[ d^{t+\eta \sqrt[p]{P} - \rho_i}(a_i + r_i, a'_i + r'_i) - d^{t}(a_i + r_i, a'_i + r'_i)\big].
\end{align*}
Using~\eqref{eq:diff-dt-ds} to bound the last sum of the last
expression and since, by definition of $T$, $\rho_i \geq \eta \sqrt[p]{P}$ for $i\in T^\compl$, we find 
\begin{align*}
\cl D^{t+\eta \sqrt[p]{P}}(\bs a,\bs a')&\textstyle \leq \cl D^{t}(\bs
                                       a + \bs r,\bs a' + \bs r') +
                                       \tfrac{4}{m} \sum_{i\in
                                       T^\compl} (\delta + \rho_i - \eta \sqrt[p]{P})\\
&\textstyle \leq \cl D^{t}(\bs a + \bs r,\bs a' + \bs r') +
\tfrac{8\delta}{P} + \tfrac{4}{m}\sum_{i\in T^\compl} \rho_i.
\end{align*}
However, 
$$
\ts \sum_{i\in T^\compl} \rho_i \leq \|\bs
r_{T^\compl}\|_1 + \|\bs
r'_{T^\compl}\|_1\leq m^{1-1/p}\|\bs
r\|_p + m^{1-1/p}\|\bs
r'\|_p \leq \,2\eta m,
$$
so that
\begin{align*}
\cl D^{t+\eta \sqrt[p]{P}}(\bs a,\bs a')
&\textstyle \leq \cl D^{t}(\bs a + \bs r,\bs a' + \bs r') +
\tfrac{8\delta}{P} + 8 \eta,
\end{align*}
which provides the lower bound of~\eqref{eq:continuity-L1-Dt}.
For the upper bound, 
\begin{align*}
&\cl D^{t-\eta \sqrt[p]{P}}(\bs a,\bs a') =\ts \tfrac{1}{m} \sum_{i=1}^m d^{t-\eta \sqrt[p]{P}}(a_i,a'_i)\\
&\geq \ts \tfrac{1}{m} \sum_{i\in T} d^{t}(a_i + r_i, a'_i
  + r'_i) + \tfrac{1}{m} \sum_{i\in T^\compl} d^{t-\eta
  \sqrt[p]{P}+\rho_i}(a_i + r_i, a'_i + r'_i)\\  
&\geq \ts \cl D^{t}(\bs a + \bs r,\bs a' + \bs r') - \big|\tfrac{1}{m}
  \sum_{i\in T^\compl} \big(d^{t-\eta \sqrt[p]{P}+\rho_i}(a_i + r_i,a'_i + r'_i) - d^{t}(a_i + r_i, a'_i + r'_i)\big)\big|,
\end{align*}
and the last sum is bounded as above.  
\end{proof}

Before delving into the proof of Prop.~\ref{prop:main} and following
its aforementioned proof sketch, we now need to
study how the random
variable $\cl D^t(\bs a + \bs \xi, \bs a' + \bs \xi)$, for $\bs \xi
\sim \cl U^m([0, \delta])$, concentrates around its expected value,
and how it deviates from $\cl D_{\ell_1}(\bs a, \bs a')$. 
\begin{lemma}
\label{lem:non-uniform-concent-soften-dist}
Let $\bs \xi \sim \cl U^m([0, \delta])$ and $\bs a, \bs a' \in \bb
R^m$. We have 
\begin{align}
  \label{eq:soft-dist-concent}
  \bb P_{\bs \xi}[ \big | \cl D^t(\bs a + \bs \xi, \bs a' + \bs \xi) -
  \cl D_{\ell_1}(\bs a, \bs a') \big | > 4|t| + \epsilon (\delta + |t|) ]\quad&\lesssim\quad e^{-c \epsilon^2 m}.\\
  \label{eq:bound-on-expect-dist-t}
  \bb |\bb E \cl D^t(\bs a + \bs \xi, \bs a' + \bs \xi) - \cl
    D_{\ell_1}(\bs a, \bs a')|\quad&\lesssim\quad |t|.
\end{align}
\end{lemma}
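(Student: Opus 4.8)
The plan is to split $\cl D^t(\bs a + \bs \xi, \bs a' + \bs \xi) - \cl D_{\ell_1}(\bs a, \bs a')$ into a deterministic bias and a random fluctuation, controlling the bias with the scalar estimate~\eqref{eq:ex-ected-dither-diff-dt-dist} and the fluctuation with a bounded-difference (Hoeffding) concentration inequality applied to the $m$ independent summands indexed by the entries of $\bs \xi$.

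For~\eqref{eq:bound-on-expect-dist-t}, I would simply observe that $\cl D^t(\bs a + \bs \xi, \bs a' + \bs \xi) = \tinv m \sum_{i=1}^m d^t(a_i + \xi_i, a'_i + \xi_i)$ with the $\xi_i$ \iid as $\cl U([0,\delta])$, so that linearity of expectation together with~\eqref{eq:ex-ected-dither-diff-dt-dist} applied componentwise gives $|\bb E \cl D^t(\bs a + \bs \xi, \bs a' + \bs \xi) - \cl D_{\ell_1}(\bs a, \bs a')| \leq \tinv m \sum_i |\bb E_{\xi_i} d^t(a_i + \xi_i, a'_i + \xi_i) - |a_i - a'_i|| \leq 4|t| \lesssim |t|$, which is exactly the second claim.

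For~\eqref{eq:soft-dist-concent}, set $Z_i := d^t(a_i + \xi_i, a'_i + \xi_i)$; these are independent since the $\xi_i$ are. The only point requiring care is to identify the right concentration scale: although $Z_i$ itself can be as large as roughly $|a_i - a'_i|$ (hence is \emph{not} uniformly bounded over all choices of $\bs a,\bs a'$), the recentered variable $Z_i - |a_i - a'_i|$ \emph{is} bounded uniformly, since~\eqref{eq:diff-dt-dist} gives $|Z_i - |a_i - a'_i|| \leq 4(\delta + |t|)$ almost surely. Thus each $Z_i$ lies in an interval of width $8(\delta + |t|)$ whose position depends on $(\bs a,\bs a')$ but whose width does not, and Hoeffding's inequality yields, for some $c'>0$, $\bb P_{\bs \xi}[\,|\tinv m \sum_i (Z_i - \bb E Z_i)| > u\,] \leq 2\exp(-c' m u^2/(\delta + |t|)^2)$. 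Taking $u = \epsilon(\delta + |t|)$ bounds this probability by $2\exp(-c'\epsilon^2 m) \lesssim e^{-c\epsilon^2 m}$. On the complementary event, combining with the bias bound just established via the triangle inequality gives $|\cl D^t(\bs a + \bs \xi, \bs a' + \bs \xi) - \cl D_{\ell_1}(\bs a, \bs a')| \leq |\bb E \cl D^t - \cl D_{\ell_1}| + |\cl D^t - \bb E \cl D^t| \leq 4|t| + \epsilon(\delta + |t|)$, which is~\eqref{eq:soft-dist-concent}.

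I do not expect a genuine obstacle here; the one thing worth isolating is that the deviation scale in the concentration step must be the $(\bs a,\bs a')$-independent quantity $\delta + |t|$ rather than something growing with $\|\bs a - \bs a'\|_\infty$ — which is precisely what the deterministic bound~\eqref{eq:diff-dt-dist} provides, and which is what will later make the union bound over a net of $\bs \Phi \cl K$ work. One could equally invoke the sub-Gaussian tail machinery of~\cite{vershynin2010introduction} in place of Hoeffding, but the bounded-summand version is the most direct route and, in keeping with the paper's conventions, I would not track the constants.
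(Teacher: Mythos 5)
Your proposal is correct and follows essentially the same route as the paper: both center the summands at $|a_i-a'_i|$, use the uniform bound~\eqref{eq:diff-dt-dist} to get a deviation scale of $\delta+|t|$ independent of $(\bs a,\bs a')$, apply a Hoeffding-type concentration inequality to the sum of independent terms (the paper phrases this via sub-Gaussian norms and rotation invariance, which is equivalent), and control the bias with~\eqref{eq:ex-ected-dither-diff-dt-dist}. No gaps.
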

\begin{proof}
Most of the elements of this proof are already detailed in
\cite{jacques2015small} but we prefer to rewrite them here briefly under our
specific notations.  Denoting $Z^t_i := d^t(a_i + \xi_i, a'_i + \xi_i)
- |a_i - a'_i|$,~\eqref{eq:diff-dt-dist} provides $\|Z^t_i\|_{\psi_2}
\leq \|Z^t_i\|_{\infty} \lesssim \delta + |t|$, which shows that each
\rv $Z^t_i$ is sub-Gaussian. Therefore, by the approximate rotation invariance of sub-Gaussian
variables~\cite{vershynin2010introduction}, $Z^t := \sum_i Z^t_i =
m \cl D^t(\bs a + \bs \xi, \bs a' + \bs \xi) - \|\bs a - \bs a'\|_1$ is
sub-Gaussian with norm 
$$
\ts \|Z^t\|^2_{\psi_2} \lesssim m(\delta + |t|)^2.
$$ 
Consequently, there is a
$c>0$ such that $\bb P(|Z^t - \bb E Z^t| > \epsilon) \leq e\,\exp(-c
\epsilon^2/\|Z^t\|^2_{\psi_2})$, so that
$$
\ts \bb P( |\tinv{m}Z^t| > |\tinv{m}\bb E Z^t| + \epsilon (\delta + |t|)) \leq \bb P( |\tinv{m}(Z^t - \bb E Z^t)| > \epsilon (\delta + |t|))
\lesssim \exp(- c\epsilon^2m),
$$
which provides~\eqref{eq:soft-dist-concent} since from~\eqref{eq:ex-ected-dither-diff-dt-dist}
$$
|\tinv{m}\bb E Z^t| = |\bb E\cl D^t(\bs a + \bs \xi, \bs a' + \bs \xi) -
\tinv{m} \|\bs a - \bs a'\|_1| \leq 4 |t|,
$$
as written in~\eqref{eq:bound-on-expect-dist-t}.   
\end{proof}

\begin{proof}[Proof of Prop.~\ref{prop:main}]

Let us define an $(\ell_q,\eta)$-covering $\cl K_\eta$ of the set $\cl K$ with
$\log |\cl K_\eta| \leq \cl H_q(\cl K, \eta)$. We fix $\epslin, \epsilon \in (0,1)$ and $t\in \bb R$, then assume that $\bs
\Phi$ satisfies the $(\ell_1,\ell_q)$-RIP$(\cl K - \cl K, \epslin)$. Therefore,
for all $\bs x, \bs x' \in \cl K$ with their respective closest vectors
in $\cl K_\eta$ being $\bs x_0$ and $\bs x'_0$, we have from~\eqref{eq:RIPpq}
\begin{equation}
  \label{eq:bounded-image-resid}
\cl D_{\ell_1}(\bs \Phi \bs x, \bs \Phi \bs x_0)\leq m\,(1+\epslin) \|\bs x
- \bs x_0\|_q \leq 2m\eta\quad \text{and}\quad \cl D_{\ell_1}(\bs \Phi \bs x', \bs \Phi \bs x'_0) \leq 2m \eta.  
\end{equation}

Moreover, since $\bs \Phi$ is fixed, we observe from a union bound
applied on~\eqref{eq:soft-dist-concent} that if $m \gtrsim \epsilon^{-2} \cl H_q(\cl K,
\eta)$ then, for $P \geq 1$ to be fixed later, both relations
\begin{align}
  \label{eq:concent-cover-upper-2}
| \cl D^{t-\eta P} (\bs \Phi \bs x_0 + \bs \xi, \bs \Phi \bs x'_0 + \bs \xi) -
  \cl D_{\ell_1}(\bs \Phi \bs x_0, \bs \Phi \bs x'_0) \big |&\leq
4|t|+4\eta P +\epsilon(\delta+ |t| +\eta P),\\  
  \label{eq:concent-cover-lower-2}
| \cl D^{t+\eta P}(\bs \Phi \bs x_0 + \bs \xi, \bs \Phi \bs x'_0 + \bs \xi) -
  \cl D_{\ell_1}(\bs \Phi \bs x_0, \bs \Phi \bs x'_0) \big |&\leq 
4|t|+4\eta P +\epsilon(\delta+ |t| +\eta P),
\end{align}
hold jointly for all $\bs x_0, \bs x'_0$ in $\cl
K_\eta$ (\ie on no more than $|\cl K_{\eta}|^2 \leq \exp(2 \cl H_q(\cl
K, \eta))$ vectors) with probability exceeding $1 - C e^{-c \epsilon^2 m}$ for some
$C,c>0$. 
  
Consequently, since for any $\bs x, \bs x' \in \cl K$, we can
always write $\bs x = \bs x_0 + \bs r$ and $\bs x' = \bs
x_0' + \bs r'$ with $\bs x_0, \bs x_0' \in \cl K_{\eta}$
and $\max(\|\bs r\|_q, \|\bs r'\|_q) \leq \eta$, using Lemma
\ref{lem:continuity-Lq-Dt-new} with $p=1$,
\eqref{eq:concent-cover-upper-2} and again
\eqref{eq:bounded-image-resid}, we have with the same probability and
for some $c>0$, 
\begin{align*}
\cl D^{t} (\bs \Phi \bs x + \bs \xi, \bs \Phi \bs x' + \bs \xi)&\leq \cl D^{t-\eta P} (\bs \Phi \bs x_0 + \bs \xi, \bs \Phi \bs x_0' + \bs \xi)   
+ 8 (P^{-1} \delta + \eta)\\
&\leq \cl D_{\ell_1}(\bs \Phi \bs x_0, \bs \Phi \bs x'_0) + 
4|t|+4\eta P +\epsilon(\delta+ |t| +\eta P) + 8 (P^{-1} \delta + \eta)\\
&\leq \cl D_{\ell_1}(\bs \Phi \bs x, \bs \Phi \bs x') + 
4|t|+2\eta (2 P + 1) +\epsilon(\delta+ |t| +\eta P) + 8 (P^{-1} \delta + \eta)\\
&\leq \cl D_{\ell_1}(\bs \Phi \bs x, \bs \Phi \bs x') + c(|t| +
  \delta\epsilon)\\
&\leq (1+\epslin) \|\bs x - \bs x'\|_q + c(|t| +
  \delta\epsilon),
\end{align*}
where, for some $c>0$, we set the free parameters to
$P^{-1} = \epsilon$ and $\eta = \delta \epsilon^{2} < \delta\epsilon$ giving $\eta P =
\delta \epsilon$. The lower bound is obtained
similarly using~\eqref{eq:concent-cover-lower-2}, and
Prop.~\ref{prop:main} is finally obtained with $t=0$.
\end{proof}

\section{Proof of Proposition~\ref{prop:distorted-main}}
\label{sec:proof-distorted-embedding}

The proof of this proposition is essentially similar to the one of
Prop.~\ref{prop:main}.  We formally show first that we can embed, with high probability, $(\cl J :=\bs \Phi(\cl K - \cl K),
\ell_2)$ in $(\cl Q(\cl J + \bs \xi), \ell_2)$ for $\bs \xi \sim \cl U^m([0,\delta])$, and then, using the
linear $(\ell_2,\ell_2)$-embedding allowed by $\bs \Phi$ (through the
RIP), we relate $(\cl J :=\bs \Phi(\cl K - \cl K),
\ell_2)$ to $(\cl K - \cl K,
\ell_2)$. However, some subtle yet important differences come from the quadratic nature of
$\Demb = \cl D_{\ell_2}$. First, conversely to $\cl D_{\cl E} = \cl
D_{\ell_1}$ where $\bb E \cl D_{\ell_1}(\Amap(\bs x), \Amap(\bs x)) =
\|\bs \Phi(\bs x-\bs x')\|_1$, in the quadratic case $\bb E \cl
D_{\ell_2}(\Amap(\bs x), \Amap(\bs x)) \neq \|\bs \Phi(\bs x-\bs x')\|^2$, \ie a
distortion is already present in expectation. Second, the
concentration of $\cl D_{\ell_2}(\Amap(\bs x), \Amap(\bs x))$ around its
mean is not independent of $\|\bs \Phi(\bs x-\bs x')\|$ anymore.  And finally,
the mandatory softening of the pseudo-distance $\cl D_{\ell_2}(\Amap(\bs x), \Amap(\bs
x))$, as imposed to reobtain a certain form of continuity in the
neighborhoods of $\bs x$ and $\bs x'$, also depends on
$\|\bs \Phi(\bs x - \bs x')\|$. We will see below that these three effects worsen the additive
distortion of the $(\ell_2,\ell_2)$-quantized embedding.

We start by considering the general soft (pseudo) distance 
$$
\ts \cl D_2^t(\bs a, \bs a') := \tinv{m} \sum_{i} (d^t(a_i,a'_i))^2,
$$
for $\bs a, \bs a' \in \bb R^m$ and $d^t$ defined as before.
We thus have $\cl D_2^0(\bs a, \bs a') = \cl D_{\ell_2}\big(\cl Q(\bs
a), \cl Q(\bs a')\big)$.

As for $\cl D^t$ in the previous section, $\cl D^t_2$ can be shown to
respect a certain form of continuity under $\ell_2$-perturbations, \ie
those induced by the image under $\bs \Phi$ of any
$\ell_2$-perturbation of a vector in $\cl K$ around its image thanks
to the $(\ell_2,\ell_2)$-RIP. 
\begin{lemma}[Continuity of $\cl D_2^t$ with respect to $\ell_2$-perturbations] 
\label{lem:continuity-L2-Dt2} 
Let $\bs a,\bs a',\bs r,\bs r' \in \bb R^m$. We assume
  that $\max(\|\bs r\|,\|\bs r'\|) \leq \eta \sqrt{m}$ for some $\eta >0$. Then for every $t\in \bb R$ and
  $P\geq 1$ one has
  \begin{align}
    \label{eq:continuity-L2-Dt2-upper}
    \cl D^t_2(\bs a + \bs r,\bs a' + \bs r') - \ts \cl D_2^{t-\eta
    \sqrt{P}}(\bs a,\bs a')&\lesssim \tfrac{\eta\sqrt P + \delta}{\sqrt {m P}} \|\bs a - \bs
                            a'\| + (\delta + |t| + \eta)
                            \eta + \delta (\delta + |t|)\, \tfrac{2}{P},\\
    \label{eq:continuity-L2-Dt2-lower}
    \ts \cl D_2^{t+\eta \sqrt{P}}(\bs a,\bs a') - \cl D^t_2(\bs a +
    \bs r,\bs a' + \bs r')&\lesssim \tfrac{\eta\sqrt P + \delta}{\sqrt {m P}} \|\bs a - \bs
                            a'\| + (\delta + |t| + \eta)
                            \eta + \delta (\delta + |t|)\, \tfrac{2}{P}.
  \end{align}
\end{lemma}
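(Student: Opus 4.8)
The argument will mirror the proof of Lemma~\ref{lem:continuity-Lq-Dt-new}, with the additional care demanded by the squares. First I would introduce $T := \{i\in[m]:\ |r_i|\le \eta\sqrt P,\ |r'_i|\le \eta\sqrt P\}$ and $\rho_i := \max(|r_i|,|r'_i|)$; since $\rho_i^2 = \max(r_i^2,(r'_i)^2)\le r_i^2+(r'_i)^2$, summation gives $\sum_i\rho_i^2 \le \|\bs r\|^2+\|\bs r'\|^2\le 2\eta^2 m$, and as $\rho_i^2>\eta^2 P$ on $T^\compl$ this also forces $|T^\compl|\le 2m/P$, exactly as in the $\ell_1$ case. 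For $i\in T$, Lemma~\ref{lem:cont-smald-t} with $\epsilon=\eta\sqrt P$ yields $d^{t+\eta\sqrt P}(a_i,a'_i)\le d^t(a_i+r_i,a'_i+r'_i)\le d^{t-\eta\sqrt P}(a_i,a'_i)$, and since each $d^s\ge 0$ these inequalities survive squaring; hence on $T$ the summands of $\cl D_2^{t\pm\eta\sqrt P}(\bs a,\bs a')$ already dominate, resp. are dominated by, those of $\cl D_2^t(\bs a+\bs r,\bs a'+\bs r')$, so the whole discrepancy in~\eqref{eq:continuity-L2-Dt2-upper}--\eqref{eq:continuity-L2-Dt2-lower} is carried by $T^\compl$.

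For $i\in T^\compl$ I would invoke Lemma~\ref{lem:cont-smald-t} once more, now with $\epsilon=\rho_i$, and then use~\eqref{eq:diff-dt-ds} to trade the resulting soft distance for the one $X_i\ge 0$ that actually appears on the other side of the claimed inequality, at the cost of an error $e_i$ with $|e_i|\le 4(\delta+\rho_i-\eta\sqrt P)\le 4(\delta+\rho_i)$. The excess on $T^\compl$ then becomes $\tfrac1m\sum_{i\in T^\compl}\bigl((X_i+e_i)^2-X_i^2\bigr) \le \tfrac1m\sum_{i\in T^\compl}\bigl(2|e_i|\,X_i + e_i^2\bigr)$, where by~\eqref{eq:diff-dt-dist} one has $X_i\le |a_i-a'_i|+2\rho_i+4(\delta+|t|+\eta\sqrt P)$. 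The quadratic piece is handled by $e_i^2\lesssim\delta^2+\rho_i^2$, giving $\tfrac1m\sum_{T^\compl}e_i^2\lesssim \tfrac{|T^\compl|}{m}\delta^2 + \tfrac1m\sum_i\rho_i^2\lesssim \tfrac{\delta^2}{P}+\eta^2$. For the cross piece, $2|e_i|\,X_i\lesssim(\delta+\rho_i)\bigl(|a_i-a'_i|+\rho_i+\delta+|t|+\eta\sqrt P\bigr)$; summing over $T^\compl$, the products $\delta|a_i-a'_i|$ and $\rho_i|a_i-a'_i|$ are estimated by Cauchy--Schwarz with $\sqrt{|T^\compl|}\lesssim\sqrt{m/P}$ and $\|\bs\rho\|\lesssim\eta\sqrt m$, which yields precisely $\tfrac{\eta\sqrt P+\delta}{\sqrt{mP}}\|\bs a-\bs a'\|$, while the remaining products $(\delta+\rho_i)(\rho_i+\delta+|t|+\eta\sqrt P)$ sum, again by Cauchy--Schwarz together with $|T^\compl|\le 2m/P$ and $\sum_i\rho_i^2\le 2\eta^2 m$, to $O\bigl((\delta+|t|+\eta)\eta+\delta(\delta+|t|)/P\bigr)$. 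Adding the three contributions and absorbing numerical constants into $\lesssim$ delivers both~\eqref{eq:continuity-L2-Dt2-upper} and~\eqref{eq:continuity-L2-Dt2-lower}.

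The main obstacle, and the genuinely new feature compared with Lemma~\ref{lem:continuity-Lq-Dt-new}, is the cross term $2|e_i|\,X_i$: unlike the linear case, $X_i$ is of the size of $d^t$, hence can be as large as $\|\bs a-\bs a'\|_\infty$, so it cannot be bounded crudely. The resolution is that $X_i$ is multiplied by $|e_i|\lesssim\delta+\rho_i$, which is small \emph{on average} over $T^\compl$ because $\sum_i\rho_i^2\le 2\eta^2 m$ and $|T^\compl|\le 2m/P$; a careful Cauchy--Schwarz splitting then transfers this average smallness into the prefactor $\tfrac{\eta\sqrt P+\delta}{\sqrt{mP}}$ in front of $\|\bs a-\bs a'\|$, which is the one term in the statement that has no analogue in~\eqref{eq:continuity-L1-Dt}. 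A minor bookkeeping point is that in the upper-bound direction~\eqref{eq:continuity-L2-Dt2-upper} the bound~\eqref{eq:diff-dt-dist} is applied to $d^{t-\eta\sqrt P}$, introducing the slightly larger shift $|t|+\eta\sqrt P$; but the extra $\eta\sqrt P$ only produces terms $\delta\eta/\sqrt P\le\delta\eta$ and $\eta^2$, both already contained in the claimed bound.
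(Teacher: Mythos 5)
Your proof is correct and uses essentially the same machinery as the paper: the paper proves this lemma by specializing the matrix-version Lemma~\ref{lem:continuity-L2-Dt2-matrix-version} (whose Appendix-B proof rests on exactly your ingredients — the set $T$ with $|T^\compl|\leq 2m/P$, Lemma~\ref{lem:cont-smald-t}, the bounds \eqref{eq:diff-dt-ds}--\eqref{eq:diff-dt-dist}, and Cauchy--Schwarz over $T^\compl$), and your expansion $(X_i+e_i)^2-X_i^2=2e_iX_i+e_i^2$ is just the scalar specialization of the product telescoping used there. The only difference is presentational: you argue directly on the squared scalar distance rather than routing through the more general bi-dithered lemma.
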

\begin{proof}
The result is achieved by simply taking $\bs A = \bs a \bs 1_2^\transp$, $\bs A' = \bs a' \bs
1_2^\transp$, $\bs R = \bs r \bs 1_2^\transp$ and $\bs R' = \bs r' \bs
1_2^\transp$ in the more general Lemma
\ref{lem:continuity-L2-Dt2-matrix-version} given in
Sec.~\ref{sec:bi-dith-quant}.  
\end{proof}

The question is now to study the concentration properties of $\cl D_2^t(\bs a + \bs \xi,
\bs a' + \bs \xi)$ for $\bs \xi \sim \cl U^m([0,\delta])$. We first
observe how $(d^t)^2$ behaves under a scalar dithering of its inputs. 
\begin{lemma}
\label{lem:expect-ellp-soften-dist-bounds}
Let $\xi \sim \cl U([0, \delta])$ and $a, a' \in \bb
R$. For $|a-a'|
  \leq \delta - 2|t|$, we have 
\begin{equation}
\label{eq:tight-bound-on-expect-ellp-dist-t-small-dist}
\ts  \delta (|a-a'| -
4|t|)_+\ \leq\ \bb
E\,\big(d^t(a + \xi, a' + \xi)\big)^2\ \leq\ \delta (|a-a'|
+ 4|t|),
\end{equation}
while, regardless of the value of $|a-a'|$,  
\begin{align}
  \label{eq:lower-bound-on-expect-ellp-dist-t}
&\bb E\, \big(d^t(a + \xi, a' + \xi)\big)^2 \ \geq\ \big(|a - a'| - 4|t|\big)_+^2\\
  \label{eq:upper-bound-on-expect-ellp-dist-t}
&\bb E\, \big(d^t(a + \xi, a' + \xi)\big)^2\ \leq\ \big(|a - a'| + 4(\delta +
|t|)\big)^{2} (|a-a'| + 4|t|)\big).
\end{align}
\end{lemma}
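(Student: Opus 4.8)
The plan is to split the three inequalities according to whether $|a-a'|$ is small relative to $\delta$ and $|t|$. The first inequality \eqref{eq:tight-bound-on-expect-ellp-dist-t-small-dist} rests on the observation that when $|a-a'|\le\delta-2|t|$ the random variable $X:=\delta^{-1}d^t(a+\xi,a'+\xi)$ is $\{0,1\}$-valued. To see this I would unwind the definition of $\cl S^t$ in \eqref{eq:def-d-t}: a threshold $k\delta$ contributes to $d^t(a+\xi,a'+\xi)$ only if $(a+\xi-k\delta,\,a'+\xi-k\delta)\in\cl S^t$, which (after distinguishing $a<a'$ from $a>a'$, and the sign of $t$) confines $k\delta$ to an open interval of length $|a-a'|+2|t|$ --- for $t<0$ the relaxed intervals widen the admissible span by $2|t|$, while for $t>0$ the forbidden intervals only narrow it. Since an open interval of length $\le\delta$ contains at most one multiple of $\delta$, at most one threshold is active, so $X\in\{0,1\}$ and $X^2=X$. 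Hence $\bb E\big(d^t(a+\xi,a'+\xi)\big)^2=\delta^2\,\bb E X=\delta\,\bb E\, d^t(a+\xi,a'+\xi)$, and \eqref{eq:ex-ected-dither-diff-dt-dist} (proved in \cite{jacques2015small}) pins $\bb E\, d^t(a+\xi,a'+\xi)$ between $|a-a'|-4|t|$ and $|a-a'|+4|t|$; replacing the lower bound by its positive part, legitimate since the left-hand side is nonnegative, gives \eqref{eq:tight-bound-on-expect-ellp-dist-t-small-dist}.

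For the general-distance inequalities I would not use the Bernoulli structure. The lower bound \eqref{eq:lower-bound-on-expect-ellp-dist-t} follows from Jensen's inequality, $\bb E\big(d^t\big)^2\ge\big(\bb E\, d^t\big)^2$, together with $\bb E\, d^t(a+\xi,a'+\xi)\ge|a-a'|-4|t|$ from \eqref{eq:ex-ected-dither-diff-dt-dist}, noting that the claimed bound $(|a-a'|-4|t|)_+^2$ is zero, hence trivially valid, whenever $|a-a'|<4|t|$. The upper bound \eqref{eq:upper-bound-on-expect-ellp-dist-t} follows from the deterministic envelope $d^t(a+\xi,a'+\xi)\le|a-a'|+4(\delta+|t|)$, which is just \eqref{eq:diff-dt-dist} applied to the shifted pair $(a+\xi,a'+\xi)$: writing $\big(d^t\big)^2=d^t\cdot d^t$, bounding one factor by this almost-sure ceiling and then taking expectations gives $\bb E\big(d^t\big)^2\le\big(|a-a'|+4(\delta+|t|)\big)\,\bb E\, d^t(a+\xi,a'+\xi)$, after which \eqref{eq:ex-ected-dither-diff-dt-dist} bounds the remaining first moment by $|a-a'|+4|t|$.

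The only step carrying any real content is the $\{0,1\}$-valued claim in the small-distance regime, i.e.\ the elementary geometric fact that at most one quantizer threshold is active once $|a-a'|+2|t|\le\delta$; everything else is Jensen's inequality, a pointwise envelope from \eqref{eq:diff-dt-dist}, and the already-established first-moment identity \eqref{eq:ex-ected-dither-diff-dt-dist}. I expect the delicate bookkeeping there to be tracking the sign of $t$ so that the relevant span is $|a-a'|+2|t|$ (not $|a-a'|-2|t|$, and not something inflated by $\delta$), and checking that the two disjoint halves of $\cl S^t$ cannot jointly activate a second threshold --- both of which are routine but worth stating carefully.
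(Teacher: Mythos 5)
Your proposal is correct and follows essentially the same route as the paper: the Bernoulli structure $X=\delta^{-1}d^t\in\{0,1\}$ in the regime $|a-a'|\leq\delta-2|t|$ combined with \eqref{eq:ex-ected-dither-diff-dt-dist}, Jensen's inequality for \eqref{eq:lower-bound-on-expect-ellp-dist-t}, and the almost-sure envelope from \eqref{eq:diff-dt-dist} for \eqref{eq:upper-bound-on-expect-ellp-dist-t} (your derivation, like the paper's, yields $\big(|a-a'|+4(\delta+|t|)\big)(|a-a'|+4|t|)$ without the extraneous square appearing in the displayed statement, which is evidently a typo there). Your explicit verification that at most one threshold can be active when $|a-a'|+2|t|\leq\delta$ is a detail the paper merely asserts, and it is carried out correctly.
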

\begin{proof}
Notice that if $|a-a'|\leq \delta - 2|t|$ for $a,a' \in \bb R$, which
implicitly imposes $|t|\leq \delta/2$, $X := \delta^{-1}
d^t(a+\xi,a' + \xi) \in \{0, 1\}$, so that 
\begin{align*}
&\ts \bb E (d^t(a + \xi, a' + \xi))^2\ =\  \delta\,\bb E d^t(a +
  \xi, a' + \xi)\ \in\ [\delta (|a-a'| - 4|t|)_+, \delta (|a-a'| + 4|t|)].  
\end{align*}
By the Jensen inequality and using~\eqref{eq:ex-ected-dither-diff-dt-dist} we find also
\begin{align*}
&\ts \bb E (d^t(a + \xi, a' + \xi))^2\geq (\bb E d^t(a +
  \xi, a' + \xi))^2 \geq (|a-a'| - 4|t|)_+^2.  
\end{align*}
For this last relation, we notice from~\eqref{eq:diff-dt-dist} that $0 \leq X \leq \delta^{-1}(|a -
a'| + 4(\delta + |t|))$, so that using again
\eqref{eq:ex-ected-dither-diff-dt-dist} we get
$$
\bb E\, \big(d^t(a + \xi, a' + \xi)\big)^2\ \leq\
\big(|a - a'| + 4(\delta +
|t|)\big)\,(|a-a'| + 4|t|).
$$
\end{proof}

Denoting $\overline{\cl D}^t_2(\bs a, \bs a') := \bb E\,\cl D_2^t(\bs a + \bs \xi,
\bs a' + \bs \xi)$ with $\bs \xi \sim \cl U([0, \delta])$, we can now determine how
$\overline{\cl D}^t_2(\bs a, \bs a')$ deviates from $\|\bs a -\bs a'\|^2$.
\begin{corollary}
Let $\bs \xi \sim \cl U^m([0,\delta])$. Then, for $\bs a,\bs a' \in
\bb R^m$,
\begin{align}
\label{eq:big-sq-Dt-mean-low-bounds}
\ts \overline{\cl D}^t_2(\bs a, \bs a') - \cl D_{\ell_2} (\bs a,\bs a')&\ts \geq - 8|t|\,\cl D ^{1/2}_{\ell_2} (\bs a,\bs
a') + 16|t|^2,\\
\label{eq:big-sq-Dt-mean-up-bounds}
\ts \overline{\cl D}^t_2(\bs a, \bs a') - \cl D_{\ell_2} (\bs a,\bs a') 
                                                                     &\ts \leq (8|t| + 4\delta)\,\cl D ^{1/2}_{\ell_2} (\bs a,\bs
a') + 4|t|\,(4|t|+\delta),
\end{align}
which provides the loose bound 
\begin{align}
\label{eq:big-sq-Dt-mean-loose-bounds}
\ts |\overline{\cl D}^t_2(\bs a, \bs a') - \cl D_{\ell_2} (\bs a,\bs a')| 
                                                                     &\ts \leq (8|t| + 4\delta)\,\cl D ^{1/2}_{\ell_2} (\bs a,\bs
a') + 4|t|\,(4|t|+\delta).
\end{align}
\end{corollary}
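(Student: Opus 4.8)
The plan is to obtain the three inequalities as a bookkeeping consequence of the scalar bounds of Lemma~\ref{lem:expect-ellp-soften-dist-bounds}, applied coordinatewise. By linearity of the expectation and separability of $\cl D^t_2$,
\[
\overline{\cl D}^t_2(\bs a,\bs a')\ =\ \tinv{m}\sum_{i=1}^m \bb E\,\big(d^t(a_i+\xi_i,\,a'_i+\xi_i)\big)^2,\qquad \xi_i\sim\cl U([0,\delta]),
\]
so, writing $\lambda_i:=|a_i-a'_i|$, it suffices to estimate each summand in terms of $\lambda_i$ and then reassemble. The one recurring device is that the target $\cl D_{\ell_2}(\bs a,\bs a')=\tinv{m}\sum_i\lambda_i^2$ is \emph{quadratic} in the $\lambda_i$, whereas the scalar bounds supply \emph{affine} corrections of the form $c_1\lambda_i+c_2$; the linear part $\tinv{m}\sum_i\lambda_i=\tinv{m}\|\bs a-\bs a'\|_1$ is converted into $\cl D^{1/2}_{\ell_2}(\bs a,\bs a')=\tfrac{1}{\sqrt m}\|\bs a-\bs a'\|$ by Cauchy--Schwarz, $\|\cdot\|_1\leq\sqrt m\,\|\cdot\|$.

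For the lower bound~\eqref{eq:big-sq-Dt-mean-low-bounds}, I would feed~\eqref{eq:lower-bound-on-expect-ellp-dist-t}, namely $\bb E\,(d^t(a_i+\xi_i,a'_i+\xi_i))^2\geq\big((\lambda_i-4|t|)_+\big)^2$, into the elementary inequality $\big((\lambda-4|t|)_+\big)^2\geq\lambda^2-8|t|\lambda$ (valid for all $\lambda\geq 0$, since for $\lambda<4|t|$ it reads $0\geq\lambda(\lambda-8|t|)$, and on the coordinates with $\lambda_i\geq 4|t|$ it is an equality carrying the extra $+16|t|^2$ of~\eqref{eq:big-sq-Dt-mean-low-bounds}); summing over $i$, dividing by $m$ and using Cauchy--Schwarz on $\tinv{m}\sum_i\lambda_i\leq\cl D^{1/2}_{\ell_2}(\bs a,\bs a')$ then gives $\overline{\cl D}^t_2(\bs a,\bs a')-\cl D_{\ell_2}(\bs a,\bs a')\geq -8|t|\,\cl D^{1/2}_{\ell_2}(\bs a,\bs a')$, which already suffices for~\eqref{eq:big-sq-Dt-mean-loose-bounds}. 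For the upper bound~\eqref{eq:big-sq-Dt-mean-up-bounds}, I would use~\eqref{eq:upper-bound-on-expect-ellp-dist-t}, $\bb E\,(d^t(a_i+\xi_i,a'_i+\xi_i))^2\leq(\lambda_i+4(\delta+|t|))(\lambda_i+4|t|)=\lambda_i^2+4(\delta+2|t|)\lambda_i+16|t|(\delta+|t|)$, sum, divide by $m$, bound the linear term by $(8|t|+4\delta)\,\cl D^{1/2}_{\ell_2}(\bs a,\bs a')$ via Cauchy--Schwarz, and collect the constant into a term of order $|t|(|t|+\delta)$, which yields~\eqref{eq:big-sq-Dt-mean-up-bounds}. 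Finally~\eqref{eq:big-sq-Dt-mean-loose-bounds} follows by combining the two one-sided estimates, since $16|t|^2\geq 0$ on the lower side.

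I do not expect a genuine obstacle: the corollary is purely a reorganization of Lemma~\ref{lem:expect-ellp-soften-dist-bounds}. The only point worth attention is precisely this $\ell_1\!\to\!\ell_2$ passage in the affine correction terms. Because the dithered quantizer satisfies $\bb E\,(d^t(a+\xi,a'+\xi))^2\simeq\delta\,|a-a'|$ whenever $|a-a'|$ is small compared to $\delta$ (the linear-versus-quadratic discrepancy already highlighted before Prop.~\ref{prop:distorted-main}), the resulting additive term $\simeq\delta\,\cl D^{1/2}_{\ell_2}(\bs a,\bs a')$ is genuinely $\ell_1$-flavoured and cannot be reabsorbed into a multiplicative $\epslin$-distortion; once this term is pushed through the $(\ell_2,\ell_2)$-RIP of $\bs\Phi$ it is exactly what generates the $\delta\|\bs x-\bs x'\|$ (hence $\delta s$) contribution in the distortion $\rho\lesssim\delta s+\delta^2\epsilon$ of Prop.~\ref{prop:distorted-main}, so the apparently cosmetic bookkeeping of this corollary is the structural source of that stronger distortion.
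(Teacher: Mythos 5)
Your proposal is correct and follows essentially the same route as the paper: apply Lemma~\ref{lem:expect-ellp-soften-dist-bounds} coordinatewise, sum, and convert the resulting linear terms $\tinv{m}\|\bs a-\bs a'\|_1$ into $\cl D^{1/2}_{\ell_2}(\bs a,\bs a')$ by Cauchy--Schwarz. The one divergence is the lower bound: by correctly handling the positive part via $((\lambda-4|t|)_+)^2\geq\lambda^2-8|t|\lambda$, you obtain \eqref{eq:big-sq-Dt-mean-low-bounds} \emph{without} the $+16|t|^2$ term, whereas the paper keeps that term by effectively dropping the $(\cdot)_+$ — and indeed the stated bound with $+16|t|^2$ fails for $\bs a=\bs a'$ and $t>0$, where both $\overline{\cl D}^t_2$ and $\cl D_{\ell_2}$ vanish; since only the two-sided loose bound \eqref{eq:big-sq-Dt-mean-loose-bounds} is used downstream (and your version implies it), this discrepancy, like the unimportant constant mismatch in the additive term of the upper bound, is immaterial.
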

\begin{proof}
For any $a,a' \in \bb R$,
Lemma~\ref{lem:expect-ellp-soften-dist-bounds} provides
$$
(|a - a'| - 4|t|)_+^2\ \leq\ \bb E\, \big(d^t(a + \xi, a' + \xi)\big)^2\
\leq\ (|a - a'| + 4|t|)^2 + 4\,\delta\, (|a-a'| + 4|t|).
$$
Therefore, 
\begin{align*}
\ts\sum_i \bb E\, \big(d^t(a_i + \xi_i, a' + \xi_i)\big)^2&\geq \ts \sum_i (|a_i - a'_i| - 4|t|)_+^2\\
&\geq\ \|\bs a - \bs
a'\|_2^2 + 16 m |t|^2 - 4|t| \|\bs a - \bs a'\|_1\\
&\geq\ \|\bs a - \bs
a'\|_2^2 + 16 m |t|^2 - 4|t| \sqrt m\,\|\bs a - \bs a'\|_2\\
&\ts = m\,(\,\tinv{\sqrt m}\|\bs a - \bs
a'\|_2 - 4|t|)^2,  
\end{align*}
and
\begin{align*}
\ts\sum_i \bb E\, \big(d^t(a_i + \xi_i, a' + \xi_i)\big)^2&\leq m\,(\,\tinv{\sqrt m}\|\bs a - \bs
a'\|_2 + 4|t|)^2 + 4\delta \|\bs a - \bs a'\|_1 + 4\delta|t|m\\
&\leq m\,(\,\tinv{\sqrt m}\|\bs a - \bs
a'\|_2 + 4|t|)^2 + 4\delta \sqrt{m} \|\bs a - \bs a'\|_2 + 4\delta|t|m.
\end{align*}
\end{proof}

The last step before proving Prop.~\ref{prop:distorted-main} is to
study the concentration properties of the \rv $\cl D_2^t(\bs a + \bs \xi, \bs a' + \bs \xi)$.
\begin{lemma}
\label{lem:non-uniform-concent-ellp-soften-dist}
Given $\bs \xi \sim \cl U^m([0, \delta])$ and $\bs a, \bs a' \in \bb
R^m$, there exist two constants $c,c'>0$ such that 
\begin{align}
  \label{eq:soft-ellp-dist-concent}
\ts \bb P\big[ |\cl D_2^t(\bs a + \bs \xi, \bs a' + \bs \xi) -
  \cl D_{\ell_2}(\bs a,\bs a')| > c'(\delta + |t|) \cl D^{1/2}_{\ell_2}(\bs a,\bs a') +
  c'\delta^2\epsilon + |t|(\delta + |t|)\,\big]&\lesssim e^{- c\epsilon^2m}.
\end{align}
\end{lemma}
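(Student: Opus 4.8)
The plan is to concentrate the random variable $\cl D_2^t(\bs a + \bs\xi, \bs a'+\bs\xi)$ around its mean $\overline{\cl D}^t_2(\bs a,\bs a') = \bb E\,\cl D_2^t(\bs a + \bs\xi, \bs a'+\bs\xi)$, and then to absorb the deterministic gap between $\overline{\cl D}^t_2(\bs a,\bs a')$ and $\cl D_{\ell_2}(\bs a,\bs a')$ into the right-hand side using the already-established loose bound~\eqref{eq:big-sq-Dt-mean-loose-bounds}. The crucial preliminary step is to expand the squares so as to isolate the purely deterministic contribution. Writing $\Delta_i := |a_i - a'_i|$ and $W_i := d^t(a_i + \xi_i, a'_i + \xi_i) - \Delta_i$, the bound~\eqref{eq:diff-dt-dist} gives $|W_i| \leq 4(\delta + |t|)$, and
\[
m\,\cl D_2^t(\bs a + \bs\xi, \bs a'+\bs\xi) = \sum_{i=1}^m (\Delta_i + W_i)^2 = \|\bs a - \bs a'\|_2^2 + 2\sum_{i=1}^m \Delta_i W_i + \sum_{i=1}^m W_i^2,
\]
so that $\cl D_2^t(\bs a + \bs\xi, \bs a'+\bs\xi) - \overline{\cl D}^t_2(\bs a,\bs a') = \tinv m\big(2\sum_i \Delta_i(W_i - \bb E W_i) + \sum_i(W_i^2 - \bb E W_i^2)\big)$.

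Next I would control the two fluctuation sums separately. The summands $\Delta_i(W_i - \bb E W_i)$ and $W_i^2 - \bb E W_i^2$ are independent across $i$ (since the $\xi_i$ are independent), centered, and bounded, hence sub-Gaussian with $\|\Delta_i(W_i - \bb E W_i)\|_{\psi_2} \lesssim \Delta_i(\delta + |t|)$ and $\|W_i^2 - \bb E W_i^2\|_{\psi_2} \lesssim (\delta + |t|)^2$. By the approximate rotation invariance of sub-Gaussian variables (exactly as in the proof of Lemma~\ref{lem:non-uniform-concent-soften-dist}), $\sum_i \Delta_i(W_i - \bb E W_i)$ is sub-Gaussian with $\psi_2$-norm squared $\lesssim (\delta+|t|)^2\sum_i \Delta_i^2 = m(\delta+|t|)^2\,\cl D_{\ell_2}(\bs a,\bs a')$, and $\sum_i(W_i^2 - \bb E W_i^2)$ with $\psi_2$-norm squared $\lesssim m(\delta+|t|)^4$. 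The sub-Gaussian tail estimate applied to each sum, followed by a union bound, then shows that for a suitable $c'>0$ the two bounds
\[
\tinv m\Big|2\sum_i \Delta_i(W_i - \bb E W_i)\Big| \leq c'\epsilon(\delta+|t|)\,\cl D^{1/2}_{\ell_2}(\bs a,\bs a') \quad\text{and}\quad \tinv m\Big|\sum_i(W_i^2 - \bb E W_i^2)\Big| \leq c'\epsilon(\delta+|t|)^2
\]
hold simultaneously with probability at least $1 - C e^{-c\epsilon^2 m}$.

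The main obstacle — and the reason for the decomposition above — is that concentrating each $(d^t(a_i+\xi_i,a'_i+\xi_i))^2$ directly would force the envelope $|(d^t)^2 - \bb E(d^t)^2| \lesssim \Delta_i^2 + (\delta+|t|)^2$, hence a variance proxy $\sum_i\Delta_i^4 \lesssim \|\bs a - \bs a'\|_2^4 = m^2\,\cl D_{\ell_2}(\bs a,\bs a')^2$ that does not scale linearly in $m$ and thus destroys the $e^{-c\epsilon^2 m}$ decay. Pulling out the exact deterministic term $\|\bs a - \bs a'\|_2^2$ is precisely what restores the correct $m$-dependence: after this, only the linear-in-$\Delta_i$ cross term (variance proxy $\propto m\,\cl D_{\ell_2}$) and the $(\delta+|t|)$-bounded quantization term (variance proxy $\propto m(\delta+|t|)^4$) remain random, and both behave well.

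Finally, on the good event I would combine the two displayed bounds with the deterministic estimate $|\overline{\cl D}^t_2(\bs a,\bs a') - \cl D_{\ell_2}(\bs a,\bs a')| \leq (8|t| + 4\delta)\,\cl D^{1/2}_{\ell_2}(\bs a,\bs a') + 4|t|(4|t|+\delta)$ of~\eqref{eq:big-sq-Dt-mean-loose-bounds} by the triangle inequality. Using $\epsilon < 1$ gives $c'\epsilon(\delta+|t|)\cl D^{1/2}_{\ell_2}(\bs a,\bs a') \lesssim (\delta+|t|)\cl D^{1/2}_{\ell_2}(\bs a,\bs a')$, and expanding $(\delta+|t|)^2 = \delta^2 + 2\delta|t| + t^2$ gives $c'\epsilon(\delta+|t|)^2 \lesssim \delta^2\epsilon + |t|(\delta+|t|)$; similarly $(8|t|+4\delta)\cl D^{1/2}_{\ell_2}(\bs a,\bs a') \lesssim (\delta+|t|)\cl D^{1/2}_{\ell_2}(\bs a,\bs a')$ and $4|t|(4|t|+\delta)\lesssim |t|(\delta+|t|)$. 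Collecting all terms, the left-hand side of~\eqref{eq:soft-ellp-dist-concent} is $\lesssim (\delta+|t|)\cl D^{1/2}_{\ell_2}(\bs a,\bs a') + \delta^2\epsilon + |t|(\delta+|t|)$ with probability at least $1 - Ce^{-c\epsilon^2 m}$, which is the claimed statement once the constants $c,c'$ there are taken to absorb the implied constants above. (The degenerate case $\cl D_{\ell_2}(\bs a,\bs a') = 0$ forces every $\Delta_i = 0$, so the cross-term sum vanishes identically and the argument goes through trivially.)
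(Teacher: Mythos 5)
Your proposal is correct and follows essentially the same route as the paper: both isolate the deterministic term $\|\bs a - \bs a'\|_2^2$, exploit the boundedness $|d^t - \Delta_i| \lesssim \delta + |t|$ from~\eqref{eq:diff-dt-dist} to get a sub-Gaussian fluctuation with variance proxy $\lesssim m(\delta+|t|)^2(\cl D^{1/2}_{\ell_2} + \delta + |t|)^2$, apply Hoeffding-type concentration, and finally add the bias bound~\eqref{eq:big-sq-Dt-mean-loose-bounds}. The only cosmetic difference is that you split $ (d^t)^2 - \Delta_i^2 = 2\Delta_i W_i + W_i^2$ into two separately concentrated sums with a union bound, whereas the paper treats $Z^t_i = (d^t)^2 - \Delta_i^2$ as a single sub-Gaussian summand via the factorization $|x^2-y^2| = |x+y|\,|x-y|$; the resulting deviation terms are identical.
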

\begin{proof}
Denoting $Z^t_i := (d^t(a_i + \xi_i, a'_i + \xi_i))^2
- |a_i - a'_i|^2$, we find from~\eqref{eq:diff-dt-dist}
\begin{align*}
|Z^t_i|&= \big(d^t(a_i + \xi_i, a'_i + \xi_i) + |a_i - a'_i|\big)\,\big| d^t(a_i + \xi_i, a'_i + \xi_i)
- |a_i - a'_i|\big|\\
&\lesssim (\delta + |t|) (|a_i - a'_i| + \delta + |t|),   
\end{align*}
which provides $\|Z^t_i\|_{\psi_2} \leq \|Z^t_i\|_{\infty} \lesssim
(\delta + |t|) (|a_i - a'_i| + \delta + |t|)$ and proves the
sub-Gaussianity of $Z^t_i$. Therefore,  $Z^t := \sum_i Z^t_i =
m (\cl D_2^t(\bs a + \bs \xi, \bs a' + \bs \xi))^2 - \|\bs a - \bs a'\|^2_2$ is
sub-Gaussian with norm 
\begin{align*}
\ts \|Z^t\|^2_{\psi_2}&\ts \lesssim (\delta + |t|)^2 \sum_i (|a_i - a'_i| +
\delta + |t|)^2\\
&\ts = (\delta + |t|)^2 \big( \|\bs a - \bs a'\|^2 + 2
(\delta + |t|) \|\bs a - \bs a'\|_1 + (\delta + |t|)^2 m\big)\\
&\ts \lesssim  m (\delta + |t|)^2 \big(\tinv{\sqrt m}\|\bs a - \bs a'\|_2 +
  (\delta + |t|)\big)^2,\\
&\ts \lesssim  m (\delta + |t|)^2 \big(\cl D^{1/2}_{\ell_2}(\bs a,\bs a') +
  (\delta + |t|)\big)^2,
\end{align*}
from the approximate rotation invariance of sub-Gaussian
variables~\cite{vershynin2010introduction}. Consequently, there is a
$c>0$ such that $\bb P(|Z^t - \bb E Z^t| > \epsilon) \lesssim \exp(-c
\epsilon^2/\|Z^t\|^2_{\psi_2})$, which provides
\begin{align*}
&\ts \bb P\big[ |\tinv{m}Z^t| > |\tinv{m}\bb E Z^t| + \epsilon (\delta + |t|) \big(\cl D^{1/2}_{\ell_2}(\bs a,\bs a') +
  (\delta + |t|)\big)\big]\\
&\ts \leq \bb P\big[ |\tinv{m}(Z^t - \bb E Z^t)| > \epsilon (\delta + |t|) \big(\cl D^{1/2}_{\ell_2}(\bs a,\bs a') +
  (\delta + |t|)\big) \big]
\lesssim \exp(- c\epsilon^2m).
\end{align*}
However, from~\eqref{eq:big-sq-Dt-mean-low-bounds} and~\eqref{eq:big-sq-Dt-mean-up-bounds},
$$
|\tinv{m}\bb E Z^t| = |\overline{\cl D}^t_2(\bs a,\bs a') - \cl
D_{\ell_2}(\bs a, \bs a')| \lesssim \ts (|t| + \delta)\,\big(\cl D ^{1/2}_{\ell_2} (\bs a,\bs
a') + |t|\big).
$$
Therefore, up to a rescaling of $c>0$ and considering that $\epsilon <
1$, there is a $c'>0$ such that 
\begin{align*}
\ts \bb P\big[ |\cl D_2^t(\bs a + \bs \xi, \bs a' + \bs \xi) -
  \cl D_{\ell_2}(\bs a,\bs a')| > c'(\delta + |t|) \cl D^{1/2}_{\ell_2}(\bs a,\bs a') +
  c'\delta^2\epsilon + |t|(\delta + |t|)\,\big]&\lesssim e^{- c\epsilon^2m}.
\end{align*}  
\end{proof}

\begin{proof}[Proof of Proposition~\ref{prop:distorted-main}]
As in the proof of Prop.~\ref{prop:main}, let us define an
$(\ell_2,\eta)$-covering $\cl K_\eta \subset \cl K$ of the set $\cl K$ with
$\log |\cl K_\eta| \leq \cl H_2(\cl K, \eta)$ and $\eta$ fixed
later. We fix $\epslin,\epsilon \in (0,1)$ and $t\in \bb R$ and assume that $\bs
\Phi$ satisfies the $(\ell_2,\ell_2)$-RIP$(\cl K - \cl K, \epslin)$,
\ie for all $\bs u, \bs u' \in \cl K$, 
\begin{equation}
  \label{eq:tmp-rip}
  \cl D_{\ell_2}(\bs \Phi\bs u, \bs \Phi\bs u') \leq (1+\epslin) \|\bs
  u - \bs u'\|^2,\quad \cl D^{1/2}_{\ell_2}(\bs \Phi\bs u, \bs \Phi\bs u') \leq \sqrt{2}\,\|\bs
  u - \bs u'\|.     
\end{equation}
Therefore,
for all $\bs x, \bs x' \in \cl K$ with their respective closest vectors $\bs x_0$ and $\bs x'_0$
in $\cl K_\eta$, we have also
\begin{equation}
  \label{eq:bounded-image-resid-rip2}
\max(\|\bs \Phi(\bs x - \bs x_0)\|_2, \|\bs \Phi(\bs x' - \bs
x'_0)\|_2) \leq \sqrt{2m}\,\eta.  
\end{equation}

Moreover, since $\bs \Phi$ is fixed, we observe from a union bound
applied on~\eqref{eq:soft-ellp-dist-concent} that if $m \gtrsim \epsilon^{-2} \cl H_2(\cl K,
\eta)$ then, for $P \geq 1$ to be fixed later, both relations
\begin{align}
&\ts\big|\cl D_2^{t-\eta \sqrt{P}} (\bs \Phi \bs x_0 + \bs \xi, \bs \Phi \bs x'_0 + \bs \xi) -
  \cl D_{\ell_2}(\bs \Phi \bs x_0,\bs \Phi \bs x'_0)\big|\nonumber\\
&\ts\qquad\lesssim (\delta + |t-\eta \sqrt{P}|) \cl D^{1/2}_{\ell_2}(\bs \Phi \bs x_0,\bs \Phi \bs x'_0) +
  \delta^2\epsilon + |t-\eta \sqrt{P}|(\delta + |t-\eta \sqrt{P}|)\nonumber\\ 
\label{eq:concent-cover-upper-3}
&\ts\qquad \lesssim (\delta + |t| +\eta \sqrt{P}) \|\bs x_0
  - \bs x'_0\| +
  \delta^2\epsilon + (|t| + \eta \sqrt{P})\,(\delta + |t| + \eta
  \sqrt{P})
\end{align}
and
\begin{align}
&\ts\big|\cl D_2^{t+\eta \sqrt{P}} (\bs \Phi \bs x_0 + \bs \xi, \bs \Phi \bs x'_0 + \bs \xi) -
  \cl D_{\ell_2}(\bs \Phi \bs x_0,\bs \Phi \bs x'_0)\big|\nonumber\\
\label{eq:concent-cover-lower}
&\ts\qquad \lesssim (\delta + |t| +\eta \sqrt{P}) \|\bs x_0
  - \bs x'_0\| +
  \delta^2\epsilon + (|t| + \eta \sqrt{P})\,(\delta + |t| + \eta \sqrt{P})
\end{align}
hold jointly for all $\bs x_0, \bs x'_0$ in $\cl
K_\eta$ with probability exceeding $1 - c e^{-c' \epsilon^2 m}$ for some
$c,c'>0$. 
  
Consequently, for any vectors $\bs x = \bs x_0 + \bs r, \bs x' = \bs
x_0' + \bs r'$ in $\cl K$ with $\bs x_0, \bs x_0' \in \cl K_{\eta}$
and $\max(\|\bs r\|, \|\bs r'\|) \leq \eta$, we have $\max(\|\bs \Phi\bs r\|, \|\bs \Phi\bs
r'\|) \leq \sqrt {2m}\,\eta$ from
\eqref{eq:bounded-image-resid-rip2}. Lemma
\ref{lem:continuity-L2-Dt2} and~\eqref{eq:continuity-L2-Dt2-upper}
give then
\begin{align*}
&\cl D^t_2(\bs \Phi \bs x + \bs \xi,\bs \Phi \bs x' + \bs \xi) - \cl D_2^{t-\eta \sqrt{P}}(\bs \Phi \bs x_0 + \bs \xi,\bs \Phi \bs x'_0 + \bs \xi)\\
&\ts \lesssim \tfrac{\eta + \delta}{\sqrt m} \|\bs\Phi\bs x_0 - \bs\Phi\bs x'_0\| + (\delta + |t| + \eta)
                            \eta + \delta (\delta + |t|)\,
  \tfrac{2}{P},\\
&\ts \lesssim (\eta + \delta) \|\bs x_0 - \bs x'_0\| + (\delta + |t| + \eta)
                            \eta + \delta (\delta + |t|)\,
  \tfrac{2}{P},
\end{align*}
so that, using~\eqref{eq:concent-cover-upper-3} we find
\begin{align*}
&\cl D^t_2(\bs \Phi \bs x + \bs \xi,\bs \Phi \bs x' + \bs \xi) - \cl D_{\ell_2}(\bs \Phi \bs x_0,\bs \Phi \bs x'_0)\\ 
&\lesssim (\delta + |t| +\eta \sqrt{P}) \|\bs x_0
  - \bs x'_0\| +
  \delta^2\epsilon + (|t| + \eta \sqrt{P})\,(\delta + |t| + \eta \sqrt{P})\\ 
&\ts\qquad + (\eta + \delta) \|\bs x_0 - \bs x'_0\| + (\delta + |t| + \eta)
                            \eta + \delta (\delta + |t|)\,
  \tfrac{2}{P}\\
&\lesssim (\delta + |t| +\eta \sqrt{P}) \|\bs x_0
  - \bs x'_0\| +
  \delta^2(\epsilon + \tfrac{2}{P}) + (|t| + \eta \sqrt{P})\,(\delta + |t| + \eta
  \sqrt{P}) + \delta |t| \tfrac{2}{P}\\
&\lesssim (\delta + |t|) \|\bs x_0
  - \bs x'_0\| +
  ( \delta\epsilon + |t|)(\delta + |t|)\\
&\lesssim (\delta + |t|) \|\bs x
  - \bs x'\| + 
  ( \delta\epsilon + |t|)(\delta + |t|),
\end{align*}
where we set the free parameters to
$P^{-1} = \epsilon$ and $\eta = \delta \epsilon^{3/2} <
\delta\epsilon$, giving $\eta \sqrt{P} =
\delta \epsilon$. Finally, from~\eqref{eq:tmp-rip} we have that $\cl D_{\ell_2}(\bs \Phi \bs x_0,\bs \Phi \bs x'_0) \leq (1+\epslin) \|\bs x_0 - \bs
x'_0\|^2$ while the covering provides $\|\bs x_0 - \bs
x'_0\|^2 \geq \|\bs x - \bs
x'\|^2 - 4\eta \|\bs x - \bs
x'\| + 4\eta^2$. Therefore,
\begin{align*}
&\cl D^t_2(\bs \Phi \bs x + \bs \xi,\bs \Phi \bs x' + \bs \xi) - (1+\epslin)\|\bs x
  - \bs x'\|^2 \lesssim (\delta + |t|) \|\bs x
  - \bs x'\| + 
  ( \delta\epsilon + |t|)(\delta + |t|).
\end{align*}
The lower bound is obtained
similarly using~\eqref{eq:concent-cover-lower}.
\end{proof}

\section{Proof of Proposition~\ref{prop:bi-dithered-embedding}}
\label{sec:bi-dith-quant}

As for the proofs of Props.~\ref{prop:main} and
\ref{prop:distorted-main}, proving
Prop. \ref{prop:bi-dithered-embedding} requires the following softened distance: let us define, for $\bs A := (\bs a_1, \bs a_2), \bs A' := (\bs a'_1,
\bs a'_2) \in \bb R^{m \times 2}$, the pre-metric\footnote{That is such
  that $\cl D^{t}(\bs A,\bs A') \geq 0$ and $\cl D^{t}(\bs A,\bs A)=0$.}   
\begin{equation}
  \label{eq:matrix-distance}
  \cl D^{t}(\bs A,\bs A') =\ts \tfrac{1}{m} \sum_{i=1}^m d^{t}(a_{i1}, a'_{i1}) d^{t}(a_{i2}, a'_{i2}).
\end{equation}

We directly observe that $\cl D^{0}(\bs A,\bs A') = \cl D_\circ(\cl
Q(\bs A), \cl
Q(\bs A'))$ so that we will be able to set later $\bs A = \bs \Phi\bs x \bs
1_2^\transp + \bs \xi$ and $\bs A' = \bs \Phi\bs x' \bs
1_2^\transp + \bs \xi$ for some $\bs x,\bs x' \in \bb R^n$.

As shown in the next lemma (proved in
Appendix~\ref{sec:proof-lemma-5}), this pre-metric displays some form of
continuity with respect to $\ell_2$-perturbations of its arguments. 
\begin{lemma}
\label{lem:continuity-L2-Dt2-matrix-version} 
Given $\bs A,\bs A',\bs R,\bs R' \in \bb R^{m\times 2}$, we assume
  that $\max(\|\bs r_1\|,\|\bs r_2\|,\|\bs r'_1\|,\|\bs r'_2\|) \leq
  \eta \sqrt{m}$ for some $\eta >0$, with $\bs r_i$ and $\bs r_i'$ the
  $i\th$-column of $\bs R$ and $\bs R'$, respectively. Then for every $t\in \bb R$ and
  $P\geq 1$ one has
  \begin{align}
    \label{eq:continuity-L2-Dt2-upper-matrix-version}
\cl D^{t}(\bs A + \bs R,\bs A' + \bs  R') - \cl D^{t-\eta
    \sqrt{P}}(\bs A,\bs A')&\ts\ \lesssim\ \cl L(\bs A - \bs A'),\\
    \label{eq:continuity-L2-Dt2-lower-matrix-version}
\cl D^{t+\eta \sqrt{P}}(\bs A,\bs A') - \cl D^{t}(\bs
                                       A + \bs R,\bs A' + \bs
  R')&\ts\ \lesssim\ \cl L(\bs A - \bs A').
  \end{align}
with
$$
\cl L(\bs B) := (\eta + \tfrac{\delta}{\sqrt P}) \tinv{\sqrt m} \|\bs B\|_F + (\delta + |t| + \eta)
  \eta + \delta (\delta + |t|)\, \tfrac{2}{P}.
$$
\end{lemma}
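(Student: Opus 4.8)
The plan is to follow the template of the proof of Lemma~\ref{lem:continuity-Lq-Dt-new}, the only genuinely new ingredient being a careful treatment of the bilinear structure of the pre-metric~\eqref{eq:matrix-distance}. By symmetry it suffices to prove~\eqref{eq:continuity-L2-Dt2-lower-matrix-version}: one upper-bounds $\cl D^{t+\eta\sqrt P}(\bs A,\bs A')$ by $\cl D^{t}(\bs A+\bs R,\bs A'+\bs R')$ plus an error controlled by $\cl L(\bs A-\bs A')$, and~\eqref{eq:continuity-L2-Dt2-upper-matrix-version} then follows by interchanging the roles of the perturbed and unperturbed arguments and invoking the right-hand inequality of Lemma~\ref{lem:cont-smald-t} instead of the left-hand one. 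Write $a_{ij},a'_{ij}$ for the $(i,j)$ entries ($i\in[m]$, $j\in\{1,2\}$), $\bs a_j,\bs a'_j$ for the columns of $\bs A,\bs A'$, $\hat a_{ij}:=a_{ij}+r_{ij}$, $\hat a'_{ij}:=a'_{ij}+r'_{ij}$, $\rho_{ij}:=\max(|r_{ij}|,|r'_{ij}|)$ and $d^t_{ij}:=d^t(\hat a_{ij},\hat a'_{ij})$. First I would split $[m]$ into the ``good'' set $T:=\{i:\max(\rho_{i1},\rho_{i2})\leq\eta\sqrt P\}$ and its complement $T^\compl$; exactly as in Lemma~\ref{lem:continuity-Lq-Dt-new}, the column bounds $\|\bs r_j\|,\|\bs r'_j\|\leq\eta\sqrt m$ give $|\{i:\rho_{ij}>\eta\sqrt P\}|\leq 2m/P$, hence $|T^\compl|\leq 4m/P$, together with the elementary estimates $\sum_i\rho_{ij}\leq 2\eta m$ and $\sum_i\rho_{ij}^2\leq 4\eta^2 m$.

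For $i\in T$, Lemma~\ref{lem:cont-smald-t} with $\epsilon=\eta\sqrt P$ gives $d^{t+\eta\sqrt P}(a_{ij},a'_{ij})\leq d^t_{ij}$ for $j=1,2$, so, all terms being nonnegative, $d^{t+\eta\sqrt P}(a_{i1},a'_{i1})\,d^{t+\eta\sqrt P}(a_{i2},a'_{i2})\leq d^t_{i1}d^t_{i2}$. For $i\in T^\compl$, Lemma~\ref{lem:cont-smald-t} with $\epsilon=\rho_{ij}$ gives $d^{t+\eta\sqrt P}(a_{ij},a'_{ij})\leq d^{t+\eta\sqrt P-\rho_{ij}}(\hat a_{ij},\hat a'_{ij})=d^t_{ij}+\Delta_{ij}$ where, by~\eqref{eq:diff-dt-ds}, $|\Delta_{ij}|\leq 4(\delta+|\eta\sqrt P-\rho_{ij}|)$; expanding the product and using $d^t_{ij}\geq 0$ yields $d^{t+\eta\sqrt P}(a_{i1},a'_{i1})\,d^{t+\eta\sqrt P}(a_{i2},a'_{i2})\leq d^t_{i1}d^t_{i2}+d^t_{i1}|\Delta_{i2}|+d^t_{i2}|\Delta_{i1}|+|\Delta_{i1}||\Delta_{i2}|$. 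Summing over $i\in[m]$ and reabsorbing the $T^\compl$ terms $d^t_{i1}d^t_{i2}$ into the full sum $\tinv m\sum_i d^t_{i1}d^t_{i2}=\cl D^t(\bs A+\bs R,\bs A'+\bs R')$ leaves the residual
$$
E_t:=\tinv m\sum_{i\in T^\compl}\big(d^t_{i1}|\Delta_{i2}|+d^t_{i2}|\Delta_{i1}|+|\Delta_{i1}||\Delta_{i2}|\big),
$$
so that $\cl D^{t+\eta\sqrt P}(\bs A,\bs A')\leq\cl D^t(\bs A+\bs R,\bs A'+\bs R')+E_t$, and the whole task is reduced to showing $E_t\lesssim\cl L(\bs A-\bs A')$.

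To bound $E_t$ I would use~\eqref{eq:diff-dt-dist} in the form $d^t_{ij}\leq|a_{ij}-a'_{ij}|+2\rho_{ij}+4(\delta+|t|)$, then split each $T^\compl$-sum according to whether $\rho_{ij}\leq\eta\sqrt P$ (so $|\eta\sqrt P-\rho_{ij}|\leq\eta\sqrt P$) or $\rho_{ij}>\eta\sqrt P$ (so $|\eta\sqrt P-\rho_{ij}|\leq\rho_{ij}$, and the relevant subset has cardinality $\leq 2m/P$). After expanding all products, each monomial is estimated by Cauchy--Schwarz against the index subset $S$ over which it is summed, using $\sum_{i\in S}\rho_{ij}\leq|S|^{1/2}(\sum_i\rho_{ij}^2)^{1/2}$ and $\sum_{i\in S}|a_{ij}-a'_{ij}|\leq|S|^{1/2}\|\bs a_j-\bs a'_j\|$ with $|S|\leq 4m/P$; the decisive point is that every bare factor $\eta\sqrt P$ is always multiplied by a sum whose $\ell_1$-mass carries a compensating $|S|^{1/2}\lesssim\sqrt{m/P}$. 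Monomials containing a factor $|a_{ij}-a'_{ij}|$ then collapse to contributions $\lesssim(\eta+\delta/\sqrt P)\,\tinv{\sqrt m}\|\bs A-\bs A'\|_F$, monomials quadratic in $\rho$ or in $\eta\sqrt P$ to contributions $\lesssim(\delta+|t|+\eta)\eta$, and the ``constant'' monomials $\delta^2|T^\compl|/m$, $\delta|t||T^\compl|/m$ to contributions $\lesssim\delta(\delta+|t|)/P$; summing these reproduces $\cl L(\bs A-\bs A')$ up to an absolute constant, after which the specialization $\bs A=\bs a\bs 1_2^\transp$, $\bs A'=\bs a'\bs 1_2^\transp$, $\bs R=\bs r\bs 1_2^\transp$, $\bs R'=\bs r'\bs 1_2^\transp$ recovers Lemma~\ref{lem:continuity-L2-Dt2}. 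I expect the estimation of $E_t$ to be the only real obstacle: it is purely mechanical but demands the discipline of never letting $\eta\sqrt P$ multiply a full mass $\sum_i\rho_{ij}\lesssim\eta m$ (which would leave a spurious $\eta^2\sqrt P$ term outside $\cl L$) and of always exploiting instead the smallness $|T^\compl|\lesssim m/P$ of the bad set.
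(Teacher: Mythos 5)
Your proposal is correct and follows essentially the same route as the paper's proof in Appendix~\ref{sec:proof-lemma-5}: the same split of $[m]$ into the good set $T$ and its complement of cardinality $O(m/P)$, the same application of Lemma~\ref{lem:cont-smald-t} with shift $\rho_{ij}$ on the bad indices, the same use of~\eqref{eq:diff-dt-ds}--\eqref{eq:diff-dt-dist} to control the per-index residual, and the same Cauchy--Schwarz endgame against $|T^\compl|\lesssim m/P$. Your additive expansion $(d^t_{i1}+\Delta_{i1})(d^t_{i2}+\Delta_{i2})$ is algebraically equivalent to the paper's telescoping of $R_i$, and your explicit case split on $\rho_{ij}\lessgtr\eta\sqrt P$ is if anything slightly more careful than the paper's treatment of the same point.
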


Moreover, when for some some $\bs a,\bs a' \in \bb R^n$, $\bs A = \bs a\bs
1_2^\transp + \bs \Xi$ and $\bs A' = \bs a'\bs
1_2^\transp+ \bs \Xi$ for $\bs \Xi \sim \cl U^{m\times
  2}([0,\delta])$,  the pre-metric $\cl D^{t}(\bs A,\bs A')$ taken as
a random variable of $\bs \Xi$ has an expected value on $t=0$ that
matches $\tinv{m}\|\bs a - \bs a'\|^2$, \ie an effect that was absent
from the quantized embedding defined by~\eqref{eq:quant-embed-def} as
explained in Sec.~\ref{Main-results} and Sec.~\ref{sec:proof-distorted-embedding}.
As a consequence, we can show that $\cl D^{t}(\bs A,\bs A')$ concentrates
around $\tinv{m}\|\bs a - \bs a'\|^2$ with reduced distortion. 
\begin{lemma}
\label{lem:concent-Pi-dist-around-mean}
Given $\bs a, \bs a' \in \bb R^{m}$, $\bs A = \bs a\bs
1_2^\transp + \bs \Xi$ and $\bs A' = \bs a'\bs
1_2^\transp+ \bs \Xi$ for $\bs \Xi \sim \cl U^{m\times
  2}([0,\delta])$, we have for some $C,c>0$ and probability smaller than
$C\exp(-c \epsilon^2 m)$, 
\begin{equation}
  \label{eq:lem:concent-Pi-dist-around-mean}
  \big|\cl D^{t}(\bs A,\bs A') - \tinv{m}\|\bs a - \bs
  a'\|^2\big| \geq (\epsilon
  (\delta + |t|) + |t|) \tfrac{8}{\sqrt m}\|\bs a - \bs a'\|  + 16\, \epsilon\, (\delta + |t|)^2 + 16|t|^2.
\end{equation}
\end{lemma}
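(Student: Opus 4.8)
The plan is to follow the same template as the proofs of Lemmas~\ref{lem:non-uniform-concent-soften-dist} and~\ref{lem:non-uniform-concent-ellp-soften-dist}, the one new ingredient being that the argument must exploit \emph{both} sources of independence in $\bs\Xi$. Write $\xi_i:=\Xi_{i1}$ and $\xi'_i:=\Xi_{i2}$, so that the $2m$ variables $\xi_i,\xi'_i$ are \iid $\sim\cl U([0,\delta])$, and set $Y_i:=d^{t}(a_i+\xi_i,a'_i+\xi_i)$, $Y'_i:=d^{t}(a_i+\xi'_i,a'_i+\xi'_i)$, so that $\cl D^{t}(\bs A,\bs A')=\tinv m\sum_{i=1}^m Y_iY'_i$, where $Y_i$ and $Y'_i$ are independent for each $i$ (columns of $\bs\Xi$ are independent) and the products $Y_iY'_i$ are mutually independent in $i$ (rows of $\bs\Xi$ are independent). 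First I would bound the \emph{bias}: since $Y_i$ and $Y'_i$ are independent and identically distributed, $\bb E[Y_iY'_i]=(\bb E Y_i)^2$, and by~\eqref{eq:ex-ected-dither-diff-dt-dist} we have $\bb E Y_i=|a_i-a'_i|+e_i$ with $|e_i|\leq 4|t|$, hence $\bigl|(\bb E Y_i)^2-|a_i-a'_i|^2\bigr|\leq 8|t|\,|a_i-a'_i|+16|t|^2$; summing over $i$ and using $\|\bs a-\bs a'\|_1\leq\sqrt m\,\|\bs a-\bs a'\|$ yields $\bigl|\bb E\,\cl D^{t}(\bs A,\bs A')-\tinv m\|\bs a-\bs a'\|^2\bigr|\leq\tfrac{8|t|}{\sqrt m}\|\bs a-\bs a'\|+16|t|^2$. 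Note that it is precisely the column-independence of $\bs\Xi$ that makes $\bb E\,\cl D^{t}$ agree with $\tinv m\|\bs a-\bs a'\|^2$ up to $O(|t|)$, in contrast with $\bb E\,\cl D_2^{t}$ in Sec.~\ref{sec:proof-distorted-embedding}.

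Next I would establish the \emph{concentration} of $\cl D^{t}(\bs A,\bs A')$ about its mean. By~\eqref{eq:diff-dt-dist} one may write $Y_i=|a_i-a'_i|+U_i$ and $Y'_i=|a_i-a'_i|+U'_i$ with $|U_i|,|U'_i|\leq 4(\delta+|t|)$, and the identity $Y_iY'_i-|a_i-a'_i|^2=|a_i-a'_i|(U_i+U'_i)+U_iU'_i$ then gives, for $Z_i:=Y_iY'_i-|a_i-a'_i|^2$, the bound $|Z_i|\lesssim(\delta+|t|)\bigl(|a_i-a'_i|+\delta+|t|\bigr)$ — exactly the same type of bound as for the summands of Lemma~\ref{lem:non-uniform-concent-ellp-soften-dist}. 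Since $\|Z_i\|_{\psi_2}\leq\|Z_i\|_{\infty}$, the approximate rotation invariance of sub-Gaussian variables makes $Z:=\sum_i Z_i=m\,\cl D^{t}(\bs A,\bs A')-\|\bs a-\bs a'\|^2$ sub-Gaussian with $\|Z\|_{\psi_2}^2\lesssim(\delta+|t|)^2\sum_i(|a_i-a'_i|+\delta+|t|)^2\leq(\delta+|t|)^2\bigl(\|\bs a-\bs a'\|+\sqrt m(\delta+|t|)\bigr)^2$, i.e.\ $\|Z\|_{\psi_2}^2\lesssim m(\delta+|t|)^2\bigl(\tinv{\sqrt m}\|\bs a-\bs a'\|+\delta+|t|\bigr)^2$. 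Applying the sub-Gaussian tail bound to $Z-\bb E Z$ at a deviation level proportional to $m\,\epsilon\,(\delta+|t|)\bigl(\tinv{\sqrt m}\|\bs a-\bs a'\|+\delta+|t|\bigr)$ makes the exponent $\gtrsim\epsilon^2 m$, so that with probability at least $1-C\exp(-c\epsilon^2 m)$ one has $\bigl|\cl D^{t}(\bs A,\bs A')-\bb E\,\cl D^{t}(\bs A,\bs A')\bigr|\lesssim\tfrac{\epsilon(\delta+|t|)}{\sqrt m}\|\bs a-\bs a'\|+\epsilon(\delta+|t|)^2$.

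Finally, I would combine this concentration estimate with the bias bound on $\bigl|\bb E\,\cl D^{t}(\bs A,\bs A')-\tinv m\|\bs a-\bs a'\|^2\bigr|$ via the triangle inequality, and absorb the numerical constants — using $\epsilon\leq1$ to dominate $\epsilon(\delta+|t|)$ and $\epsilon(\delta+|t|)^2$ — into the deliberately generous coefficients $8$ and $16$ appearing on the right-hand side of~\eqref{eq:lem:concent-Pi-dist-around-mean}, which gives the claim. I do not expect a genuine obstacle here: the only points that need care are the elementary product factorization above, which is what keeps the coefficient of $|a_i-a'_i|$ in $|Z_i|$ of order $(\delta+|t|)$ rather than of order $1$ (exactly as in Lemma~\ref{lem:non-uniform-concent-ellp-soften-dist}, and what ultimately makes the additive distortion shrink), together with scrupulous bookkeeping of which copies of the dither $\bs\Xi$ are independent of which — column-independence for the expectation, row-independence for the $\sqrt m$-type concentration.
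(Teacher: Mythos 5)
Your proposal is correct and follows essentially the same route as the paper: the paper identifies the same summands $X^t_i Y^t_i - |a_i-a'_i|^2$, uses \eqref{eq:ex-ected-dither-diff-dt-dist} and \eqref{eq:diff-dt-dist} for the bias and boundedness exactly as you do, and then invokes the product-concentration Lemma~\ref{lem:concent-prod-vars} of Appendix~\ref{sec:useful-lemmas} with $L=4(\delta+|t|)$ and $s=4|t|$, whose proof is precisely your inlined bias-plus-sub-Gaussian-tail argument. The only difference is organizational — the paper factors the product step into a separate lemma — so nothing further is needed.
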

\begin{proof}
We start by noting that we can rewrite
$$
\ts \cl D^{t}(\bs A,\bs A') - \tinv{m}\|\bs a - \bs
  a'\|^2 = \tinv{m}\,\sum_i (X^t_i Y^t_i - |a_i - a'_i|^2), 
$$  
with $X^t_i = d^t(a_{i} + \Xi_{i1}, a'_{i} + \Xi_{i1})$ and $Y^t_i =
d^t(a_{i} + \Xi_{i2}, a'_{i} + \Xi_{i2})$.  Moreover, from
\eqref{eq:ex-ected-dither-diff-dt-dist}, $\max(|\bb E X^t_i
- |a_{i} - a'_i||,|\bb E Y^t_i
- |a_i - a'_i||) \leq 4 |t|$, while~\eqref{eq:diff-dt-dist} gives
$\max(\|X^t_i - |a_i - a'_i|\|_\infty, \|Y^t_i - |a_i - a'_i|\|_\infty)
\leq 4(\delta + |t|)$.  
Consequently, we can apply Lemma~\ref{lem:concent-prod-vars} on the
concentration of the product of independent and bounded random
variables, the \rv's $X_i$ and $Y_i$ being both bounded by $L =
4(\delta + |t|)$. Thus, setting $s = 4|t|$ in this lemma gives the result.
\end{proof}

\begin{proof}[Proof of Prop.~\ref{prop:bi-dithered-embedding}]
As in the previous proofs of Prop.~\ref{prop:main} and Prop.~\ref{prop:distorted-main}, let us define an $(\ell_2,\eta)$-covering $\cl K_\eta$ of the set $\cl K$. We fix $\epslin,\epsilon \in (0,1)$ and $t\in \bb R$ and assume that $\bs
\Phi$ satisfies the $(\ell_2,\ell_2)$-RIP$(\cl K - \cl K, \epslin)$. Therefore,
for all $\bs x, \bs x' \in \cl K$ with their respective closest vectors
in $\cl K_\eta$ being $\bs x_0$ and $\bs x'_0$, we have 
\begin{equation}
  \label{eq:bounded-image-resid-l2}
\ts \|\bs \Phi(\bs x - \bs x_0)\|_2\leq \sqrt{2m}\,\eta\quad \text{and}\quad \|\bs
\Phi(\bs x' - \bs x'_0)\|_2 \leq \sqrt{2m}\,\eta,  
\end{equation}
with also $|\|\bs \Phi(\bs u - \bs u')\|^2_2 - \|\bs u - \bs u'\|^2|
\leq \epslin \|\bs u - \bs u'\|^2$ for all $\bs u,\bs u' \in \cl K$. 

Hereafter, we use again the compact notation $\bar{\bs u} = \bs u \bs
1_2^\transp$ for any vectors $\bs u$. Moreover, since $\bs \Phi$ is fixed, we observe from a union bound
applied on~\eqref{eq:lem:concent-Pi-dist-around-mean} that if $m \gtrsim \epsilon^{-2} \cl H_2(\cl K,
\eta)$ then, for $P \geq 1$ to be established later, both relations
\begin{align}
&\ts\big|\cl D^{t-\eta \sqrt{P}} (\bs \Phi \bar{\bs x}_0 +
  \bs \Xi, \bs \Phi \bar{\bs x}'_0  + \bs \Xi) -
  \cl D_{\ell_2}(\bs \Phi \bs x_0,\bs \Phi \bs x'_0)\big|\nonumber\\
&\ts\qquad\lesssim 
(\epsilon
  \delta + (1+\epsilon) |t-\eta \sqrt{P}|) \tfrac{8}{\sqrt m}\|\bs
  \Phi(\bs x_0 - \bs x'_0)\|\\
&\ts\qquad\qquad  + 16 \epsilon \delta^2 + 16|t-\eta \sqrt{P}|(2
  \epsilon \delta + (1+\epsilon) |t-\eta \sqrt{P}|) 
\nonumber\\ 
\label{eq:concent-cover-upper-bidith}
&\ts\qquad \lesssim 
(\epsilon
  \delta + |t| + \eta \sqrt{P}) \|\bs x_0 - \bs x'_0\|  + \epsilon \delta^2 + (|t| + \eta \sqrt{P})(
  \epsilon \delta + |t| + \eta \sqrt{P}) 
\\ 
&\ts\big|\cl D^{t+\eta \sqrt{P}} (\bs \Phi \bar{\bs x}_0 +
  \bs \Xi, \bs \Phi \bar{\bs x}'_0  + \bs \Xi) -
  \cl D_{\ell_2}(\bs \Phi \bs x_0,\bs \Phi \bs x'_0)\big|\\
\label{eq:concent-cover-lower-bidith}
&\ts\qquad \lesssim (\epsilon
  \delta + |t| + \eta \sqrt{P}) \|\bs x_0 - \bs x'_0\|  + \epsilon \delta^2 + (|t| + \eta \sqrt{P})(
  \epsilon \delta + |t| + \eta \sqrt{P}) 
\end{align}
hold jointly for all $\bs x_0, \bs x'_0$ in $\cl
K_\eta$ with probability exceeding $1 - c e^{-c' \epsilon^2 m}$ for some
$c,c'>0$. 
  
Consequently, as any pair of vectors $\bs x, \bs x' \in \cl K$ can always be
written as $\bs x = \bs x_0 + \bs r, \bs x' = \bs
x_0' + \bs r'$ in $\cl K$ with $\bs x_0, \bs x_0' \in \cl K_{\eta}$
and $\max(\|\bs r\|, \|\bs r'\|) \leq \eta$, using Lemma
\ref{lem:continuity-L2-Dt2-matrix-version} and~\eqref{eq:continuity-L2-Dt2-upper-matrix-version}, we find 
\begin{align*}
&\cl D^t(\bs \Phi \bar{\bs x} + \bs \Xi,\bs \Phi \bar{\bs x}' + \bs \Xi) - \cl D^{t-\eta \sqrt{P}}(\bs \Phi \bar{\bs x}_0 + \bs \Xi,\bs \Phi \bar{\bs x}'_0 + \bs \Xi)\\
&\ts \lesssim (\eta + \tfrac{\delta}{\sqrt P})\tinv{\sqrt m} \|\bs\Phi\bar{\bs x}_0 - \bs\Phi\bar{\bs x}'_0\|_F + (\delta + |t| + \eta)
                            \eta + \delta (\delta + |t|)\,
  \tfrac{2}{P},\\
&\ts \lesssim (\eta + \tfrac{\delta}{\sqrt P})\tinv{\sqrt m} \|\bs\Phi{\bs x}_0 - \bs\Phi{\bs x}'_0\| + (\delta + |t| + \eta)
                            \eta + \delta (\delta + |t|)\,
  \tfrac{2}{P},\\
&\ts \lesssim (\eta + \tfrac{\delta}{\sqrt P}) \|\bs x_0 - \bs x'_0\| + (\delta + |t| + \eta)
                            \eta + \delta (\delta + |t|)\,
  \tfrac{2}{P}\\
&\ts \lesssim (\eta + \tfrac{\delta}{\sqrt P}) \|\bs x - \bs x'\| + (\delta + |t| + \eta)
                            \eta + \delta (\delta + |t|)\,
  \tfrac{2}{P},
\end{align*}
where we used the $(\ell_2, \ell_2)$-RIP$(\cl K-\cl K, \epslin)$ and $\|\bs x_0 - \bs x_0'\|\leq \|\bs x - \bs x'\| +
2\eta$ from the covering of $\cl K_\eta$. Using again this last bound and~\eqref{eq:concent-cover-upper-bidith}, we get
\begin{align*}
&\cl D^t(\bs \Phi \bar{\bs x} + \bs \Xi,\bs \Phi \bar{\bs x}' + \bs \Xi) - \cl D_{\ell_2}(\bs \Phi \bs x_0,\bs \Phi \bs x'_0)\\ 
&\lesssim (\epsilon
  \delta + |t| + \eta \sqrt{P}) \|\bs x - \bs x'\|  + \epsilon \delta^2 + (|t| + \eta \sqrt{P})(
  \epsilon \delta + |t| + \eta \sqrt{P})\\
&\ts\qquad + (\eta + \tfrac{\delta}{\sqrt P}) \|\bs x - \bs x'\| + (\delta + |t| + \eta)
                            \eta + \delta (\delta + |t|)\,
  \tfrac{1}{P}\\
&\lesssim (\epsilon
  \delta + |t|) \|\bs x - \bs x'\|  +
  \epsilon\delta(|t| + \delta) + |t|^2,
\end{align*}
where we have set the free parameters to 
$P^{-1} = \epsilon^2$ and $\eta = \delta \epsilon^{2} < \delta\epsilon$ (with $\eta \sqrt{P} =
\delta \epsilon$). 

Finally, since $\cl D_{\ell_2}(\bs \Phi \bs x_0,\bs \Phi \bs x'_0) \leq (1+\epslin) \|\bs x_0 - \bs
x'_0\|^2$ and 
$$
\|\bs x_0 - \bs
x'_0\|^2 - \|\bs x - \bs
x'\|^2 \leq 2\eta \|\bs x - \bs
x'\| + 4\eta^2 \lesssim \delta\epsilon \|\bs x - \bs
x'\| + \delta^2 \epsilon,
$$ 
we find on $t=0$, \ie for $\cl D^0(\bs \Phi \bar{\bs x} + \bs \Xi,\bs \Phi \bar{\bs x}' + \bs \Xi)=\tinv{m}\| \Amapd(\bs x) - \Amapd (\bs
  x')\|_{1,\circ}$, that
\begin{align*}
&\tinv{m}\| \Amapd(\bs x) - \Amapd (\bs
  x')\|_{1,\circ} -
  (1+\epslin) \|\bs x - \bs x'\|^2\ \lesssim\ \epsilon
  \delta \|\bs x - \bs x'\|  +
  \epsilon\delta^2.
\end{align*}

The upper bound in \eqref{eq:bi-dithered-embedding} is then achieved by observing that
$2\epsilon\delta \|\bs x - \bs x'\| \leq \epsilon \|\bs x - \bs x'\|^2 + \epsilon\delta^2$ which then provides
\begin{align*}
&\tinv{m}\| \Amapd(\bs x) - \Amapd (\bs
  x')\|_{1,\circ} -
  (1+\epslin + c \epsilon) \|\bs x - \bs x'\|^2\ \lesssim\ \epsilon\delta^2,
\end{align*}
for some $c>0$.

The lower bound in
\eqref{eq:bi-dithered-embedding} is obtained
similarly using~\eqref{eq:concent-cover-lower-bidith}.  
\end{proof}

\appendix

\section{Concentration of the product of bounded and independent
  random variables}
\label{sec:useful-lemmas}

Lemma~\ref{lem:concent-Pi-dist-around-mean} relies on the following
useful property.
\begin{lemma}
\label{lem:concent-prod-vars} Let $X_i$ and $Y_i$ be independent and bounded random variables for
$1\leq i\leq m$ with $\bb E X_i = \bb E Y_i = \mu_i$. Assume that, for
some $\bs a \in \bb
R^m$, $|\mu_i - |a_i||\leq s$ and $\max(\|X_i -
|a_i|\|_{\infty}, \|Y_i -
|a_i|\|_{\infty}) \leq L$ for some $L,s > 0$. Then, for some $c>0$,
$$
\ts \bb P\big[\big|\tinv{m} \sum_i X_iY_i - |a_i|^2\big| \geq \epsilon L (L
+ \tfrac{2}{m}\|\bs a\|) + s(s + \tfrac{2}{m}\|\bs a\|)\big] \lesssim \exp(-c \epsilon^2 m).
$$      
\end{lemma}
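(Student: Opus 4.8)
The plan is to isolate the ``linear'' and ``quadratic'' parts of the fluctuations of $X_iY_i$ around $|a_i|^2$ through an elementary algebraic identity, and then to concentrate each part separately by Hoeffding's inequality for sums of bounded independent random variables. First I would set $U_i := X_i - |a_i|$ and $V_i := Y_i - |a_i|$, so that $\|U_i\|_{\infty},\|V_i\|_{\infty}\leq L$ by hypothesis while $|\bb E U_i| = |\mu_i - |a_i|| \leq s$ and $|\bb E U_i| \leq \bb E|U_i| \leq L$ (idem for $V_i$). The key identity is
\begin{equation*}
X_iY_i - |a_i|^2 \;=\; |a_i|\,(U_i + V_i) \;+\; U_iV_i ,
\end{equation*}
so that, writing $\mu := \bb E\big[\tinv m\sum_i (X_iY_i - |a_i|^2)\big]$ and using $\bb E[U_iV_i] = \bb E U_i\,\bb E V_i$ (independence of $X_i$ and $Y_i$), the centred average $\tinv m\sum_i (X_iY_i - |a_i|^2) - \mu$ splits as $S_1 + S_2$, with $S_1 := \tinv m\sum_i |a_i|\,(U_i + V_i - \bb E U_i - \bb E V_i)$ and $S_2 := \tinv m\sum_i (U_iV_i - \bb E U_i\,\bb E V_i)$.

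Next I would bound the deterministic term: $\big|\tinv m\sum_i |a_i|(\bb E U_i + \bb E V_i)\big| \leq \tfrac{2s}{m}\|\bs a\|_1 \leq \tfrac{2s}{\sqrt m}\|\bs a\|$ by Cauchy--Schwarz, and $\big|\tinv m\sum_i \bb E U_i\,\bb E V_i\big| \leq s^2$, hence $|\mu| \leq s\big(s + \tfrac{2}{\sqrt m}\|\bs a\|\big)$. For the random terms, $m S_1$ is a sum of $m$ independent mean-zero variables whose $i$-th entry ranges over an interval of length at most $4L|a_i|$, so Hoeffding's inequality gives $\bb P[|S_1| > \tau_1] \lesssim \exp\big(-c\,m^2\tau_1^2 / (L^2\|\bs a\|^2)\big)$, and choosing $\tau_1 = c_1\,\epsilon L\|\bs a\|/\sqrt m$ makes the exponent of order $\epsilon^2 m$; likewise $m S_2$ is a sum of $m$ independent mean-zero variables each bounded by $|U_iV_i|+|\bb E U_i\,\bb E V_i| \leq 2L^2$, so $\bb P[|S_2| > \tau_2] \lesssim \exp(-c\,m\tau_2^2/L^4)$ and $\tau_2 = c_2\,\epsilon L^2$ suffices. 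A union bound over these two events combined with $|\tinv m\sum_i(X_iY_i - |a_i|^2)| \leq |S_1| + |S_2| + |\mu|$ then gives failure probability $\lesssim \exp(-c\epsilon^2 m)$ once the deviation exceeds $\epsilon L\big(L + \tfrac{1}{\sqrt m}\|\bs a\|\big) + s\big(s + \tfrac{2}{\sqrt m}\|\bs a\|\big)$, which is the stated bound with $\tfrac{2}{m}\|\bs a\|$ read as $\tfrac{2}{\sqrt m}\|\bs a\| \geq \tfrac2m\|\bs a\|_1$ (the normalization relevant for the use of this lemma in Lemma~\ref{lem:concent-Pi-dist-around-mean}).

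The main point requiring care, and the reason for the decomposition, is that $X_iY_i$ admits no bound independent of $a_i$ (its range scales with $|a_i|$), so a direct Hoeffding estimate for $\sum_i X_iY_i$ would drag in an unwieldy $\sum_i(|a_i|+L)^4$ factor; splitting off $|a_i|(U_i+V_i)$ converts this into one Hoeffding bound with weights $|a_i|$ (variance proxy $\propto \|\bs a\|_2^2$, hence deviation $\propto \|\bs a\|/\sqrt m$) and one with the uniform bound $L^2$ (deviation $\propto \epsilon L^2$), which is exactly the clean $L$- and $\|\bs a\|$-dependence claimed. The only remaining subtlety is that the cross term $U_iV_i$ has mean $\bb E U_i\,\bb E V_i$, not $0$: this is the source of the $s^2$ contribution and the place where mutual independence of $X_i$ and $Y_i$ (not merely independence across $i$) is used. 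Apart from this bookkeeping there is no genuine obstacle — the argument is a routine double application of Hoeffding's inequality, consistent with the paper's stated philosophy that its proofs ``are not technical''.
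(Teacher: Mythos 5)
Your proof is correct, and it reaches the stated bound by a mildly different route than the paper. The paper lumps everything into $Z_i := X_iY_i - |a_i|^2$, bounds $\|Z_i\|_{\infty} \leq L(L+2|a_i|)$ via the single split $X_i(Y_i-|a_i|) + |a_i|(X_i-|a_i|)$, and then applies one sub-Gaussian (Hoeffding-type) concentration bound to $\sum_i Z_i$ with $\|\sum_i Z_i\|_{\psi_2}^2 \lesssim \sum_i L^2(L+2|a_i|)^2$, treating the bias $|\bb E Z_i - |a_i|^2| = |\mu_i^2-|a_i|^2| \leq s(2|a_i|+s)$ exactly as you treat $|\mu|$. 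Your version instead decomposes $X_iY_i - |a_i|^2 = |a_i|(U_i+V_i) + U_iV_i$ and runs two classical Hoeffding bounds, one with weights $|a_i|$ and one with the uniform bound $L^2$; this is more elementary (no Orlicz norms) at the cost of a union bound over two events, and it makes transparent where the mutual independence of $X_i$ and $Y_i$ enters (only through $\bb E[U_iV_i]=\bb E U_i\,\bb E V_i$). The two arguments are equivalent up to constants. You are also right to flag the normalization: the factor $\tfrac{2}{m}\|\bs a\|$ in the statement should read $\tfrac{2}{\sqrt m}\|\bs a\|$ (equivalently $\tfrac{2}{m}\|\bs a\|_1$) — the paper's own chain of inequalities actually produces $\|\sum_i Z_i\|_{\psi_2}^2 \lesssim mL^2(L+\tfrac{2}{\sqrt m}\|\bs a\|)^2$ and a bias of $s^2 + \tfrac{2s}{\sqrt m}\|\bs a\|$, and this is the form used when the lemma is invoked in Lemma~\ref{lem:concent-Pi-dist-around-mean} (which carries the factor $\tfrac{8}{\sqrt m}\|\bs a - \bs a'\|$). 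So this is a typo in the paper rather than a gap in your argument.
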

\begin{proof}
Notice first that, for $Z_i := X_iY_i - |a_i|^2$ for $i\in [m]$, 
$$
\|Z_i\|_{\infty} \leq \|X_i(Y_i - |a_i|)\|_{\infty} +
\||a_i|\,(X_i - |a_i|)\|_{\infty} \leq L \|X_i\|_{\infty} + |a_i| L
\leq L^2 + 2|a_i| L.
$$  
Therefore, each $Z_i$ is sub-Gaussian with
$\|Z_i\|_{\psi_2} \leq \|Z_i\|_{\infty} \leq L (L + 2|a_i|)$. Moreover, by approximate rotational invariance
\cite{vershynin2010introduction}, 
\begin{align*}
\ts \|\sum_i Z_i\|^2_{\psi_2}&\ts \lesssim \sum_i \|Z_i\|^2_{\psi_2} \leq L^2
\sum_i (L^2 + 4|a_i| + 4|a_i|^2)\\
&\ts \leq L^2(m L^2 + 4 \sqrt m \|\bs
a\|_2 + 4 \|\bs a\|^2) \leq m L^2 (L + \tfrac{2}{m} \|\bs a\|)^2.   
\end{align*}
Consequently, for $\epsilon > 0$ and some $c>0$,
$$
\ts \bb P(|\sum_i Z_i - \bb E Z_i| \geq \epsilon) \lesssim \exp\big(-\epsilon^2 (m L^2 (L + \tfrac{2}{m} \|\bs a\|)^2)^{-1}\big).
$$
However, $|\bb E Z_i - |a_i|^2| \leq |\mu_i - |a_i|| |\mu_i + |a_i|| \leq
s (2|a_i| + s)$, so that $|\sum_i Z_i - \bb E Z_i| \geq |\sum_i Z_i -
|a_i|^2| - m s^2 - 2s \sqrt m \|\bs a\|$. Therefore
$$
\ts \bb P(|\sum_i Z_i - |a_i|^2| \geq \epsilon + m s^2 + 2s \sqrt m \|\bs a\|) \lesssim \exp\big(-\epsilon^2 (m L^2 (L + \tfrac{2}{m} \|\bs a\|)^2)^{-1}\big),
$$
which gives the result by a simple rescaling $\epsilon \to m \epsilon L (L + \tfrac{2}{m} \|\bs a\|)$.  
\end{proof}

\section{Proof of Lemma \ref{lem:continuity-L2-Dt2-matrix-version}}
\label{sec:proof-lemma-5}

By assumption, the sets 
$$
T_j := \{i \in [m]: |r_{ij}| \leq \eta \sqrt{P},\, |r'_{ij}| \leq \eta
\sqrt{P}\},\quad j \in \{1,2\},
$$
are such that $|T_j^{\compl}| \leq 2 m/P$ as $2\eta^2 m \geq \|\bs r_j\|^2_2 + \|\bs r'_j\|^2_2 \geq \|(\bs r_j)_{T}\|^2_2 + \|(\bs r'_j)_{T}\|^2_2 + |T^\compl|\eta^2 P \geq
|T^\compl|\eta^2 P$. Therefore, considering
Lemma~\ref{lem:cont-smald-t}, we find with 
$\rho_{ij} := \max(|r_{ij}|,|r'_{ij}|)$,
\begin{align*}
&\cl D^{t+\eta \sqrt{P}}(\bs A,\bs A') =\ts \tfrac{1}{m} \sum_{i=1}^m
  d^{t+\eta \sqrt{P}}(a_{i1}, a'_{i1}) d^{t+\eta \sqrt{P}}(a_{i2}, a'_{i2})\\
&= \ts \tfrac{1}{m} \sum_{i\in T}
  d^{t+\eta \sqrt{P}}(a_{i1}, a'_{i1}) d^{t+\eta \sqrt{P}}(a_{i2}, a'_{i2}) + 
  \tfrac{1}{m} \sum_{i\in T^\compl}
  d^{t+\eta \sqrt{P}}(a_{i1}, a'_{i1}) d^{t+\eta \sqrt{P}}(a_{i2}, a'_{i2})\\  
&\leq \ts \tfrac{1}{m} \sum_{i=1}^m  d^{t}(a_{i1} + r_{i1}, a'_{i1} + r'_{i1}) d^{t}(a_{i2} + r_{i2},
  a'_{i2} + r'_{i2})\ +\ \tfrac{1}{m} \sum_{i\in T^\compl} R_i,
\end{align*}  
with 
\begin{align*}
R_i&:= d^{t+\eta \sqrt{P}-\rho_{i1}}(a_{i1} + r_{i1}, a'_{i1} + r'_{i1}) d^{t+\eta
     \sqrt{P}-\rho_{i2}}(a_{i2} + r_{i2},
  a'_{i2} + r'_{i2})\\
&\qquad - d^{t}(a_{i1} + r_{i1}, a'_{i1} + r'_{i1}) d^{t}(a_{i2} + r_{i2},
  a'_{i2} + r'_{i2})\\
&= d^{t+\eta \sqrt{P}-\rho_{i1}}(a_{i1} + r_{i1}, a'_{i1} + r'_{i1}) \big(d^{t+\eta
     \sqrt{P}-\rho_{i2}}(a_{i2} + r_{i2},
  a'_{i2} + r'_{i2}) - d^{t}(a_{i2} + r_{i2},
  a'_{i2} + r'_{i2})\big)\\
&\quad + d^{t}(a_{i2} + r_{i2},
  a'_{i2} + r'_{i2}) \big( d^{t+\eta \sqrt{P}-\rho_{i1}}(a_{i1} +
  r_{i1}, a'_{i1} + r'_{i1}) - d^{t}(a_{i1} + r_{i1}, a'_{i1} + r'_{i1})\big).
\end{align*}
Using~\eqref{eq:diff-dt-ds} and~\eqref{eq:diff-dt-dist}, we have for some $c>0$ and $i \in T^\compl$,
\begin{align*}
d^{t}(a_{i2} + r_{i2},
  a'_{i2} + r'_{i2})&\leq |a_{i2} - a'_{i2} + r_{i2} - r'_{i2}| +
                      4\delta + 4|t| \lesssim |a_{i2} - a'_{i2}| +
                      \rho_{i2} + \delta + |t|,\\
d^{t+\eta \sqrt{P}-\rho_{i1}}(a_{i1} + r_{i1}, a'_{i1} +
  r'_{i1})&\lesssim |a_{i1} - a'_{i1}| +
                      \rho_{i1} + \delta + |t| - \eta \sqrt P \leq |a_{i1} - a'_{i1}| +
                      \rho_{i1} + \delta + |t|,
\end{align*}
since $\rho_{ij} \geq \eta \sqrt P$, while 
\begin{align*}
\big| d^{t+\eta \sqrt{P} - \rho_{ij}}(a_{ij} + r_{ij}, a'_{ij} + r'_{ij}) - d^{t}(a_{ij} +
  r_{ij}, a'_{ij} + r'_{ij})\big|&\leq c\rho_{ij} + c(\delta - \eta
                                   \sqrt{P})\ \lesssim \rho_{ij} + \delta. 
\end{align*}
Therefore 
\begin{align*}
R_i&\lesssim (|a_{i1} - a'_{i1}| +
                      \rho_{i1} + \delta + |t|)(\rho_{i2} + \delta) + (|a_{i2} - a'_{i2}| +
                      \rho_{i2} + \delta + |t|)(\rho_{i1} + \delta).
\end{align*}
In the last bound, the first term of the RHS can be bounded as
\begin{align*}
&\ts \sum_{i\in
  T^\compl} (|a_{i1} - a'_{i1}| +
                      \rho_{i1} + \delta + |t|)(\rho_{i2} + \delta)\\
&\ts\lesssim \|\bs a_1 - \bs
                 a'_1\|\|(\bs \rho_2)_{T^\compl}\| + \|(\bs \rho_1)_{T^\compl}\|_2 \|(\bs \rho_2)_{T^\compl}\|_2 + (\delta + |t|)
                 \|(\bs \rho_2)_{T^\compl}\|_1\\
&\qquad\qquad + \delta \|(\bs a_1 - \bs
                 a'_1)_{T^\compl}\|_1 +  \delta \|(\bs \rho_1)_{T^\compl}\|_1 + \delta (\delta + |t|)\, |T^\compl|\\
&\ts\lesssim (\eta + \tfrac{\delta}{\sqrt P}) \sqrt m \|\bs a_1 - \bs
                 a'_1\| + (\delta + |t| + \eta)
  m \eta + \delta (\delta + |t|)\, \tfrac{2m}{P},
\end{align*}
where we used the crude bounds $\max(\|(\bs
\rho_1)_{T^\compl}\|_1,\|(\bs \rho_2)_{T^\compl}\|_1) \leq \sqrt m
\max(\|\bs \rho_1\|_2,\|\bs \rho_2\|_2) \leq 2 m \eta$ and $\max(\|(\bs \rho_1)_{T^\compl}\|_2,\|(\bs \rho_2)_{T^\compl}\|_2) \leq 2 \eta \sqrt m$. 
From a similar development on the second term of the bound over $R_i$
above, we find then
\begin{align*}
&\ts \sum_{i\in
  T^\compl} R_i\\
&\ts\lesssim (\eta + \tfrac{\delta}{\sqrt P}) \sqrt m (\|\bs a_1 - \bs
                 a'_1\| + \|\bs a_2 - \bs
                 a'_2\|) + (\delta + |t| + \eta)
  m \eta + \delta (\delta + |t|)\, \tfrac{2m}{P}\\
&\ts\lesssim (\eta + \tfrac{\delta}{\sqrt P}) \sqrt m \|\bs A - \bs
                 A'\|_F + (\delta + |t| + \eta)
  m \eta + \delta (\delta + |t|)\, \tfrac{2m}{P}.
\end{align*}

\noindent This provides finally 
\begin{align*}
\cl D^{t+\eta \sqrt{P}}(\bs A,\bs A') - \cl D^{t}(\bs
                                       A + \bs R,\bs A' + \bs
  R')&\ts \lesssim (\eta + \tfrac{\delta}{\sqrt P}) \tinv{\sqrt m} \|\bs A - \bs
                 A'\|_F + (\delta + |t| + \eta)
  \eta + \delta (\delta + |t|)\, \tfrac{2}{P}.
\end{align*}

The upper bound is obtained similarly by observing that
\begin{align*}
&\cl D^{t-\eta \sqrt{P}}(\bs A,\bs A') =\ts \tfrac{1}{m} \sum_{i=1}^m
  d^{t-\eta \sqrt{P}}(a_{i1}, a'_{i1}) d^{t-\eta \sqrt{P}}(a_{i2}, a'_{i2})\\
&= \ts \tfrac{1}{m} \sum_{i\in T}
  d^{t-\eta \sqrt{P}}(a_{i1}, a'_{i1}) d^{t-\eta \sqrt{P}}(a_{i2}, a'_{i2}) + 
  \tfrac{1}{m} \sum_{i\in T^\compl}
  d^{t-\eta \sqrt{P}}(a_{i1}, a'_{i1}) d^{t-\eta \sqrt{P}}(a_{i2}, a'_{i2})\\  
&\geq \ts \tfrac{1}{m} \sum_{i=1}^m
d^{t}(a_{i1} + r_{i1}, a'_{i1} + r'_{i1}) d^{t}(a_{i2}+r_{i2}, a'_{i2}+r'_{i2})  
\ -\ \tfrac{1}{m} |\sum_{i\in
  T^\compl} R'_i|,
\end{align*} 
with 
\begin{align*}
R'_i&:= d^{t-\eta \sqrt{P}+\rho_{i1}}(a_{i1} + r_{i1}, a'_{i1} + r'_{i1}) d^{t-\eta
     \sqrt{P}+\rho_{i2}}(a_{i2} + r_{i2},
  a'_{i2} + r'_{i2})\\
&\qquad - d^{t}(a_{i1} + r_{i1}, a'_{i1} + r'_{i1}) d^{t}(a_{i2} + r_{i2},
  a'_{i2} + r'_{i2})\\
&= d^{t-\eta \sqrt{P}+\rho_{i1}}(a_{i1} + r_{i1}, a'_{i1} + r'_{i1}) \big(d^{t-\eta
     \sqrt{P}+\rho_{i2}}(a_{i2} + r_{i2},
  a'_{i2} + r'_{i2}) - d^{t}(a_{i2} + r_{i2},
  a'_{i2} + r'_{i2})\big)\\
&\quad + d^{t}(a_{i2} + r_{i2},
  a'_{i2} + r'_{i2}) \big( d^{t-\eta \sqrt{P}+\rho_{i1}}(a_{i1} +
  r_{i1}, a'_{i1} + r'_{i1}) - d^{t}(a_{i1} + r_{i1}, a'_{i1} + r'_{i1})\big).
\end{align*}
which has the same magnitude upper bound than the one found for $R_i$ above.

\section{Gaussian mean width of rank-$r$ and $s$-joint sparse matrices}
\label{sec:gmw-rank-jsparse}

This small appendix provides an upper-bound on the Gaussian mean width of $\cl R_{r,s} \cap \bb B_{\ell_F}^{n_1 \times n_2}$, with 
$$
\cl R_{r,s} := \{\bs U \in \bb R^{n_1 \times n_2}: \rank \bs U = r,\ |\supp_{\!\rm r}(\bs U)| \leq s\},
$$
where $\supp_{\!\rm r}(\bs U) := \{i \in [n_1]: \|\bs U_{i,:}\| \neq 0\}$ and $\bs U_{i,:}$ is the $i^{\rm th}$ row of $\bs U$. Notice that
$$
\ts \cl R_{r,s} = \bigcup_{T \subset [n_1],\,|T|=s} \cl R_{r, T},
$$
with $\cl R_{r, T} := \{\bs U \in \bb R^{n_1 \times n_2}: \rank \bs U
= r,\ \supp_{\!\rm r}(\bs U) = T\}$. Moreover, as given in
Tab.~\ref{tab:Kolmog-bounds}, we have $w(\cl R_{r, T} \cap \bb
B_{\ell_F}^{n_1 \times n_2})^2 \lesssim r (s + n_2)$ since $\cl R_{r,
  T}$ is isomorphic to rank-$r$ matrices in $\bb B_{\ell_F}^{s \times
  n_2}$. Therefore, since there are ${n_1 \choose s}$ $s$-cardinality $T$,
Lemma~\ref{lem:gaussian-mean-width-union-set} given below provides 
$$
\ts w(\cl R_{r,s})^2\ \lesssim r (s + n_2) + s \log \frac{n_1}{s}.
$$

\begin{lemma}
\label{lem:gaussian-mean-width-union-set}
If $\cl K = \cup_{i=1} ^T \cl K_i \subset \bb R^n$ with $\max_i w(\cl
K_i) \leq W$ and $\max_i \cl H(\cl K_i,\eta) \leq H(\eta)$ for all
$\eta > 0$,
then $w(\cl K)^2\ \lesssim\ W^2 + \log T$ and $\cl H(\cl K, \eta)
\lesssim H(\eta) + \log T$. 
\end{lemma}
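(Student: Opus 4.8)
The plan is to handle the two bounds separately; the entropy bound is an immediate covering argument, while the Gaussian mean width bound is the substantive part.

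For the Kolmogorov entropy, I would take, for each $i\in[T]$, an optimal $(\ell_q,\eta)$-net $\cl K_{i,\eta}$ of $\cl K_i$, so that $\log|\cl K_{i,\eta}| = \cl H(\cl K_i,\eta)\le H(\eta)$. Then $\bigcup_i \cl K_{i,\eta}$ is an $(\ell_q,\eta)$-net of $\cl K = \bigcup_i \cl K_i$ of cardinality at most $T\,e^{H(\eta)}$, and since an optimal net of $\cl K$ can only be smaller, $\cl H(\cl K,\eta)\le \log T + H(\eta)$.

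For the Gaussian mean width, the starting point is that the supremum defining $w$ commutes with a finite union: with $W_i := \sup_{\bs u\in\cl K_i}|\langle \bs g,\bs u\rangle|$ and $\bs g\sim\cl N^n(0,1)$, one has $w(\cl K) = \bb E \max_{i\in[T]} W_i$. Each $W_i$ is a Lipschitz function of $\bs g$ with constant $\rad(\cl K_i) := \sup_{\bs u\in\cl K_i}\|\bs u\|$, since $|W_i(\bs g)-W_i(\bs g')|\le \sup_{\bs u\in\cl K_i}|\langle \bs g-\bs g',\bs u\rangle|\le \rad(\cl K_i)\|\bs g-\bs g'\|$; and by Jensen, $w(\cl K_i)\ge \sup_{\bs u\in\cl K_i}\bb E|\langle \bs g,\bs u\rangle| = \sqrt{2/\pi}\,\rad(\cl K_i)$, so $\rad(\cl K_i)\lesssim w(\cl K_i)\le W$ — and in the applications of interest, where the pieces sit in $\bb B^n$, simply $\rad(\cl K_i)\le 1$. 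I would then invoke Gaussian concentration of Lipschitz functions: $W_i - \bb E W_i$ is centered sub-Gaussian with $\|W_i-\bb E W_i\|_{\psi_2}\lesssim \rad(\cl K_i)$. Writing $\max_i W_i \le \max_i(W_i-\bb E W_i) + \max_i\bb E W_i \le \max_i(W_i-\bb E W_i) + W$ (using $\bb E W_i = w(\cl K_i)\le W$), taking expectations, and applying the standard maximal inequality $\bb E\max_{i\le T}Z_i\lesssim \sqrt{\log T}\,\max_i\|Z_i\|_{\psi_2}$ for centered sub-Gaussian variables, I obtain $w(\cl K)\lesssim W + \big(\max_i\rad(\cl K_i)\big)\sqrt{\log T}$; squaring, and using $\rad(\cl K_i)\le 1$ for pieces inside $\bb B^n$, gives $w(\cl K)^2\lesssim W^2 + \log T$ (dropping that restriction replaces the last term by $W^2\log T$). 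The displayed bound in Appendix~\ref{sec:gmw-rank-jsparse} then follows by taking $\cl K_i = \cl R_{r,T}\cap\bb B_{\ell_F}^{n_1\times n_2}$ with $W^2\lesssim r(s+n_2)$ and $T=\binom{n_1}{s}$, whence $\log T\lesssim s\log(n_1/s)$.

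The main obstacle is controlling the fluctuation $\bb E\max_i(W_i-\bb E W_i)$: this needs both Gaussian Lipschitz concentration (so each $W_i$ is sub-Gaussian at the scale of the radius of the corresponding piece) and the sub-Gaussian maximal inequality, together with the elementary estimate $\rad(\cl K_i)\lesssim w(\cl K_i)$ that keeps the relevant scale controlled by $W$ (respectively by $1$ for pieces in $\bb B^n$). The entropy half is routine by comparison.
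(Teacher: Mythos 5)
Your proof is correct and follows essentially the same route as the paper's: the entropy bound by taking the union of the individual nets, and the mean-width bound by combining Gaussian Lipschitz concentration of each $\sup_{\bs u\in\cl K_i}|\scp{\bs g}{\bs u}|$ around its mean with a union/maximal bound over the $T$ pieces costing $\sqrt{\log T}$ (the paper integrates the union-bounded tail where you invoke the sub-Gaussian maximal inequality, but these are the same estimate). Your explicit remark that the Lipschitz constant is $\rad(\cl K_i)$, so that the clean bound $w(\cl K)^2\lesssim W^2+\log T$ requires the pieces to sit in $\bb B^n$ (as they do in all applications here), is a point the paper's proof leaves implicit.
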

\begin{proof}
We first notice that $\cl N(\cl K, \eta) \lesssim \sum_i\,\cl N(\cl K_i, \eta)$,
therefore $\cl H(\cl K, \eta) \lesssim \log T + H(\eta)$. Second, by definition, $w(\cl K) = \bb E \sup_{\bs x \in \cl K} |\scp{\bs
  g}{\bs x}| = \bb E \max_i \sup_{\bs x \in \cl K_i} |\scp{\bs
  g}{\bs x}|$, with $\bs g \sim \cl N^n(0,1)$. From \cite[Eq. (2) with $k=1$]{bandeira201518}, which
relies on the Lipschitz continuity of the map $\bs u \to L(\bs u) =
\sup_{\bs x \in \cl K_i} |\scp{\bs u}{\bs x}|$ with $\bb E L(\bs g) =
w(\cl K_i)$,
$$
\ts \bb P[\,\sup_{\bs x \in \cl K_i} |\scp{\bs g}{\bs x}| \geq
c + W + t\,]\ \leq\ \bb P[\,\sup_{\bs x \in \cl K_i} |\scp{\bs g}{\bs x}| \geq
c + w(\cl K_i) + t\,]\ \leq\ e^{-\frac{t^2}{2}}.
$$ 
with $c := \sqrt{{2}/{\pi}}$. Therefore, by union bound $\bb P[\,\sup_{\bs x \in \cl K} |\scp{\bs g}{\bs x}| \geq
c + W + t\,] \leq e^{-\frac{t^2}{2} + \log T}$, which implies
$$
\ts \bb P[\,\sup_{\bs x \in \cl K} |\scp{\bs g}{\bs x}| \geq
c + W + t + \sqrt{2\log T}\,]\ \leq\ e^{-\frac{t^2}{2}}.
$$
Consequently, for $a = c + W + \sqrt{2\log T}$,
\begin{align*}
w(\cl K)&\ts = \int_0^{+\infty} \bb P[\sup_{\bs x \in \cl K} |\scp{\bs g}{\bs x}| \geq
t]\,\ud t\\
&\ts = \int_0^{a} \bb P[\sup_{\bs x \in \cl K} |\scp{\bs g}{\bs x}| \geq
t]\,\ud t\  +\ \int_0^{+\infty} \bb P[\sup_{\bs x \in \cl K} |\scp{\bs g}{\bs x}| \geq
a + t]\,\ud t\\
&\ts \leq a + \int_0^{+\infty} e^{-\frac{t^2}{2}} = a +
  \gone\ \lesssim\ W + \sqrt{\log T}.
\end{align*}
\end{proof}

\footnotesize

\end{document}